\documentclass[conference,letterpaper]{IEEEtran}
\usepackage[utf8]{inputenc} 
\usepackage[T1]{fontenc}
\usepackage{url}
\usepackage{ifthen}
\usepackage{cite}
\usepackage{braket}
\usepackage{float}
\usepackage[cmex10]{amsmath} 
\usepackage{color,graphicx,amsmath,amssymb,amsthm,epsfig,mathrsfs,cite}
\usepackage[dvipsnames]{xcolor}
\usepackage[font=small]{subcaption}
\usepackage[font=small]{caption}
\usepackage{eso-pic}
\usepackage{tikz}
 \usetikzlibrary{decorations.pathmorphing} 
\usepackage{stackengine}
\usepackage{microtype}
\usepackage[linesnumbered,ruled,vlined]{algorithm2e}

\newcommand{\graphprod}{\mathop{\scalebox{3.5}{$\times$}}}
\usetikzlibrary{tikzmark,fit}
\usepackage{nicematrix}
\usepackage{multirow}
\usepackage{adjustbox}
\usepackage{pgfplots} 
\usepackage{balance}
\usepackage{amssymb}
\usepackage{amsthm}
\usepackage{mathtools}
\usepackage{qcircuit}
\usepackage{blkarray}
\SetKwInput{KwInput}{Input}
\SetKwInput{KwVars}{Variables}
\SetKwInput{KwInits}{Initialization}
\SetKwInput{KwOutput}{Output}
\theoremstyle{definition}
\newtheorem{definition}{Definition}

\newtheorem{lemma}{Lemma}

\newtheorem{example}{Example}
\newtheorem{remark}{Remark}

 \pgfplotsset{compat=1.18}
\begin{document}
\title{Fast Fault-Tolerant Decoders for Hypergraph Product and Lifted-Product Codes} 
\author{%
\IEEEauthorblockN{Asit Kumar Pradhan, Nithin Raveendran, David Declercq and Bane Vasi\'c}
\IEEEauthorblockA{\textit{Department of Electrical and Computer Engineering, The University of Arizona, Tucson, AZ, 85721 USA}}
\IEEEauthorblockA{Email: asitpradhan@arizona.edu, \{nithin, daviddeclercq\}@arizona.edu, vasic@ece.arizona.edu}
}
\maketitle
\begin{abstract}
We design low-complexity, fault-tolerant decoders for quantum low-density parity-check (QLDPC) codes with the goal of reducing decoding latency. We target two major bottlenecks of decoding under the \emph{circuit-level} noise model: (i) post-processing via order-statistics decoding (OSD), and (ii) the large number of auxiliary variable nodes commonly introduced to represent CNOT-induced correlations during syndrome extraction. 
Our key observation is that propagating CNOT faults (\emph{hook errors}) create \emph{stabilizer-induced} trapping sets (TSs) that are intrinsic to hypergraph-product (HGP) and lifted-product (LP) constructions. Therefore, instead of modeling each such fault with an explicit correlation node and relying on OSD to clean up the resulting failures, we design message-passing decoders that resolve the corresponding \emph{stabilizer-induced} TSs directly. We obtain these decoders by deriving  QLDPC decoders from decoders for the parent classical LDPC codes and using them collectively to correct broad families of \emph{stabilizer-induced} TSs.
For CNOT faults that manifest primarily as syndrome errors, we show that their effect is equivalent to a data error together with syndrome-bit measurement errors. Consequently, given repeated measurements and a decoding graph that already includes nodes representing syndrome-bit errors, no distinct variable node is needed for each CNOT fault. Using a \emph{phenomenological} Tanner graph with nodes representing only data errors and syndrome-bit errors, simulations on the LP codes show a reduction in, or comparable, logical error rates relative to BP+OSD, at substantially lower decoding complexity.
\end{abstract}

\section{Introduction}

Quantum information is intrinsically fragile, so quantum error-correcting codes (QECCs) must reliably extract error syndromes using low-depth circuits. Low-depth syndrome extraction is essential to prevent additional faults from accumulating during each error-correction cycle. Achieving such shallow circuits requires sparse parity-check matrices, so that each check acts on only a small number of qubits.
For robust error correction, the code distance should grow with the total number of physical qubits while preserving sparsity. Topological codes, such as surface codes~\cite{bravyi1998quantum}, meet both of these criteria---sparse checks and increasing distance---but suffer from a vanishing rate: the ratio of logical to physical qubits approaches zero as the system size increases. As a result, they incur very large qubit overheads for large-scale quantum algorithms.

These limitations have motivated the development of quantum low-density parity-check (QLDPC) codes, which aim to combine sparse parity checks with a non-vanishing rate and a distance that scales with system size. Constructions such as hypergraph-product (HGP)~\cite{Tillich_2014}, balanced-product~\cite{Breuckmann_2021}, lifted-product (LP)~\cite{LP_codes}, and quantum Tanner codes~\cite{leverrier2022quantum} represent significant progress toward this goal.

Realizing constant-overhead fault-tolerant quantum computation with QLDPC codes faces several remaining challenges that can be grouped into three main categories: (1) code--architecture co-design, (2) efficient and scalable decoding, and (3) parallel logical gates. In code--architecture co-design, QLDPC codes and hardware architectures must be developed jointly so that (i) the inherently nonlocal connectivity of the codes can be efficiently embedded in the physical device and (ii) syndrome extraction can be accelerated through highly parallel gate execution, particularly when gate times are relatively low. In efficient and scalable decoding, decoders should achieve error-correction performance that improves with increasing code distance, while also running faster than the syndrome-extraction process so that decoding does not become the bottleneck of the error-correction cycle. In parallel logical gates, one must implement highly parallel, low-overhead, fault-tolerant logical gate operations on QLDPC codes.
The co-design of codes and architectures has been explored in several works. For instance, the challenge of limited local connectivity in superconducting qubits is addressed in~\cite{bravyi2024high}, while the issue of slow gate times in neutral-atom platforms is tackled in~\cite{xu2024constant}. Low-overhead, fault-tolerant implementations of logical gates are studied in~\cite{cohen2022low,zhou2025low}, and their parallel execution is investigated in~\cite{xu2025fast}.

In this work, we focus on efficient and scalable decoding of QLDPC codes. The importance of fast decoding is emphasized in~\cite{barbara_syndrome_backlog}, which shows that a decoder significantly slower than syndrome extraction can negate the advantages of quantum algorithms. This point is further illustrated in~\cite{webster2026pinnaclearchitecturereducingcost}, where increasing the error-correction cycle from 1 to 100 microseconds (with decoding time as the dominant contribution) increases the number of physical qubits required to break RSA in one week from about $1.5\times10^5$ to $5\times10^7$, assuming a physical error rate of $10^{-3}$.

\subsection{Existing Approaches to Designing Low-Complexity Decoders}
\label{sec:related_work}
The challenges in designing fast decoders with low logical error rates for QLDPC codes are best understood by revisiting the decoding of classical LDPC codes, since QLDPC codes are defined by a pair of dual-containing classical LDPC codes. In the non-asymptotic regime, low-complexity iterative message-passing decoders for classical LDPC codes can fail when the error is supported on certain subgraphs of the Tanner graph called trapping sets (TSs)~\cite{Shashi_trapping_sets,VNC_2014_Book}. The most harmful TSs are often induced by a subset of variable nodes on the support of a low-weight codewords. Such induced subgraphs contain fewer odd-degree check nodes than variable nodes, a property that often leads to decoding failure when the error is confined to that subgraph.

In classical LDPC codes, error floors due TS-related decoding failures are typically mitigated in two steps. First, code constructions such as progressive-edge growth~\cite{PEG} reduce the number of short cycles, which in turn reduces the number of harmful TSs. Second, failures caused by the remaining TSs are mitigated by modifying the message-passing rules~\cite{offset_min_sum,FAID_diversity}.
For QLDPC codes, this strategy is fundamentally limited: because the two defining classical LDPC codes are dual-containing, the parity checks of one code are codewords of the other. As a result, subgraphs induced by their support inevitably act as TSs; these are referred to as \emph{stabilizer-induced} TSs~\cite{Nithin2021_QTS}.

In~\cite{BP_OSD}, TS-related failures are reduced using a post-processing \emph{order-statistics decoder} (OSD) applied after message passing. While OSD can reduce the decoding-failure rate, it substantially increases complexity. Specifically, it requires solving a linear system whose size scales with the blocklength, yielding cubic complexity. In addition, increasing the OSD order enlarges the greedy search over candidate error patterns, leading to exponential growth in complexity with the order. Consequently, OSD-based post-processing is typically impractical for QLDPC decoding.

To overcome this limitation, several low-complexity variants of OSD have been proposed in~\cite{panteleev2021degenerate,OSD_CS,LSD}, which reduce implementation cost without significantly degrading error-correction performance. Another approach, proposed in~\cite{inactivation_decoder}, eliminates the exponential search step by detecting from the decoder dynamics when the error lies in a TS. In~\cite{chytas2025enhanced,Refined_BP}, the authors adapt techniques such as normalization and bias optimization, which modify message-passing rules to mitigate TS-related failures without relying on post-processing. In particular, the decoding algorithm in~\cite{chytas2025enhanced} employs multiple decoders that differ in their message-update rules at the variable nodes to reduce decoding failures.

Neural-network-based approaches have also been investigated. In~\cite{pradhanISTC,pruned_neural_BP}, neural networks are trained to learn message-passing rules that resolve TS-related failures. Building on a more detailed understanding of the decoding dynamics within \emph{stabilizer-induced} TSs, \cite{pradhan2025_TS_left_right_schedule_decoder} proposes a scheduled message-passing decoder designed to avoid TS-induced failures under a certain technical condition. This scheduled decoder can be combined with other decoders that do not depend on post-processing to further boost their performance. For example, in the approach of~\cite{chytas2025enhanced}, the number of parallel decoders needed to achieve a target logical error rate can be reduced by optimizing the scheduled decoder, rather than relying solely on a fully parallel decoder.

The decoding approaches discussed so far assume an error-free syndrome extraction circuit and focus exclusively on optimizing decoders to avoid TS-related failures.
Although this helps to clarify how TS-related decoding failures arise and how to prevent them, it is not sufficient for realistic hardware, where the syndrome extraction circuit itself can be faulty and introduce additional errors.
The errors introduced during the syndrome extraction process fall into two main categories: (1) errors that propagate through the circuit and create correlated data errors, and (2) errors that do not propagate but instead corrupt the measured syndrome by effectively flipping some of the bits of the syndrome vector.
To account for these errors, additional variable nodes are added to the decoding graph.
For propagating errors, these nodes represent the resulting correlations and provide additional information to the decoder.
For non-propagating errors that appear as flipped syndrome bits, they allow the decoder to identify error patterns that are consistent with the faulty syndrome, helping it converge.
However, these additional nodes can also exacerbate non-convergence of BP decoding: they introduce many short cycles in the decoding graph, and combinations of these cycles can create new TSs beyond those already present in the code’s original Tanner graph.
As in the case of perfect syndrome extraction, many decoding algorithms handle these failures using a post-processing OSD step, at the cost of increased complexity.
In the imperfect syndrome extraction setting, this complexity increase is even more significant: the performance of the message-passing decoder is further degraded by the newly introduced short cycles, so the OSD must correct more residual errors. Moreover, the OSD complexity scales cubically with the number of variable nodes, which has now grown. This complexity growth can be mitigated by using more efficient OSD implementations~\cite{panteleev2021degenerate,OSD_CS,LSD}, as in the perfect syndrome extraction setting.
Decoding approaches that completely avoid post-processing can also be adapted to the imperfect syndrome extraction case. In~\cite{chen2025improved}, the authors design several neural-network-based decoders, each parameterized by a different bias (i.e., a different initial channel LLR), and run them in parallel to collectively reduce decoding failures. In~\cite{muller2025improved}, it is shown that running these decoders sequentially---feeding the output of one decoder as the input to the next---can further reduce decoding failures; this sequential strategy is known as \emph{relay} BP. Although these decoders avoid post-processing, running many of them sequentially still increases the decoding time, as reported in~\cite{muller2025improved}, even if the overhead can be substantially smaller than that of post-processing.
Other list-based strategies also avoid post-processing while improving robustness. In~\cite{ye2025beam}, the decoder runs a small number of BP iterations, identifies the least reliable variable node (e.g., the node with the smallest-magnitude LLR), fixes its value, and then runs two BP instances corresponding to the two possible fixed values. This branching is repeated within each BP instance until the number of active instances reaches a preset list size; at that point, instances with the least reliable scores are pruned. The process continues until one of the remaining BP instances converges.
The beam-search decoder avoids the sequential strategy considered in \emph{relay} BP, but (like \emph{relay} BP) relies on additional nodes representing CNOT faults for convergence.
In~\cite{leverrier2026approximating}, the authors consider list decoding on the code trellis, aggregating the probabilities of error candidates that correspond to the same syndrome. Both approaches report improved performance compared to BP+OSD, while avoiding explicit OSD-style post-processing.
Unlike the message-passing decoders discussed above, this trellis-based decoder approximates coset decoding. While it maintains a decoding complexity that is linear in blocklength (as in BP), it can have a large constant factor due to the larger state space of the trellis.

\subsection{Limitations of Existing Decoders and Motivation for Our Approach}
\label{sec:challenges_and_contribution}
The discussion in Section~\ref{sec:related_work} highlights that many existing decoders focus on reducing or eliminating the additional complexity introduced by post-processing, rather than reducing the complexity of the underlying message-passing decoder.
When the syndrome-extraction circuit is faulty, post-processing becomes more complex because the decoding graph must include extra variable nodes, as explained earlier. These additional Tanner-graph nodes represent faults during syndrome extraction and increase the cost of both OSD and message passing. Although they do not change the asymptotic scaling of message passing, they can significantly increase the constant factor, and thus the practical decoding cost.

To quantify this overhead, consider the syndrome-extraction circuit of a QLDPC code from~\cite{bravyi2024high}. A failed CNOT gate can induce one of $15$ two-qubit Pauli error operators; modeling these possibilities typically requires adding $15$ variable nodes per potentially faulty CNOT.
The total number of CNOT gates scales with the number of check nodes (proportional to the blocklength) times the check degree $d_\mathrm{c}$. Hence, the number of additional variable nodes scales as $15 d_\mathrm{c} \times$ blocklength. For the bivariate-bicycle codes in~\cite{bravyi2024high}, which have $d_\mathrm{c}=6$, this corresponds to a factor-$15\times 6 = 90$ increase in the number of variable nodes. This substantial overhead directly impacts the decoding cost and, consequently, the overall runtime of quantum algorithms and the physical-qubit estimates reported in~\cite{webster2026pinnaclearchitecturereducingcost}.

In this work, we ask whether it is possible to (i) avoid adding extra nodes that account for circuit faults and (ii) avoid reliance on post-processing for resolving decoding failures.
The main contributions of this paper are summarized as follows.
\begin{enumerate}
\item In Section~\ref{sec:hook_error}, we connect propagating circuit faults (\emph{hook errors}) to \emph{stabilizer-induced} TSs and explain why introducing nodes that represent hook errors can substantially reduce the OSD order needed to correct such faults.
\item In Section~\ref{sec:proposed_decoders}, we characterize broad families of \emph{stabilizer-induced} TSs in HGP and LP codes in terms of TSs in their parent classical codes, extending the characterization in~\cite{pradhan2025_TS_left_right_schedule_decoder} beyond products of tree-like subgraphs. We also relate these structures to \emph{circuit-level} faults by quantifying how many CNOT failures suffice to induce them, which helps identify the dominant TSs that determine the slope of the logical error-rate curve in the error-floor regime. Our TS enumeration can also be used to sample error patterns to train neural-network decoders such as those in~\cite{gu2026scalable}.
\item In Section~\ref{sec:syndrome_error_inducing_CNOT_faults}, we show that CNOT faults that manifest as syndrome errors can be reinterpreted as data errors together with syndrome measurement errors. Consequently, given repeated measurements and a decoding graph that already includes nodes representing syndrome measurement errors, one does not need a distinct variable node for each such CNOT failure to retain the same error-correction capability.
\item In Section~\ref{sec:proposed_decoders}, we derive design principles to avoid \emph{stabilizer-induced} TS-related decoding failures without adding explicit nodes to represent CNOT failures, given decoders that resolve TSs in the parent classical codes. In Section~\ref{sec:simulation_results}, we validate this approach by deriving  QLDPC decoders from decoders of the parent codes that resolve the corresponding classical TSs.

\end{enumerate}

\section{Preliminaries}
\label{sec:prelim}
\subsection{Notations}
We use bold lowercase and uppercase letters to denote vectors and matrices, respectively. The $i$-th entry of a vector $\mathbf{a}$ is written as $\mathbf{a}(i)$, and the $(i,j)$-th entry of a matrix $\mathbf{A}$ is written as $A(i,j)$. The $i$-th row and $j$-th column of $\mathbf{A}$ are denoted by $\mathbf{A}(i,:)$ and $\mathbf{A}(:,j)$, respectively.
The Hamming weight of a vector $\mathbf{a}$, defined as the number of nonzero entries, is denoted by $\lvert\mathbf{a}\rvert$.
Calligraphic letters are used to denote sets; the cardinality of a set $\mathcal{S}$ is denoted by $\lvert \mathcal{S} \rvert$.
Script letters are used to denote graphs and subgraphs. Set-theoretic operations, such as intersection and union between two graphs, refer to applying these operations to the underlying node and edge sets that define the graphs or subgraphs.
\subsection{Linear Codes}
\label{sec:linear_codes}
A $(n,k)$-binary linear code is a $k$-dimensional subspace of the $n$-dimensional vector space $\mathbb{F}_2^n$, specified as the null space of a binary matrix $\mathbf{H} \in \mathbb{F}_2^{(n-k)\times n}$, called the parity-check matrix.
The associated codespace $\mathcal{U}$ is
\begin{equation*}
    \mathcal{U} = \{\mathbf{u} \in \mathbb{F}_2^n : \mathbf{H}\mathbf{u} = \mathbf{0} \pmod 2\}.
\end{equation*}
The minimum distance $d_{\mathcal{U}}$ of $\mathcal{U}$ is defined as the minimum Hamming weight of any nonzero codeword, i.e.,
\begin{equation*}
    d_{\mathcal{U}} = \min_{\mathbf{u} \in \mathcal{U}, \\ \mathbf{u}\neq \boldsymbol{0}} \lvert\mathbf{u}\rvert.
\end{equation*}
A parity-check matrix can be represented by a bipartite graph known as a Tanner graph\cite{VNC_2014_Book}. Let the Tanner graph corresponding to the parity-check matrix $\mathbf{H}$ be denoted by $\mathcal{G}(\mathcal{V} \cup \mathcal{C}, \mathcal{E})$, where $\mathcal{V}$ is the set of variable nodes (corresponding to the columns of $\mathbf{H}$), $\mathcal{C}$ is the set of check nodes (corresponding to the rows of $\mathbf{H}$), and $\mathcal{E}$ is the set of edges of $\mathcal{G}$.
The set of check nodes connected to a variable node $v_{\mathrm{j}} \in \mathcal{V}$ is denoted by $\mathcal{N}_{v_{\mathrm{j}}}$, and the set of variable nodes connected to a check node $c_{\mathrm{i}} \in \mathcal{C}$ is denoted by $\mathcal{N}_{c_{\mathrm{i}}}$. An edge exists between a check node $c_{\mathrm{i}} \in \mathcal{C}$ and a variable node $v_{\mathrm{j}} \in \mathcal{V}$ if the $(i,j)$-th entry of $\mathbf{H}$ is 1.
An example of a Tanner graph is shown in Figure~\ref{fig:classical_Tanner_graph}.
\begin{figure}
    \centering
   \input{Figures/classical_Tanner_graph}
    \caption{An example of a Tanner graph. Circular nodes represent variable nodes, and rectangular nodes represent check nodes. Each check node imposes a linear constraint on its neighboring variable nodes. For instance, check node $c_1$ is connected to variable nodes $v_1$ and $v_2$, which implies that in any valid codeword $v_1 + v_2 \equiv 0 \pmod{2}$.}
    \label{fig:classical_Tanner_graph}
\end{figure}
The dual of the linear code $\mathcal{U}$, denoted by $\mathcal{U}^{\perp}$, is defined as the code whose codewords are orthogonal to that of $\mathcal{U}$ with associated codebook
\begin{equation*}
    \mathcal{U}^{\perp}=\{\mathbf{u}': \mathbf{u}'^{\mathsf{T}}\mathbf{u}=\boldsymbol{0} \pmod 2\}.
\end{equation*}

\subsection{Pauli Group}
\label{sec:pauli_group}

The set of Pauli matrices $\mathcal{P} = \{I, X, Y, Z\}$, where
\[
I = \begin{bmatrix} 1 & 0 \\ 0 & 1 \end{bmatrix},\;
X = \begin{bmatrix} 0 & 1 \\ 1 & 0 \end{bmatrix},\;
Y = \begin{bmatrix} 0 & -i \\ i & 0 \end{bmatrix},\;
Z = \begin{bmatrix} 1 & 0 \\ 0 & -1 \end{bmatrix},
\]
forms a group\cite{Nielsen}. The Pauli group $\mathcal{P}$ acts on the 2-dimensional complex Hilbert space $\mathbb{C}^2$. Larger groups can be formed by taking the $n$-fold tensor products of Pauli matrices, for $n \in \mathbb{N}$. We denote this by $\mathcal{P}^n = \{I, X, Y, Z\}^{\otimes n}.$ The group $\mathcal{P}^n$ acts on a $2^n$-dimensional complex vector space.

\subsection{Stabilizer Codes}
\label{sec:stabilizer_codes}
Let $\mathcal{S}$ be an Abelian subgroup of $\mathcal{P}^n$, generated by $\{S_1, S_2, \ldots, S_r\}$, with each $S_i \in \mathcal{P}^n \setminus I^{\otimes n}$. The codespace $\mathcal{U}$ is defined as the subspace of $\mathbb{C}^{2^n}$ on which all elements of $\mathcal{S}$ act trivially, $\mathcal{U} = \bigl\{\ket{\psi} \in \mathbb{C}^{2^n} : S_i \ket{\psi} = \ket{\psi} \text{ for } 1 \le i \le r\bigr\}$\cite{Gottesman97}.
Thus, $\mathcal{U}$ is the simultaneous $+1$ eigenspace of the operators in $\mathcal{S}$, and has dimension $2^{n-r}$. Elements of $\mathcal{S}$ are called \emph{stabilizers}. For operators $A$ and $B$, we say that they \emph{commute} if $AB=BA$ and \emph{anticommute} if $AB=-BA$.
An operator in $\mathcal{P}^n$ that maps any codeword to another codeword is called a logical operator. The set of logical operators, denoted by $\mathcal{L}$, is
\[
\mathcal{L} = \bigl\{L \in \mathcal{P}^n : L\ket{\psi} \in \mathcal{U} \text{ for all } \ket{\psi} \in \mathcal{U}\bigr\}.
\]
Equivalently, $\mathcal{L}$ can be characterized as the subset of $\mathcal{P}^n$ consisting of operators that commute with every stabilizer.

The stabilizer group $\mathcal{S}$ is a normal subgroup of $\mathcal{L}$, and the quotient group $\mathcal{L}/\mathcal{S}$ describes the operators that act non-trivially on (at least) one codeword. Because any two elements of $\mathcal{P}^n$ either commute or anticommute, whether an operator maps a codeword to another codeword can be determined from its commutation relations with the stabilizers.
Any operator in $\mathcal{L}/\mathcal{S}$ acts non-trivially on the codespace while remaining undetectable by the stabilizers. The minimum distance of the code defined by $\mathcal{S}$ is the minimum weight of a non-trivial element of $\mathcal{L}/\mathcal{S}$.
There is an isomorphism between the single-qubit Pauli group and $\mathbb{F}_2^2$ under component wise addition, given by
\[
I \mapsto \begin{bmatrix}0 & 0\end{bmatrix},\;
X \mapsto \begin{bmatrix}1 & 0\end{bmatrix},\;
Y \mapsto \begin{bmatrix}1 & 1\end{bmatrix},\;
Z \mapsto \begin{bmatrix}0 & 1\end{bmatrix}.
\]
This induces a binary representation of $n$-qubit Pauli operators and, in particular, a matrix representation of the stabilizer generators. The binary representation of the stabilizer generators is called the parity-check matrix and is written as
\[
\mathbf{H} = \bigl[\,\mathbf{H}_{\mathrm{x}} \mid \mathbf{H}_{\mathrm{z}}\bigr],
\]
where $\mathbf{H}_{\mathrm{x}}$ and $\mathbf{H}_{\mathrm{z}}$ are binary matrices corresponding to the $X$- and $Z$-type stabilizer generators, respectively, in the binary description of the stabilizer generators.
 Since the stabilizer generators commute, the parity-check matrix satisfies
\begin{equation}
\label{eq:commutavity_condition}
\mathbf{H}_{\mathrm{x}} \mathbf{H}_{\mathrm{z}}^\mathsf{T} + \mathbf{H}_{\mathrm{x}}^\mathsf{T} \mathbf{H}_{\mathrm{z}} = \mathbf{0} \pmod{2}.
\end{equation}

\subsection{CSS Codes}
\label{sec:CSS_codes}

Consider stabilizer codes whose stabilizer generators can be divided into two groups such that the generators in one group are $n$-fold tensor products of $I$ and $X$, while the generators in the other group are $n$-fold tensor products of $I$ and $Z$. The binary representations of the stabilizer generators of these codes have the form
\begin{equation*}
    \mathbf{H}=\begin{bmatrix}
        \mathbf{H}_{\mathrm{x}} & \mathbf{0}\\
        \mathbf{0} & \mathbf{H}_{\mathrm{z}}
    \end{bmatrix}.
\end{equation*}
The commutativity condition in~\eqref{eq:commutavity_condition} reduces to
\begin{equation}
\label{eq:CSS_condition}
  \mathbf{H}_{\mathrm{x}}\mathbf{H}_{\mathrm{z}}^{\mathsf{T}}=\mathbf{0},
\end{equation}
which makes it possible to derive stabilizer codes from two classical codes that satisfy this condition. These are called \emph{CSS} codes, after Calderbank, Shor, and Steane, who first highlighted the connection between stabilizer codes and classical codes\cite{calderbank1996quantum_exists}.
Let $\mathcal{U}_{\mathrm{z}}$ and $\mathcal{U}_{\mathrm{x}}$ denote the classical codes with parity-check matrices $\mathbf{H}_{\mathrm{x}}$ and $\mathbf{H}_{\mathrm{z}}$, respectively. Since the Pauli logical operators must commute with the stabilizers, the binary representations of the $X$-type and $Z$-type logical operators are, respectively,
\begin{equation*}
    \mathcal{L}_{\mathrm{x}} = \mathcal{U}_{\mathrm{x}}/\mathcal{U}_{\mathrm{z}}^{\perp}, \qquad
    \mathcal{L}_{\mathrm{z}}= \mathcal{U}_{\mathrm{z}}/\mathcal{U}_{\mathrm{x}}^{\perp},
\end{equation*}
where $\mathcal{U}_{\mathrm{x}}^{\perp}$ and $\mathcal{U}_{\mathrm{z}}^{\perp}$ are the dual codes of $\mathcal{U}_{\mathrm{x}}$ and $\mathcal{U}_{\mathrm{z}}$, respectively.
The minimum distance of a CSS code is
\begin{equation*}
    d=\min(d_{\mathrm{x}},d_{\mathrm{z}}),
\end{equation*}
where
\begin{equation*}
    d_{\mathrm{x}}=\min_{\mathbf{u} \in \mathcal{L}_{\mathrm{x}}, \\ \mathbf{u} \neq \boldsymbol{0}} \lvert\mathbf{u}\rvert, \qquad
    d_{\mathrm{z}}=\min_{\mathbf{u} \in \mathcal{L}_{\mathrm{z}},\\ \mathbf{u} \neq \boldsymbol{0}} \lvert\mathbf{u}\rvert.
\end{equation*}

\begin{example}
    \label{ex:CSS_code}
  Consider the five-qubit stabilizer code characterized by the stabilizer group $S$, generated by $\mathcal{S}=<X_1X_2X_3X_4X_5,Z_1Z_2,Z_2Z_3Z_4Z_5>$.
  In the shorthand notation, $Z_1Z_2$ represents the five-fold tensor product $Z \otimes Z \otimes I \otimes I \otimes I$, where $Z_\mathrm{i}$ indicates the $i$-th element and identity operators are omitted.
   Since the stabilizer group generators of $\mathcal{S}$ have either $X$ or $Z$ operators, the code associated with it is a CSS code.
   In the binary representation, the $X$ and $Z$ parity-check matrices of the CSS code are given by
   \begin{equation*}
       \label{eq:PCM_in_CSS_code_ex}
       \mathbf{H}_{\mathrm{x}}=\begin{bmatrix}
           1 & 1 & 1 & 1 & 1
       \end{bmatrix} \text{ and }
       \mathbf{H}_{\mathrm{z}}=\begin{bmatrix}
           1 & 1 & 0 & 0 & 0\\
           0 & 1 & 1 & 1 & 1
       \end{bmatrix}.
   \end{equation*}
  The binary forms of the logical Pauli operators are represented by the row space of
   \begin{equation*}
       \label{eq:logical_operator_CSS_ex}
       \mathbf{L}_{\mathrm{x}} = \mathbf{L}_{\mathrm{z}}=\begin{bmatrix}
           0 & 0 & 1 & 1 & 0\\
           0 & 0 & 0 & 1 & 1
       \end{bmatrix}.
   \end{equation*}
  Since there are no weight-one vectors in the row spaces of $L_{\mathrm{x}}$ and $\mathbf{L}_{\mathrm{z}}$, and there exists a vector of weight two, the code defined by the stabilizer group $\mathcal{S}$ has distance two.
\end{example}

\subsection{Quantum LDPC Codes}
\label{sec:QLDPC_codes}
QLDPC codes are a class of CSS codes whose stabilizer weights do not scale with blocklength. There exists several families of QLDPC codes that offer different trade-offs between distance, rate, and connectivity constraints. Although the results in this paper are qualitatively applicable to most QLDPC code families, they are formally derived and validated only for HGP~\cite{Tillich_2014} and LP~\cite{LP_codes} codes. HGP and LP codes are constructed from two classical codes, so we first provide a brief overview of classical LDPC codes before detailing their construction.

\subsubsection{Classical quasi-cyclic LDPC Codes}

\label{sec:classical_LDPC_codes}
Since LP codes are constructed from two classical quasi-cyclic LDPC codes, we restrict the discussion to quasi-cyclic LDPC codes.
The parity-check matrices of quasi-cyclic codes admit a compact matrix representation, referred to as a base matrix. The elements of a base matrix take values from the quotient polynomial ring $\mathcal{R}_{\gamma} \coloneqq \mathbb{F}_2[x]/(x^\gamma-1)$. It is important to note that the quotient polynomial ring $\mathcal{R}_{\gamma}$ is isomorphic to the ring of binary circulant matrices of dimension $\gamma \times \gamma$. The parity-check matrix $\mathbf{H} \in \mathbb{F}_2^{\gamma m\times \gamma n}$ associated with a base matrix $\mathbf{B} \in \mathcal{R}_{\gamma}^{m \times n}$ is obtained by substituting each entry of $\mathbf{B}$ with its corresponding circulant-matrix representation.
The base matrix has a graphical representation similar to that of a parity-check matrix, called the base graph. Let $\underline{\mathcal{G}}(\underline{\mathcal{V}}\cup \underline{\mathcal{C}}, \underline{\mathcal{E}})$ be the graphical representation of the base matrix $\mathbf{B}$, where $\underline{\mathcal{V}}$ and $\underline{\mathcal{C}}$ are the sets of left and right nodes of $\underline{\mathcal{G}}$, corresponding to the columns and rows of the base matrix, respectively. To distinguish them from the nodes and edges of the Tanner graph, the nodes and edges of the base graphs are denoted with an underline. There is an edge between $\underline{v}_{\mathrm{j}} \in \underline{\mathcal{V}}$ and $\underline{c}_{\mathrm{i}} \in \underline{\mathcal{C}}$ if $\mathbf{B}(\mathrm{i},\mathrm{j})$ is a nonzero element of $\mathcal{R}_\gamma$.
To obtain the Tanner graph for a code from a base matrix $\mathbf{B}$, we first expand $\mathbf{B}$ into its associated parity-check matrix and then follow the procedure described in Section~\ref{sec:linear_codes} to construct the Tanner graph.
Alternatively, the Tanner graph can be obtained directly from the base graph using the copy-and-permute operation\cite{Divsalar_protograph_LDPC}. The copy-and-permute operation involves two steps. In the copy step, the base graph is copied $\gamma$ times. Denote the $t$-th copy of the base graph by $\underline{\mathcal{G}}(t)$ and its left nodes, right nodes, and edges by $\underline{\mathcal{V}}(t)$, $\underline{\mathcal{C}}(t)$, and $\underline{\mathcal{E}}(t)$, respectively.
During the permutation step, the edges of the base-graph copies are rearranged. Specifically, an edge $\underline{e}$ that connects the variable node $\underline{v}_{\mathrm{j}}(t) \in \underline{\mathcal{V}}(t)$ to the check node $\underline{c}_{\mathrm{i}}(t) \in \underline{\mathcal{C}}(t)$ becomes an edge (which we still denote by $\underline{e}$) connecting the variable node $\underline{v}_\mathrm{j}(t)$ to the check node $\underline{c}_\mathrm{i}(\pi_{\mathrm{i},\mathrm{j}}(t))$, where $\pi_{\mathrm{i},\mathrm{j}}$ is a cyclic permutation of $[\gamma]$ determined by the circulant-matrix representation of $\mathbf{B}(\mathrm{i},\mathrm{j})$. For a detailed description of the copy-and-permute operation, see \cite{Divsalar_protograph_LDPC}.

\subsubsection{Lifted-product Codes} 
\label{sec:LP_codes}
Given two base matrices $\mathbf{B}_1 \in \mathcal{R}_{\gamma}^{m_1 \times n_1}$ and $\mathbf{B}_2 \in \mathcal{R}_{\gamma}^{m_2 \times n_2}$ corresponding to two classical LDPC codes, the LP construction, introduced by Panteleev and Kalachev\cite{LP_codes}, produces two base matrices $\mathbf{B}_{\mathrm{x}} \in \mathcal{R}_{\gamma}^{m_1n_2\times (m_1m_2+n_1n_2)}$ and $\mathbf{B}_{\mathrm{z}}\in \mathcal{R}_{\gamma}^{m_2n_1\times (m_1m_2+n_1n_2)}$ given by
\begin{align}
\label{eq:LP_construction}
\mathbf{B}_{\mathrm{x}} & = \begin{bmatrix}
\mathbf{B}_1 \otimes \mathbf{I}_{n_{2}} & \mathbf{I}_{m_{1}} \otimes \mathbf{B}_2^{*}
\end{bmatrix},\\
\mathbf{B}_{\mathrm{z}} & = \begin{bmatrix}
\mathbf{I}_{n_{1}} \otimes \mathbf{B}_2 & \mathbf{B}_1^{*} \otimes \mathbf{I}_{m_{2}} \\
\end{bmatrix},
\label{eq:lifted_product_base_matrices}
\end{align}
where $\mathbf{B}^{*}(i,j)= \big(\mathbf{B}(j,i)\big)^{*}$ and, for any $\mathbf{B}(i,j) \in \mathcal{R}_{\gamma}$,
\begin{equation}
\big(\mathbf{B}(i,j)\big)^{*}=
\begin{cases}
\big(\mathbf{B}(i,j)\big)^{-1}, & \text{if } \mathbf{B}(i,j) \in \mathcal{R}_{\gamma} \setminus \{0\},\\
0, & \text{if } \mathbf{B}(i,j)=0.
\end{cases}
\label{eq:circulant_conjugate}
\end{equation}
Since each nonzero ring element used in $\mathbf{B}_1$ and $\mathbf{B}_2$ corresponds to a circulant-permutation matrix, its inverse is equal to its transpose.
Define $M_{\mathrm{x}}\coloneqq \gamma m_1n_2$, $M_{\mathrm{z}}\coloneqq \gamma m_2n_1$, and $Q \coloneqq \gamma(m_1m_2+n_1n_2)$.
The parity-check matrices $\mathbf{H}_{\mathrm{x}} \in \mathbb{F}_2^{M_{\mathrm{x}} \times Q}$ and $\mathbf{H}_{\mathrm{z}} \in \mathbb{F}_2^{M_{\mathrm{z}}\times Q}$ corresponding to the base matrices $\mathbf{B}_{\mathrm{x}}$ and $\mathbf{B}_{\mathrm{z}}$ satisfy the condition in \eqref{eq:CSS_condition}. Therefore, $\big(\mathbf{H}_{\mathrm{x}}, \mathbf{H}_{\mathrm{z}}\big)$ together define a CSS code.
When the elements of $\mathbf{B}_{i}$, for $i \in \{1,2\}$, are chosen from $\mathbb{F}_2$ instead of $\mathcal{R}_{\gamma}$, the matrices $\mathbf{B}_{i}$ can be interpreted as parity-check matrices of linear codes. In this case, the two matrices in \eqref{eq:LP_construction} can be interpreted as parity-check matrices of two classical codes that satisfy the CSS criterion in \eqref{eq:CSS_condition}. This modified construction of QLDPC codes is called the hypergraph product (HGP) construction, introduced by Tillich and Zémor\cite{Tillich_2014}.
Equivalently, LP codes can be constructed by taking the graph product of the Base graphs of the two classical LDPC codes. Since some aspects of the decoder behavior are easier to explain using the graph product viewpoint, we next describe the formulation of LP codes as a graph product.

\textbf{Graphical description}: 
The graph product of two base graphs, $\underline{\mathcal{G}}^1(\underline{\mathcal{V}}^1\cup\underline{\mathcal{C}}^1,\underline{\mathcal{E}}^1)$ and $\underline{\mathcal{G}}^2(\underline{\mathcal{V}}^2\cup\underline{\mathcal{C}}^2,\underline{\mathcal{E}}^2)$, yields a third graph $\underline{\mathcal{G}}_{\mathrm{LP}}$ whose left nodes are given by
\begin{equation*}
\label{eq:variable_nodes}
\underline{\mathcal{Q}} = \left( \underline{\mathcal{V}}^1 \bigtimes \underline{\mathcal{V}}^2 \right) \cup \left( \underline{\mathcal{C}}^1 \bigtimes \underline{\mathcal{C}}^2 \right),
\end{equation*}
and whose right nodes are given by
\begin{equation}
\label{eq:check_nodes}
\underline{\mathcal{C}} = \left( \underline{\mathcal{C}}^1 \bigtimes \underline{\mathcal{V}}^2 \right) \cup \left( \underline{\mathcal{V}}^1 \bigtimes \underline{\mathcal{C}}^2 \right).
\end{equation}
Nodes and edges of the base graphs are distinguished by their corresponding superscripts. Moreover, we use a prime on subscripts (e.g., $\mathrm{i}'$, $\mathrm{j}'$) for indices associated with the first base graph $\underline{\mathcal{G}}^1$, while indices associated with the second base graph $\underline{\mathcal{G}}^2$ are written without a prime.
Left-side nodes of the form $\underline{v}^1_{\mathrm{j}'}\,\underline{v}^2_{\mathrm{j}}$, with $\underline{v}_{\mathrm{j}'}^1 \in \underline{\mathcal{V}}^1$ and $\underline{v}_{\mathrm{j}}^2 \in \underline{\mathcal{V}}^2$, are called VV-type variable nodes, while those of the form $\underline{c}^1_{\mathrm{i}'}\,\underline{c}^2_{\mathrm{i}}$, with $\underline{c}_{\mathrm{i}'}^1 \in \underline{\mathcal{C}}^1$ and $\underline{c}_{\mathrm{i}}^2 \in \underline{\mathcal{C}}^2$, are called CC-type variable nodes.
Right-side nodes of the form $\underline{c}^1_{\mathrm{i}'}\,\underline{v}^2_{\mathrm{j}}$, with $\underline{c}_{\mathrm{i}'}^1 \in \underline{\mathcal{C}}^1$ and $\underline{v}_{\mathrm{j}}^2 \in \underline{\mathcal{V}}^2$, are called $X$-stabilizers or $X$-checks, while those of the form $\underline{v}^1_{\mathrm{j}'}\,\underline{c}^2_{\mathrm{i}}$, with $\underline{v}_{\mathrm{j}'}^1 \in \underline{\mathcal{V}}^1$ and $\underline{c}_{\mathrm{i}}^2 \in \underline{\mathcal{C}}^2$, are called $Z$-stabilizers or $Z$-checks. The collections of $X$-checks and $Z$-checks are denoted by $\underline{\mathcal{C}}_{\mathrm{x}}$ and $\underline{\mathcal{C}}_{\mathrm{z}}$, respectively. From \eqref{eq:check_nodes}, it follows that
\begin{equation*}
\label{eq:X_checks_and_Z_checks}
\underline{\mathcal{C}}_{\mathrm{x}} = \underline{\mathcal{C}}^1 \bigtimes \underline{\mathcal{V}}^2 \quad \text{and} \quad \underline{\mathcal{C}}_{\mathrm{z}} = \underline{\mathcal{V}}^1 \bigtimes \underline{\mathcal{C}}^2.
\end{equation*}
In the product graph $\underline{\mathcal{G}}_{\mathrm{LP}}$, there is an edge between an $X$-check $\underline{c}^1_{\mathrm{i}'}\,\underline{v}^2_{\mathrm{j}}$ and a VV-type variable node $\underline{v}^1\,\underline{v}^2$ if $\underline{v}^1 \in \mathcal{N}_{\underline{c}^1_{\mathrm{i}'}}$ and $\underline{v}^2 = \underline{v}^2_{\mathrm{j}}$. Similarly, there is an edge between an $X$-check $\underline{c}^1_{\mathrm{i}'}\,\underline{v}^2_{\mathrm{j}}$ and a CC-type variable node $\underline{c}^1\,\underline{c}^2$ if $\underline{c}^1 = \underline{c}^1_{\mathrm{i}'}$ and $\underline{c}^2 \in \mathcal{N}_{\underline{v}^2_{\mathrm{j}}}$. Therefore, the set of neighboring variable nodes of the $X$-check $\underline{c}^1_{\mathrm{i}'}\,\underline{v}^2_{\mathrm{j}}$ is given by
\begin{equation*}
    \label{eq:X_neighbors}
    \mathcal{N}_{\underline{c}^1_{\mathrm{i}'}\underline{v}^2_{\mathrm{j}}} = \left( \mathcal{N}_{\underline{c}^1_{\mathrm{i}'}} \bigtimes \underline{v}^2_{\mathrm{j}} \right) \cup \left( \underline{c}^1_{\mathrm{i}'} \bigtimes \mathcal{N}_{\underline{v}^2_{\mathrm{j}}} \right).
\end{equation*}
Similarly, the set of neighboring variable nodes of a $Z$-type check $\underline{v}^1_{\mathrm{j}'}\,\underline{c}^2_{\mathrm{i}}$ is given by
\begin{equation*}
    \label{eq:Z_neighbors}
    \mathcal{N}_{\underline{v}^1_{\mathrm{j}'}\underline{c}^2_{\mathrm{i}}} = \left( \underline{v}^1_{\mathrm{j}'} \bigtimes \mathcal{N}_{\underline{c}^2_{\mathrm{i}}} \right) \cup \left( \mathcal{N}_{\underline{v}^1_{\mathrm{j}'}} \bigtimes \underline{c}^2_{\mathrm{i}} \right).
\end{equation*}
\begin{definition}
    \label{def:induced_subgraph}
    The subgraph induced by a set of nodes in a graph consists of those nodes, their neighboring nodes, and all edges connecting the inducing nodes to their neighbors.
\end{definition}
Let the subgraphs induced by $\underline{\mathcal{C}}_{\mathrm{x}}$ and $\underline{\mathcal{C}}_{\mathrm{z}}$ in $\underline{\mathcal{G}}_{\mathrm{LP}}$ be denoted by $\underline{\mathscr{G}}_{\mathrm{x}}$ and $\underline{\mathscr{G}}_{\mathrm{z}}$, respectively. Observe that $\underline{\mathcal{G}}_{\mathrm{x}}$ and $\underline{\mathcal{G}}_{\mathrm{z}}$ are base graphs corresponding to the base matrices $\mathbf{B}_{\mathrm{x}}$ and $\mathbf{B}_{\mathrm{z}}$ given in \eqref{eq:LP_construction}, assuming $\underline{\mathscr{G}}_1$ and $\underline{\mathscr{G}}_2$ are base graphs corresponding to the base matrices $\mathbf{B}_1$ and $\mathbf{B}_2$, respectively. The product of two nodes, one from $\underline{\mathscr{G}}_1$ and the other from $\underline{\mathscr{G}}_2$, is illustrated in Figure~\ref{fig:node_types}.
\begin{figure}
    \centering
    \tikzstyle{cnode}=[circle,minimum size=0.55 cm,draw]
\tikzstyle{scnode}=[circle,minimum size=0.15 cm,draw]
\tikzstyle{zscnode}=[circle,minimum size=0.15 cm,draw,fill=red]
\tikzstyle{zrnode}=[rectangle,draw,minimum width=0.4 cm,minimum height=0.4cm,outer sep=0pt]
\tikzstyle{cgnode}=[circle,draw]
\tikzstyle{crnode}=[circle,draw]
\tikzstyle{conode}=[rectangle,draw]
\tikzstyle{cpnode}=[circle,draw]
\tikzstyle{rnode}=[rectangle,draw,minimum width=0.4 cm,minimum height=0.4cm,outer sep=0pt]
\tikzstyle{srnode}=[rectangle,draw,minimum width=0.1 cm,minimum height=0.1cm,outer sep=0pt]
\tikzstyle{prnode}=[rectangle,rounded corners,fill=blue!50,text width=4.5em,text centered,outer sep=0pt]
\tikzstyle{prnodebig}=[rectangle,rounded corners,fill=blue!50,text width=7em,text centered,outer sep=0pt]
\tikzstyle{prnodesimple}=[rectangle,draw,text width=4.5em,text centered,outer sep=0pt]
\tikzstyle{bigsnake}=[fill=green!50,snake=snake,segment amplitude=4mm, segment length=4mm, line after snake=1mm]
\tikzstyle{smallsnake}=[snake=snake,segment amplitude=0.7mm, segment length=4mm, line after snake=1mm]
\definecolor{color1}{rgb}{1,0.2,0.3}
\definecolor{color2}{rgb}{0.4,0.5,0.7}
\definecolor{color3}{rgb}{0.1,0.8,0.5}
\definecolor{color4}{rgb}{0.5,0.3,1}
\definecolor{color5}{rgb}{0.5,1,1}
\definecolor{color6}{rgb}{0.8,0.3,0.6}
\definecolor{color7}{rgb}{0.6,0.4,0.3}

\begin{tikzpicture}[every node/.style={scale=1}]
\begin{scope}[node distance=1.5cm,semithick]
\node[cnode] (v1) {};
\node (m1) [right of = v1,xshift=-0.75cm] {$\times$};
\node[cnode] (v2) [right of =m1,xshift=-0.75cm]{};
\node (arrow1)[right of =v2,xshift=-0.65cm]{$=$};
\node[cnode] (vv)[right of=arrow1,xshift=-0.65cm] {};
\node [scnode](vv_in)[right of=arrow1,xshift=-0.65cm]{};
\node [below of=v2,yshift=0.8 cm]{\textbf{VV-type}};
\node[rnode] (c1) [right of =vv]{};
\node (m2) [right of = c1,xshift=-0.75cm] {$\times$};
\node[rnode] (c2) [right of =m2,xshift=-0.75cm]{};
\node (arrow2)[right of =c2,xshift=-0.65cm]{$=$};
\node[rnode] (cc)[right of=arrow2,xshift=-0.65cm] {};
\node[srnode] (cc_in)[right of=arrow2,xshift=-0.65cm] {};
\node [below of=c2,yshift=0.8 cm]{\textbf{CC-type}};

\node[rnode] (c3) [below of =v1]{};
\node (m3) [right of = c3,xshift=-0.75cm] {$\times$};
\node[cnode] (v3) [right of =m3,xshift=-0.75cm]{};
\node (arrow3)[right of =v3,xshift=-0.65cm]{$=$};
\node[rnode] (cv)[right of=arrow3,xshift=-0.65cm] {};
\node [scnode](cv_in)[right of=arrow3,xshift=-0.65cm]{};
\node [below of=v3,yshift=0.8 cm]{\textbf{CV-type}};
\node[cnode] (v4) [right of =cv]{};
\node (m4) [right of = v4,xshift=-0.75cm] {$\times$};
\node[rnode] (c4) [right of =m4,xshift=-0.75cm]{};
\node (arrow4)[right of =c4,xshift=-0.65cm]{$=$};
\node[cnode] (vc)[right of=arrow4,xshift=-0.65cm] {};
\node [srnode](vc_in)[right of=arrow4,xshift=-0.65cm]{};
\node [below of=c4,yshift=0.8 cm]{\textbf{VC-type}};
\end{scope}
\end{tikzpicture}
    \caption{Node types in the Tanner graphs of HGP/LP codes. VV-type variable nodes are depicted with a circle inside another circle because they are obtained by multiplying two variable nodes, each represented by a single circle. Analogously, CC-type variable nodes, $X$-checks, and $Z$-checks are represented, respectively, by a square inside another square, a circle inside a square, and a square inside a circle.}
    \label{fig:node_types}
\end{figure}

\subsection{Syndrome Measurement Circuit}
\label{sec:syndrome_extraction}
Encoded states lie in the $+1$ eigenspace of the stabilizers, so checking whether a faulty state is in the codespace amounts to verifying that it lies in the $+1$ eigenspace.
The circuit used to test whether a quantum state is in the $+1$ eigenspace of a given stabilizer generator is called a detector.
A detector outputs $+1$ for states in the $+1$ eigenspace and $-1$ for states in the $-1$ eigenspace.
Taken together, the detectors corresponding to all stabilizer generators of a code form the syndrome measurement circuit\cite{Nielsen}.

To understand the structure and operation of a measurement circuit, consider the stabilizer group $S = \{I, Z\}$, which stabilizes the trivial state $\ket{0}$.
Prepare an ancilla qubit in the state $\ket{+}$ and apply a CZ gate between the encoded qubit and the ancilla.
If the encoded qubit is in the $+1$ eigenstate—here, the state $\ket{0}$—the CZ gate leaves the ancilla unchanged, and the joint state is $\ket{0} \otimes (\ket{0} + \ket{1})$.
Otherwise, the CZ gate flips the phase of the ancilla, producing $\ket{0} \otimes (\ket{0} - \ket{1})$.
Note that after applying the CZ gate, the encoded qubit and ancilla remain in a product state rather than becoming entangled.
Thus, by applying a Hadamard gate to the ancilla qubit and then measuring it in the $Z$ basis, one can determine whether the encoded state lies in the codespace.
The corresponding measurement circuit is shown in Figure~\ref
{fig:stab_measurement} (a).
 \begin{figure}
 \centering
 \begin{subfigure}{0.24\textwidth}
     \input{Figures/one_qubit_Z_stab}
     \subcaption{Measurement of $Z$ stabilizer that stabilizes the $\ket{0}$ state.}
     \label{fig:one_qubit_Z_stab}
 \end{subfigure}
 \hfill
   \centering
 \begin{subfigure}{0.24\textwidth}
     \input{Figures/1qubit_X_stabilizer_measurement}
     \subcaption{Measurement of $X$ operator that stabilizes the $\ket{+}$ state.}
     \label{fig:one_qubit_X_stab}
 \end{subfigure}
 \caption{Measurement of stabilizers that acts on one qubit. }
 \label{fig:stab_measurement}
 \end{figure}
 To extend the above notion of syndrome measurement for $Z$-stabilizers from a single qubit to multiple qubits, recall that a stabilizer is the tensor product of Pauli matrices and the encoded state is the tensor product of the eigenvectors of these Pauli matrices such that the product of their respective eigenvalues equals one. 
 \begin{figure}
     \centering
     \input{Figures/Z_stabilizer_measurement}
     \caption{Measurement of the $Z_2Z_3Z_4Z_5$ operator.}
     \label{fig:Z_stabilizer_measurement}
 \end{figure}
Therefore, to identify whether a faulty encoded state belongs to the $+1$ eigen space, initially apply CZ gates sequentially between the ancilla, prepared in the $\ket{0}+\ket{1}$ state, and the qubits on which the $Z$ stabilizer acts non-trivially. Then measure the ancilla.
Figure~\ref{fig:Z_stabilizer_measurement} illustrates the measurement circuit for the stabilizer generator $Z_2Z_3Z_4Z_5$ from the stabilizer group $S$ in Example~\ref{ex:CSS_code}.
Analogously, it can be shown that the measurement of a $X$ stabilizer involves initializing an ancilla qubit in the state $\ket{0},$ followed by applying CNOT gates sequentially between the ancilla and each qubit that the $X$ stabilizer affects non-trivially, with the ancilla qubit acting as the control qubit.
The measurement of the eigenvalue corresponding to the $X$ operator is shown in Figure~\ref{fig:one_qubit_X_stab}.
\subsection{Detector-Error Matrix}
\label{sec:detector_error_Tanner_graph}

Given a specified set of fault locations together with all possible errors associated with each location, the detector–error matrix indicates whether a given error causes the codeword states to exit the codespace.
Let $\mathcal{K}$ denote the set of fault locations in a circuit, and let $F_k$ for $k \in \mathcal{K}$ denote the error at the $k$-th location, with $F_k$ taking the values in the set $\mathcal{F}_k$.

The error detector matrix associated with the syndrome-measurement circuit for $R$ rounds of measurement is a binary matrix of dimension $Rr \times \sum_{k \in \mathcal{K}} |\mathcal{F}_k|$, where the rows correspond to the detectors and the columns correspond to errors in the circuit. The $(i,j)$-th entry of the error–detector matrix is $1$ if the error corresponding to the $j$-th column causes the $i$-th detector to output $-1$.
To determine whether a specific error changes a detector output from $+1$ to $-1$, one must consider the detector output when only that error is present and no other errors occur in the circuit. This can be done by propagating the error through the controlled gates according to the following rules\cite{Nielsen}. For CNOT, we use the convention that the tensor-product ordering is \emph{(target $\otimes$ control)}, consistent with Fig.~\ref{fig:propagation_rules}:
\begin{gather}
(I \otimes X)\, \mathrm{CNOT} = (X \otimes X)\, \mathrm{CNOT},\\
(I \otimes Z)\, \mathrm{CNOT} = (I \otimes Z)\, \mathrm{CNOT},\\
(Z \otimes I)\, \mathrm{CNOT} = (Z \otimes Z)\, \mathrm{CNOT},\\[0.4em]
(I \otimes X)\, \mathrm{CZ} = (Z \otimes X)\, \mathrm{CZ},\quad
(X \otimes I)\, \mathrm{CZ} = (X \otimes Z)\, \mathrm{CZ},\\
(I \otimes Z)\, \mathrm{CZ} = (I \otimes Z)\, \mathrm{CZ},\quad
(Z \otimes I)\, \mathrm{CZ} = (Z \otimes I)\, \mathrm{CZ}.
\end{gather}

The above propagation rules are illustrated in Figure~\ref{fig:propagation_rules}.
Note that an ancilla error can propagate to multiple data qubits. Such errors are referred to as \emph{hook errors}\cite{Nielsen}. In Figure~\ref{fig:Z_stabilizer_measurement}, an $X$ error on the ancilla propagates to qubits $q_3$, $q_4$, and $q_5$, manifesting as $Z$ errors on these qubits.
The error–detector matrix has $Rr$ rows, where $r$ denotes the number of generators of the stabilizer group and $R$ is the number of rounds of syndrome measurement. Each stabilizer-generator measurement is associated with a single detector. Consequently, for $R$ rounds of syndrome measurement there are $rR$ detectors in total.
The Tanner graph associated with the error--detector matrix is commonly referred to as the \emph{circuit-level Tanner graph} (or \emph{detector--error Tanner graph}).

\begin{definition}
The effective minimum distance is the smallest number of physical faults whose propagation through the entire syndrome extraction circuit results in  a nontrivial logical Pauli operator.
\end{definition}
\begin{figure}
    \centering
    \input{Figures/propagation_rules}
    \caption{Propagation of errors through CNOT and CZ gates.}
    \label{fig:propagation_rules}
\end{figure}
\begin{figure*}
    \centering
    \input{Figures/example_error_detector_circuit}
    \caption{Syndrome measurement circuit for the CSS code in \ref{ex:CSS_code}: The fault locations are shown as a set of random gates $F_i$, for $i \in [16]$, which are Pauli matrices with fixed probabilities. The $X$ stabilizer is measured once, and the $Z$ stabilizers are measured twice. }
    \label{fig:example_error_detector_circuit}
\end{figure*}
\begin{figure*}
    \centering
    \input{Figures/error_propagation_illustration}
    \caption{Determination of the group of detectors whose outputs vary when $F_6=X$ in the circuit shown in Figure~\ref{fig:example_error_detector_circuit}: The errors on the inputs and outputs of the controlled $Z$  and CNOT gates are displayed, except in the cases where both inputs are identity. All measurements are performed in the $Z$ basis. Observe that error $F_6=X$ propagates to qubits $\ket{q_2},\ket{q_3},\ket{q_4},\ket{q_5}$ and is detected by the second and fourth detectors. These detectors are associated with the measurements of stabilizer $Z_1Z_2$ from Example~\ref{ex:CSS_code}. }
    \label{fig:error_propagation_illustration}
\end{figure*}
\begin{example}
    \label{ex:error_detector_matrix}
    Consider the syndrome measurement circuit corresponding to the code in Example~\ref{ex:CSS_code}, in which the $X$ stabilizer is measured once and the $Z$ stabilizers are measured twice (see Figures~\ref{fig:example_error_detector_circuit} and~\ref{fig:error_propagation_illustration}). The fault locations are shown as a set of random gates $F_i$ for $i \in [16]$. Each $F_i$ takes values in $\{I,X,Z\}$ according to a specified probability distribution.
The error-detector matrix corresponding to the detectors associated with the $Z$ stabilizers is given by
\begin{align}
\label{eq:error_det_mat}
\resizebox{0.91\hsize}{!}{$
\begin{blockarray}{c@{\hspace*{-3pt}}c@{\hspace*{-2.5pt}}c@{\hspace*{-2.5pt}}c@{\hspace*{-2.5pt}}c@{\hspace*{-2.5pt}}c@{\hspace*{-2.5pt}}c@{\hspace*{-2pt}}c@{\hspace*{-2pt}}c@{\hspace*{-2pt}}c@{\hspace*{-2pt}}c@{\hspace*{-2pt}}c@{\hspace*{-2pt}}c@{\hspace*{-2pt}}c@{\hspace*{-2pt}}c@{\hspace*{-2pt}}c }
{\scriptstyle X_1} & {\scriptstyle X_2} & {\scriptstyle X_3} & {\scriptstyle X_4} & {\scriptstyle X_5} & {\scriptstyle X_6} & {\scriptstyle X_7} & {\scriptstyle X_8} & {\scriptstyle X_9} & {\scriptstyle X_{10}} & {\scriptstyle X_{11}} & {\scriptstyle X_{12}} & {\scriptstyle X_{13}} & {\scriptstyle X_{14}} & {\scriptstyle X_{15}} & {\scriptstyle X_{16}}\\[ -0.25ex]
\begin{block}{[cccccccccccccccc]}
1 & 1 & 0 & 0 & 0 & 1 & 0 & 1 & 0 & 1 & 0 & 0 & 0 & 0 & 0 & 0 \\
0 & 1 & 1 & 1 & 1 & 0 & 0 & 1 & 1 & 0 & 1 & 0 & 0 & 0 & 0 & 0\\
1 & 1 & 0 & 0 & 0 & 1 & 0 & 1 & 0 & 0 & 0 & 1 & 1 & 0 & 0 & 0\\
0 & 1 & 1 & 1 & 1 & 0 & 0 & 1 & 1 & 0 & 0 & 0 & 1 & 1 & 1 & 1 \\
\end{block}%
\end{blockarray}.$}
\end{align}
The column labeled $X_k$ represents the detector outputs when the error at the $k$-th location is $X$, i.e., when $F_k = X$. Note that when $F_k = Z$, the outputs of the detectors associated with the $Z$ stabilizers do not change; hence, errors with $F_k = Z$ are omitted from the error-detector matrix in~\eqref{eq:error_det_mat}.
The first two rows correspond to detectors linked to the first round of measurements of the stabilizers $Z_1Z_2$ and $Z_2Z_3Z_4Z_5$, respectively. The third and fourth rows correspond to their measurements in the second round.
Figure~\ref{fig:error_propagation_illustration} illustrates how to determine the entries of a column by inserting an $X$ fault at the 6-th location and propagating it to the end of the circuit. Since an $X$ error at the 6-th location flips the outputs of the first and third detectors, the corresponding entries are 1 in the column labeled $X_6$. In contrast, an $X$ error at the 7-th location is not detected. Consequently, the column labeled $X_7$ consists entirely of zeros. This implies that the effective distance of the circuit is strictly less than two.
An alternative representation of the error-detector matrix in~\eqref{eq:error_det_mat} is
\begin{align}
\label{eq:alt_error_det_mat}
\resizebox{0.91\hsize}{!}{$
\begin{blockarray}{c@{\hspace*{-2.5pt}}c@{\hspace*{-2.5pt}}c@{\hspace*{-2.5pt}}c@{\hspace*{-2.5pt}}c@{\hspace*{-2.5pt}}c@{\hspace*{-2.5pt}}c@{\hspace*{-2.5pt}}c@{\hspace*{-2.5pt}}c@{\hspace*{-2pt}}c@{\hspace*{-2.5pt}}c@{\hspace*{-2.5pt}}c@{\hspace*{-2.5pt}}c@{\hspace*{-2.5pt}}c@{\hspace*{-2.5pt}}c@{\hspace*{-2.5pt}}c }
{\scriptstyle X_1} & {\scriptstyle X_2} & {\scriptstyle X_3} & {\scriptstyle X_4} & {\scriptstyle X_5} & {\scriptstyle X_6} & {\scriptstyle X_7} & {\scriptstyle X_8} & {\scriptstyle X_9} & {\scriptstyle X_{10}} & {\scriptstyle X_{11}} & {\scriptstyle X_{12}} & {\scriptstyle X_{13}} & {\scriptstyle X_{14}} & {\scriptstyle X_{15}} & {\scriptstyle X_{16}} \\[ -0.25ex]
\begin{block}{[cccccccccccccccc]}
1 & 1 & 0 & 0 & 0 & 1 & 0 & 1 & 0 & 1 & 0 & 0 & 0 & 0 & 0 & 0\\
0 & 1 & 1 & 1 & 1 & 0 & 0 & 1 & 1 & 0 & 1 & 0 & 0 & 0 & 0 & 0\\
0 & 0 & 0 & 0 & 0 & 0 & 0 & 0 & 0 & 1 & 0 & 1 & 1 & 0 & 0 & 0\\
0 & 0 & 0 & 0 & 0 & 0 & 0 & 0 & 0 & 0 & 1 & 0 & 1 & 1 & 1 & 1\\
\end{block}%
\end{blockarray}. $}
\end{align}
In this alternative representation, the first two rows are identical to those in~\eqref{eq:error_det_mat}. The third row is the modulo-two sum of the measurement outcomes for $Z_1Z_2$ from the first and second rounds, and the fourth row is the modulo-two sum of the measurement outcomes for $Z_2Z_3Z_4Z_5$ from the two rounds.
The minimum effective distance is the same for both representations, as they have the same null space~\cite{jens_eiset2024designing}. However, the alternative representation is sparser, which makes it more amenable to iterative decoding.
\end{example}

\subsection{Noise Models and Decoders}
\label{sec:circuit_level_noise_model}
We consider three noise models: (1) capacity-error model, (2) phenomenological noise model, and (3) circuit-level noise model~\cite{SurfaceCodes_DecodingMWPM}.In the capacity-error model, each data qubit independently undergoes a Pauli error $X$, $Y$, or $Z$ with probability $p/3$, while all gates and syndrome measurements are assumed to be perfect.
In the phenomenological noise model, each data qubit independently undergoes a Pauli error $X$, $Y$, or $Z$ with probability $p/3$, and each measurement outcome is independently flipped with probability $p$; there are no CNOT gate errors.


In the circuit-level noise model, each data qubit independently undergoes a Pauli error $X$, $Y$, or $Z$ with probability $p/3$. Each CNOT gate independently fails with probability $p$; conditioned on failure, it applies one of the nontrivial two-qubit Pauli errors $15$ uniformly at random (so each occurs with probability $p/15$). Finally, each measurement outcome is independently flipped with probability $p$.
We primarily focus on the circuit-level noise model, and refer to the capacity-error and phenomenological noise models only when needed for comparison.

Under these noise model, the primary aim of the decoder, given the detector outputs, also referred to as the syndrome vector, is to accurately estimate the error vector on the data qubits up to the stabilizer group. Should the decoder be unable to do so, a decoding failure is declared.
 \section{Why Does OSD Work Better on Detector-Error Tanner Graphs?}
 \label{sec:motivation_for_proposed_decoder}
The \emph{detector-error} Tanner graph facilitates the convergence of BP+OSD by introducing additional variable nodes that represent faults during syndrome measurement, which can substantially reduce the logical error rate. These extra nodes allow the decoder to capture correlations induced by the measurement circuit, but they also dramatically increase decoding complexity.
This raises a central question: are nodes representing faults essential for achieving a low decoding-failure rate, or can one design decoders that operate on a smaller Tanner graph (without a dedicated variable node for every fault) while retaining comparable performance?
To answer this, we first ask how the additional nodes help BP+OSD: do they improve BP, OSD, or both, and through what mechanism?
Here, we focus on failures in which the decoder converges to a logical error. We separate the discussion by fault type: (i) CNOT faults that introduce \emph{hook errors} (propagating data errors) and (ii) CNOT faults that introduce syndrome-bit errors together with a non-propagating data error.

\subsection{CNOT faults and stabilizer-induced TSs}
\label{sec:hook_error}
In the capacity error model, as discussed in Section~\ref{sec:related_work}, BP decoders can fail to converge on QLDPC codes due to \emph{stabilizer-induced} trapping sets~\cite{Nithin2021_QTS,pradhan2025_TS_left_right_schedule_decoder}. Here, we connect this phenomenon to \emph{hook errors} during syndrome measurement.
Consider the measurement of an $X$-type stabilizer generator of an HGP code in Fig.~\ref{fig:stab_induced_TS}, where CNOTs are applied in the order indicated on the edges. If the CNOT associated with edge label $3$ fails and introduces an $X$ error on the ancilla, that error propagates to the data qubits $v^1_1v^2, v^1_2v^2, v^1_3v^2,$ and $v^1_4v^2$ (Fig.~\ref{fig:hook_error_propagation}). The resulting data-error pattern is supported on the \emph{stabilizer-induced} TS associated with $c^1v^2$, and BP is unlikely to converge.
This mechanism makes \emph{stabilizer-induced} TSs particularly harmful in the \emph{circuit-level} error model: a single gate fault can activate a \emph{stabilizer-induced} TS.
\begin{figure*}
\begin{subfigure}[t!]{0.4\textwidth}
        \centering
        \tikzstyle{cnode}=[circle,minimum size=0.65 cm,draw,blue,scale=0.58]
\tikzstyle{scnode}=[circle,minimum size=0.4 cm,draw,blue,scale=0.58]
\tikzstyle{gcnode}=[circle,minimum size=0.55 cm,draw,green,scale=0.58]
\tikzstyle{sgcnode}=[circle,minimum size=0.15 cm,draw,green,scale=0.58]
\tikzstyle{secnode}=[circle,minimum size=0.55 cm,draw,Sepia,scale=0.58]
\tikzstyle{ssecnode}=[circle,minimum size=0.15 cm,draw,Sepia,scale=0.58]
\tikzstyle{ocnode}=[circle,minimum size=0.55 cm,draw,orange,scale=0.58]
\tikzstyle{socnode}=[circle,minimum size=0.15 cm,draw,orange,scale=0.58]
\tikzstyle{brcnode}=[circle,minimum size=0.55 cm,draw,brown,scale=0.58]
\tikzstyle{sbrcnode}=[circle,minimum size=0.15 cm,draw,brown,scale=0.58]

\tikzstyle{olrnode}=[rectangle,draw,minimum width=0.4 cm,minimum height=0.4cm,outer sep=0pt,olive,scale=0.58]
\tikzstyle{solrnode}=[rectangle,draw,minimum width=0.1 cm,minimum height=0.1cm,outer sep=0pt,olive,scale=0.58]
\tikzstyle{rnode}=[rectangle,draw,minimum width=0.5 cm,minimum height=0.5cm,outer sep=0pt,scale=0.58]
\tikzstyle{srnode}=[rectangle,draw,minimum width=0.3 cm,minimum height=0.3cm,outer sep=0pt,scale=0.58]
\tikzstyle{crnode}=[rectangle,draw,minimum width=0.4 cm,minimum height=0.4cm,outer sep=0pt,cyan,scale=0.58]
\tikzstyle{scrnode}=[rectangle,draw,minimum width=0.2 cm,minimum height=0.1cm,outer sep=0pt,cyan,scale=0.58]
\tikzstyle{Myrnode}=[rectangle,draw,minimum width=0.4 cm,minimum height=0.4cm,outer sep=0pt,Mulberry,scale=0.58]
\tikzstyle{sMyrnode}=[rectangle,draw,minimum width=0.1 cm,minimum height=0.1cm,outer sep=0pt,Mulberry,scale=0.58]
\tikzstyle{vrnode}=[rectangle,draw,minimum width=0.4 cm,minimum height=0.4cm,outer sep=0pt,violet,scale=0.58]
\tikzstyle{svrnode}=[rectangle,draw,minimum width=0.1 cm,minimum height=0.1cm,outer sep=0pt,violet,scale=0.58]
\tikzstyle{prnode}=[rectangle,draw,minimum width=0.4 cm,minimum height=0.4cm,outer sep=0pt,purple,scale=0.58]
\tikzstyle{sprnode}=[rectangle,draw,minimum width=0.1 cm,minimum height=0.1cm,outer sep=0pt,purple,scale=0.58]
\tikzstyle{yrnode}=[rectangle,draw,minimum width=0.4 cm,minimum height=0.4cm,outer sep=0pt,yellow,scale=0.58]
\tikzstyle{syrnode}=[rectangle,draw,minimum width=0.1 cm,minimum height=0.1cm,outer sep=0pt,yellow,scale=0.58]

\definecolor{color1}{rgb}{1,0.2,0.3}
\definecolor{color2}{rgb}{0.4,0.5,0.7}
\definecolor{color3}{rgb}{0.1,0.8,0.5}
\definecolor{color4}{rgb}{0.5,0.3,1}
\definecolor{color5}{rgb}{0.5,1,1}
\definecolor{color6}{rgb}{0.8,0.3,0.6}
\definecolor{color7}{rgb}{0.6,0.4,0.3}

\begin{tikzpicture}[every node/.style={scale=0.7}]
\begin{scope}[node distance=2cm,semithick]

\node[ocnode] (v1) {};
\node (lv1)[above of =v1,yshift=-1.55cm]{\Large $v_1^1$};
\node[cnode] (v2) [right of =v1,xshift=-0 cm]{};
\node (lv2)[above of =v2,yshift=-1.55cm]{\Large$v_2^1$};
\node[olrnode] (c1) [below of =v1,xshift=1 cm]{};
\node (lc1)[left of =c1,xshift=1.55cm]{\Large$c^1$};
\node[gcnode] (v3) [below of =c1,xshift=-1 cm]{};
\node (lv3)[below of =v3,yshift=1.55cm]{\Large$v_3^1$};
\node[secnode] (v4) [right of =v3,xshift=0 cm]{};
\node (lv4)[below of =v4,yshift=1.55cm]{\Large$v_4^1$};
\draw (v1) -- (c1);
\draw (c1) -- (v2);
\draw (c1) -- (v3);
\draw (c1) -- (v4);

\node (prod_symbol) [below of =v4,xshift=-0.6cm,yshift=-1.25cm]{$\graphprod$};

\node[gcnode] (vc3) [right of=v4,xshift=-0.25cm]{};
\node[sMyrnode] (vc3_in) at (vc3.center) {};
\node[secnode] (vc4) [right of=vc3,xshift=-0 cm]{};
\node[sMyrnode] (vc4_in) at (vc4.center) {};
\node[olrnode] (cc1) [above of=vc3,xshift=1 cm]{};
\node[sMyrnode] (cc1_in) at (cc1.center) {};
\node (lcc1)[left of =cc1,xshift=1.4cm]{\Large$c^1c^2_1$};
\node[ocnode] (vc1) [above of=cc1,xshift=-1 cm]{};
\node[sMyrnode] (vc1_in)at (vc1.center) {};
\node[cnode] (vc2) [above of=cc1,xshift=1 cm]{};
\node[sMyrnode] (vc2_in) at (vc2.center) {};

\draw (cc1) -- (vc1);
\draw (cc1) -- (vc2);
\draw (cc1) -- (vc3);
\draw (cc1) -- (vc4);

\node[secnode] (vv1) [right of=vc4,yshift=-1 cm]{};
\node[ssecnode] (vv1_in) at (vv1.center){};
\node (lvv1)[below of =vv1,xshift=0cm,yshift=1.6cm]{\Large $v^1_{3}v^2$};
\node[gcnode] (vv2) [below of=vv1,yshift=0cm,xshift=1.2cm]{};
\node[ssecnode] (vv2_in) at (vv2.center) {};
\node (lvv2)[below of =vv2,yshift=1.55cm,xshift=0cm]{\Large $v^1_{4}v^2$};

\draw (vv1) -- (vc4);
\draw (vv2) -- (vc3);

\node[cnode] (vv6) [right of=vc2,yshift=0.75cm]{};
\node[ssecnode] (vv6_in) at (vv6.center) {};
\node (lvv6)[left of =vv6,yshift=-0cm,xshift=2.1cm,yshift=0.45cm]{\Large$v^1_{2}v^2$};
\node[ocnode] (vv7) [above of=vv6,yshift=-0.25cm,xshift=1.2cm]{};
\node[ssecnode] (vv7_in) at (vv7.center) {};
\node (lvv7)[left of =vv7,yshift=0.45cm,xshift=1.95cm]{\Large$v^1_{1}v^2$};

\draw (vv6) -- (vc2);
\draw (vv7) -- (vc1);

\node[secnode] (vc9) [right of=vc4,xshift=3.5cm]{};
\node[sMyrnode] (vc9_in) at (vc9.center) {};
\node[gcnode] (vc10) [right of=vc9,xshift=0cm]{};
\node[sMyrnode] (vc10_in) at (vc10.center) {};
\node[olrnode] (cc3) [above of=vc10,xshift=-1cm]{};
\node[sMyrnode] (cc3_in) at (cc3.center) {};
\node (lcc3)[above of =cc3,xshift=0.55cm,yshift=-1.9cm]{\Large$c^1c^2_{2}$};
\node[cnode] (vc11) [above of=cc3,xshift=-1cm]{};
\node[sMyrnode] (vc11_in) at (vc11.center) {};
\node[ocnode] (vc12) [above of=cc3,xshift=1cm]{};
\node[sMyrnode] (vc12_in) at (vc12.center) {};

\draw (cc3) -- (vc9);
\draw (cc3) -- (vc10);
\draw (cc3) -- (vc11);
\draw (cc3) -- (vc12);

\node[olrnode] (cc5) [right of=cc3,xshift=1 cm,yshift=-0.5cm]{};
\node[sMyrnode] (cc5_in) at (cc5.center){};
\node (lcc5)[right of =cc5,xshift=-1.4cm,yshift=-0cm]{\Large$c^1c_{3}^2$};
\node[ocnode] (vc17) [above of=cc5,xshift=1cm,yshift=0.75cm]{};
\node[sMyrnode] (vc17_in) at (vc17.center) {};
\node[cnode] (vc18) [above of=cc5,xshift=-1cm,yshift=0.75cm]{};
\node[sMyrnode] (vc18_in) at (vc18.center){};
\node[gcnode] (vc19) [below of=cc5,xshift=1cm,yshift=0 cm]{};
\node[sMyrnode] (vc19_in) at (vc19.center) {};
\node[secnode] (vc20) [below of=cc5,xshift=-1cm,yshift=0 cm]{};
\node[sMyrnode] (vc20_in) at (vc20.center) {};

\draw (cc5) -- (vc17);
\draw (cc5) -- (vc18);
\draw (cc5) -- (vc19);
\draw (cc5) -- (vc20);

\draw (vv1) --(vc9);
\draw (vv1.0) --(vc20.270);

\draw (vv2) --(vc10);
\draw (vv2.0) --(vc19);

\draw (vv6) --(vc11);
\draw (vv7) --(vc17.90);
\draw (vv6) --(vc18.120);
\draw (vv7) --(vc12);

\node[secnode] (v9) [below of=vv2,xshift=-0.25cm,yshift=-1 cm]{};
\node (lv9) [below of=v9,yshift=1.55cm]{\Large$v^2$};
\node[vrnode] (c3) [left of=v9,xshift=-0.5cm]{};
\node (lc3) [left of=c3,xshift=1.55cm]{\Large$c^2_1$};
\node[prnode] (c4) [right of=v9,xshift=0.5cm,yshift=0.5cm]{};
\node (lc4) [right of=c4,xshift=-1.55cm]{\Large$c^2_2$};
\node[Myrnode] (c5) [right of=v9,xshift=2.1cm,yshift=-0.5cm]{};
\node (lc5) [right of=c5,xshift=-1.55cm]{\Large$c^2_3$};
\draw (v9) -- (c3);
\draw (v9) -- (c4);
\draw (v9) -- (c5);

\node[olrnode] (cv1) [right of =cc1,xshift=1.5cm]{};
\node[ssecnode] (cv1_in) at (cv1.center) {};
\draw[dashed] (cv1) -- node[above,midway,yshift=-0 cm]{\large$1$} (cc1);
\draw[dashed] (cv1) -- node[above,midway,yshift=-0 cm]{\large$2$} (cc3);
\draw[dashed] (cv1) -- node[below,near start,yshift=0 cm]{\large$3$} (cc5);
\draw[dashed] (cv1) -- node[left,midway,xshift=0 cm]{\large$4$} (vv1);
\draw[dashed] (cv1) -- node[right,near start,xshift=-0 cm]{\Large$5$} (vv2);
\draw[dashed] (cv1) -- node[left,midway,xshift=0 cm]{\large$6$} (vv6);
\draw[dashed] (cv1) -- node[right,near start,xshift=-0 cm]{\large$7$} (vv7);

\end{scope}    
\end{tikzpicture}
        \subcaption{TS induced by the $X$-type stabilizer generator $c^1v^2$, i.e., the graph product of check node $c^1$ and variable node $v^2$. Dotted lines indicate the edges connecting $c^1v^2$ to its neighboring variable nodes, and the annotations specify the order in which the corresponding CNOT gates are applied during syndrome measurement. Solid lines show the induced subgraph within $\mathscr{G}_{\mathrm{z}}$.}
        \label{fig:stab_induced_TS}
    \end{subfigure}
    \hfill
\begin{subfigure}[t!]{0.28\textwidth}
\centering
         \input{Figures/trapping_set_indcing_hook_error}
          \subcaption{A hook error activating the \emph{stabilizer-induced} TS in Figure~\ref{fig:stab_induced_TS}. The CNOT gate corresponding to the edge with label $3$ fails and introduces an $X$ error in the ancilla. This $X$ error subsequently spreads to the data qubits $v^1_1v^2, v^1_2v^2, v^1_3v^2,$ and $v^1_4v^2$, which is enough to induce the TS~\cite{pradhan2025_TS_left_right_schedule_decoder}.}
        \label{fig:hook_error_propagation}
    \end{subfigure}
   \hfill
   \begin{subfigure}[t!]{0.25\textwidth}
\centering
        \tikzstyle{cnode}=[circle,minimum size=0.55 cm,draw]
\tikzstyle{scnode}=[circle,minimum size=0.15 cm,draw]
\tikzstyle{bscnode}=[circle,minimum size=0.15 cm,draw,fill=black]
\tikzstyle{bsrnode}=[rectangle,draw,minimum width=0.1 cm,minimum height=0.1cm,outer sep=0pt,fill=black]
\tikzstyle{cgnode}=[circle,draw]
\tikzstyle{cblnode}=[circle,draw,minimum size=0.55 cm,fill=blue!50]
\tikzstyle{conode}=[rectangle,draw]
\tikzstyle{cpnode}=[circle,draw]
\tikzstyle{rnode}=[rectangle,draw,minimum width=0.4 cm,minimum height=0.4cm,outer sep=0pt]
\tikzstyle{srnode}=[rectangle,draw,minimum width=0.1 cm,minimum height=0.1cm,outer sep=0pt]
\tikzstyle{prnode}=[rectangle,rounded corners,fill=blue!50,text width=4.5em,text centered,outer sep=0pt]
\tikzstyle{prnodebig}=[rectangle,rounded corners,fill=blue!50,text width=7em,text centered,outer sep=0pt]
\tikzstyle{prnodesimple}=[rectangle,draw,text width=4.5em,text centered,outer sep=0pt]
\tikzstyle{bigsnake}=[fill=green!50,snake=snake,segment amplitude=4mm, segment length=4mm, line after snake=1mm]
\tikzstyle{smallsnake}=[snake=snake,segment amplitude=0.7mm, segment length=4mm, line after snake=1mm]
\definecolor{color1}{rgb}{1,0.2,0.3}
\definecolor{color2}{rgb}{0.4,0.5,0.7}
\definecolor{color3}{rgb}{0.1,0.8,0.5}
\definecolor{color4}{rgb}{0.5,0.3,1}
\definecolor{color5}{rgb}{0.5,1,1}
\definecolor{color6}{rgb}{0.8,0.3,0.6}
\definecolor{color7}{rgb}{0.6,0.4,0.3}

\begin{tikzpicture}[every node/.style={scale=0.5}]
\begin{scope}[node distance=2cm,semithick]

\node[rnode] (cc1) {};
\node[srnode] (cc1_in){};
\node [rnode] (cc2) [below of=cc1,yshift=-0.75cm] {};
\node[srnode] (cc2_in)[below of=cc1,yshift=-0.75 cm]{};

\node [rnode] (cc3) [below of=cc2,yshift=-0.75 cm] {};
\node[srnode] (cc3_in)[below of=cc2,yshift=-0.75 cm]{};

\node [cnode] (vc1) [right of=cc1,yshift=2.5 cm] {};
\node [bsrnode] (vc1_in) [right of=cc1,yshift=2.5cm] {};

\node [cnode] (vc2) [below of=vc1,yshift=1.1 cm] {};
\node [bsrnode] (vc2_in) [below of=vc1,yshift=1.1 cm] {};

\node [cnode] (vc3) [below of=vc2,yshift=1.1 cm] {};
\node [bsrnode] (vc3_in) [below of=vc2,yshift=1.1 cm] {};

\node [cnode] (vc4) [below of=vc3,yshift=1.1 cm] {};
\node [bsrnode] (vc4_in) [below of=vc3,yshift=1.1 cm] {};

\node [cnode] (vc5) [below of=vc4,yshift=1.1cm] {};
\node [bsrnode] (vc5_in) [below of=vc4,yshift=1.1cm] {};

\node [cnode] (vc6) [below of=vc5,yshift=1.1 cm] {};
\node [bsrnode] (vc6_in) [below of=vc5,yshift=1.1 cm] {};

\node [cnode] (vc7) [below of=vc6,yshift=1.1 cm] {};
\node [bsrnode] (vc7_in) [below of=vc6,yshift=1.1 cm] {};

\node [cnode] (vc8) [below of=vc7,yshift=1.1 cm] {};
\node [bsrnode] (vc8_in) [below of=vc7,yshift=1.1 cm] {};

\node [cnode] (vc9) [below of=vc8,yshift=1.1cm] {};
\node [bsrnode] (vc9_in) [below of=vc8,yshift=1.1 cm] {};

\node [cnode] (vc10) [below of=vc9,yshift=1.1cm] {};
\node [bsrnode] (vc10_in) [below of=vc9,yshift=1.1cm] {};

\node [cnode] (vc11) [below of=vc10,yshift=1.1cm] {};
\node [bsrnode] (vc11_in) [below of=vc10,yshift=1.1cm] {};

\node [cnode] (vc12) [below of=vc11,yshift=1.1 cm] {};
\node [bsrnode] (vc12_in) [below of=vc11,yshift=1.1 cm] {};
\draw (cc1.30) -- (vc1);
\draw (cc1.10) -- (vc2);
\draw (cc1.-10) -- (vc3);
\draw (cc1.-30) -- (vc4);

\draw (cc2.30) -- (vc5);
\draw (cc2.10) -- (vc6);
\draw (cc2.-10) -- (vc7);
\draw (cc2.-30) -- (vc8);

\draw (cc3.30) -- (vc9);
\draw (cc3.10) -- (vc10);
\draw (cc3.-10) -- (vc11);
\draw (cc3.-30) -- (vc12);

\node [cnode] (vv1) [right of=vc1,yshift=-1.5cm,xshift=0.75cm] {};
\node [scnode] (vv1_in) [right of=vc1,yshift=-1.5cm,xshift=0.75cm] {};

\node [cnode] (vv2) [below of=vv1,yshift=0 cm] {};
\node[scnode] (vv2_in)[below of=vv1,yshift=0 cm]{};

\node [cnode] (vv3) [below of=vv2,yshift=0 cm] {};
\node[scnode] (vv3_in)[below of=vv2,yshift=0 cm]{};

\node [cnode] (vv4) [below of=vv3,yshift=0 cm] {};
\node[scnode] (vv4_in)[below of=vv3,yshift=0 cm]{};

\draw(vc1) -- (vv1);
\draw(vc5) -- (vv1.180);
\draw(vc9) -- (vv1.200);
\draw(vc2) -- (vv2.170);
\draw(vc6) -- (vv2.180);
\draw(vc10) -- (vv2.200);
\draw(vc3) -- (vv3.170);
\draw(vc7) -- (vv3.180);
\draw(vc11) -- (vv3.200);
\draw(vc4) -- (vv4.170);
\draw(vc8) -- (vv4.180);
\draw(vc12) -- (vv4.200);
\node [cblnode] (f1)[below of=vv4]{};
\draw[blue] (vc1)--(f1);
\draw[blue] (vc2)--(f1);
\draw[blue] (vc3)--(f1);
\draw[blue] (vc4)--(f1);
\draw[blue] (vc5)--(f1);
\draw[blue] (vc6)--(f1);
\draw[blue] (vc7)--(f1);
\draw[blue] (vc8)--(f1);
\draw[blue] (vc9)--(f1);
\draw[blue] (vc10)--(f1);
\draw[blue] (vc11)--(f1);
\draw[blue] (vc12)--(f1);
\end{scope}    
\end{tikzpicture}
        \caption{The TS induced by $c^1v^2$ incorporates an extra variable node (colored blue) symbolizing an $X$-error in the ancilla following the execution of the CNOT associated with the edge labeled $3$. The inner shapes of the checks that do not satisfy the measured syndrome are depicted in black.  OSD-$1$ is capable of addressing this TS caused by a hook error. }
        \label{fig:circuit_level_Tanner_graph_hook_error}
    \end{subfigure}
    \caption{Example illustrating of how a hook error activates \emph{stabilizer-induced} TSs.}
    \label{fig:coneection_of_hook_error_to_TS}
\end{figure*}

Next, we examine the factors that contribute to the improved decoding failure rate of the BP+OSD decoder when it is run on the \emph{detector-error} Tanner graph. To this end, we first delineate the working principles of the BP+OSD decoder.
Let the number of columns in the detector-error parity-check matrix be denoted by $N$ and its rank by $r$. 
In the BP+OSD decoding framework, the BP decoder is first run for a specified number of iterations on the \emph{detector-error} Tanner graph to produce a soft value (e.g., an LLR) for each of the $N$ variable nodes. The variable nodes are then ranked by reliability, e.g., by the magnitude of their soft values. The order-$t$ OSD step next permutes the columns of the detector-error parity-check matrix according to this reliability ordering and performs Gaussian elimination to obtain a full-rank systematic form. Using this systematic form, OSD enumerates all test patterns of weight at most $t$ on a subset of $N-r$ less-reliable positions; for each test pattern, it computes the corresponding values on the remaining $r$ positions so that the resulting length-$N$ error estimate satisfies the measured syndrome. This generates $\sum_{i=0}^{t}\binom{N-r}{i}$ candidate solutions. Finally, OSD selects the candidate with the largest a priori probability (equivalently, the best metric with respect to the BP soft values).

With the above explanation of the BP+OSD decoding approach, let us re-examine the trapping-set example depicted in Figure~\ref{fig:coneection_of_hook_error_to_TS}. This scenario helps to clarify the advantage of adding additional variable nodes to handle hook errors during the decoding process. To simplify the discussion, assume that the variable nodes that form the TS are not part of the $r$ variable nodes with the highest soft values. This is reasonable because the error estimate tends to oscillate between iterations when the support of the error pattern lies within a \emph{stabilizer-induced} TS.
In cases where the error pattern contains only one hook error, as illustrated in Figure~\ref{fig:circuit_level_Tanner_graph_hook_error}, it is likely to become a candidate solution for OSD-1. However, if no additional node representing the hook error is present, the true error pattern is considered as a candidate solution only when the OSD order is three or greater. An OSD of order three is sufficient, as there exists a corresponding weight-$3$ error pattern that matches the syndrome, although the hook error in Figure~\ref{fig:hook_error_propagation} spreads to four qubits.

Note that stabilizer generators alone are not the only source of TSs; their linear combinations can also induce them~\cite{pradhan2025_TS_left_right_schedule_decoder}. As the size of such linear combinations grows, the required OSD order to resolve the corresponding TSs increases in both cases—whether or not additional nodes are included to account for hook errors. In particular, this increase is much faster when additional variable nodes representing hook errors are absent.
The resulting increase in decoder complexity is illustrated in Fig.~\ref{fig:linear_comb_stab_inducing_TS} through an example in which a pair of hook errors creates a TS induced by two stabilizer generators. 

Consider the TS formed by two $X$-type stabilizer generators, as shown in Fig.~\ref{fig:linear_comb_stab_inducing_TS}. Assume that the CNOT operations associated with the edges labeled $3$ fail during the syndrome measurement and introduce $X$ errors in the corresponding ancilla qubits. These $X$ errors subsequently propagate to data qubits $d_5,d_6,d_7,d_{11},d_{12},$ and $d_{13}$, collectively forming an error pattern that activates the TS depicted in Fig.~\ref{fig:linear_comb_stab_inducing_TS}.
This shows that, in the \emph{circuit-level} error model, the TS in Fig.~\ref{fig:linear_comb_stab_inducing_TS} can be triggered by just two hook errors, instead of the six data errors required in the capacity error model. In Fig.~\ref{fig:circuit_level_Tanner_graph_for_two_hook_errors}, the two blue variable nodes represent these $X$-type errors, which arise after the CNOT operations. A variable node associated with a hook error is adjacent to a check node if that check node can detect the corresponding hook error. Thus, the adjacent check nodes of these variable nodes are determined by whether the check node detects the associated hook error.
Running the OSD-$2$ decoder on the Tanner graph, which includes the two blue variable nodes for hook errors, is likely to produce candidate solutions that include the true error pattern. However, omitting these nodes from the Tanner graph would require an OSD decoder of order six or higher to  resolve the TS. Given the exponential increase in OSD complexity with order, it is advantageous to incorporate variable nodes that capture correlations from hook errors.
   
 
 \begin{figure*}
        \begin{subfigure}[t!]{0.4\linewidth}
        \centering
        \input{Figures/Appendix_hook_error_illustration/stab_induced_TS}
        \subcaption{ An example of a TS created by two $X$-type stabilizers, formed through the graph product of the subgraph induced by two check nodes of one parent classical code and the subgraph induced by a variable node from the other parent classical code.  The dotted edges showed the edges connected to the $X$-type stabilizers. Solid lines denote the edges of the subgraph induced by these $X$-type stabilizer generators. The labels on these dotted lines indicate the order in which the CNOT operations associated with them are carried out during the stabilizer measurement.}
        \label{fig:linear_comb_stab_inducing_TS}
    \end{subfigure}
    \hfill
\begin{subfigure}[t!]{0.3\linewidth}
        \centering
        \input{Figures/Appendix_hook_error_illustration/syndromme_measurement_circuit}
        \subcaption{Hook errors activating the TS shown in Fig.~\ref{fig:linear_comb_stab_inducing_TS}. Both CNOT gates corresponding to the edges labeled $3$ in Fig.~\ref{fig:linear_comb_stab_inducing_TS} fail, causing $X$ errors in the respective ancillas. These $X$ errors then propagate to the data qubits that participate in later CNOT operations with those ancillas. Here, the $X$-error that propagated from ancilla $a_1$ to data $d_6$ is canceled by the $X$ error that propagates from ancilla $a_2$ to data $d_6$.}
        \label{fig:two_hook_error_propagating_to_mul_data_error}  
    \end{subfigure}
    \hfill
    \begin{subfigure}[t!]{0.23\linewidth}
        \centering
        \tikzstyle{cnode}=[circle,minimum size=0.45 cm,draw]
\tikzstyle{blcnode}=[circle,minimum size=0.45 cm,draw,fill=blue!50]
\tikzstyle{scnode}=[circle,minimum size=0.2 cm,draw]
\tikzstyle{zscnode}=[circle,minimum size=0.08 cm,draw,fill=red]
\tikzstyle{zrnode}=[rectangle,draw,minimum width=0.2 cm,minimum height=0.2cm,outer sep=0pt]
\tikzstyle{cgnode}=[circle,draw]
\tikzstyle{crnode}=[circle,draw]
\tikzstyle{conode}=[rectangle,draw]
\tikzstyle{cpnode}=[circle,draw]
\tikzstyle{rnode}=[rectangle,draw,minimum width=0.4 cm,minimum height=0.4cm,outer sep=0pt]
\tikzstyle{srnode}=[rectangle,draw,minimum width=0.1 cm,minimum height=0.1cm,outer sep=0pt]
\tikzstyle{bsrnode}=[rectangle,draw,minimum width=0.1 cm,minimum height=0.1cm,outer sep=0pt,fill=black]
\begin{tikzpicture}[every node/.style={scale=0.74}]
\begin{scope}[node distance=1 cm,semithick]
\node[cnode] (vc1) {}; 
\node[bsrnode] (vc1_in) {}; 
\node[cnode] (vc2) [below of=vc1,yshift=0.3cm]{}; 
\node[bsrnode] (vc2_in) [below of=vc1,yshift=0.3cm]{}; 
\node[cnode] (vc3) [below of=vc2,yshift=0.3cm]{}; 
\node[bsrnode] (vc3_in) [below of=vc2,yshift=0.3cm]{}; 
\node[cnode] (vc4) [below of=vc3,yshift=0.3cm]{}; 
\node[bsrnode] (vc4_in) [below of=vc3,yshift=0.3cm]{}; 
\node[cnode] (vc5) [below of=vc4,yshift=0.3cm]{}; 
\node[bsrnode] (vc5_in) [below of=vc4,yshift=0.3cm]{}; 
\node[cnode] (vc6) [below of=vc5,yshift=0.3cm]{}; 
\node[bsrnode] (vc6_in) [below of=vc5,yshift=0.3cm]{}; 
\node[cnode] (vc7) [below of=vc6,yshift=0.3cm]{}; 
\node[srnode] (vc7_in) [below of=vc6,yshift=0.3cm]{}; 
\node[cnode] (vc8) [below of=vc7,yshift=0.3cm]{}; 
\node[srnode] (vc8_in) [below of=vc7,yshift=0.3cm]{}; 
\node[cnode] (vc9) [below of=vc8,yshift=0.3cm]{}; 
\node[srnode] (vc9_in) [below of=vc8,yshift=0.3cm]{}; 
\node[cnode] (vc10) [below of=vc9,yshift=0.3cm]{}; 
\node[bsrnode] (vc10_in) [below of=vc9,yshift=0.3cm]{}; 
\node[cnode] (vc11) [below of=vc10,yshift=0.3cm]{}; 
\node[bsrnode] (vc11_in) [below of=vc10,yshift=0.3cm]{}; 
\node[cnode] (vc12) [below of=vc11,yshift=0.3cm]{}; 
\node[bsrnode] (vc12_in) [below of=vc11,yshift=0.3cm]{}; 
\node[cnode] (vc13) [below of=vc12,yshift=0.3cm]{}; 
\node[bsrnode] (vc13_in) [below of=vc12,yshift=0.3cm]{}; 
\node[cnode] (vc14) [below of=vc13,yshift=0.3cm]{}; 
\node[bsrnode] (vc14_in) [below of=vc13,yshift=0.3cm]{}; 
\node[cnode] (vc15) [below of=vc14,yshift=0.3cm]{}; 
\node[bsrnode] (vc15_in) [below of=vc14,yshift=0.3cm]{}; 
\node[cnode] (vc16) [below of=vc15,yshift=0.3cm]{}; 
\node[bsrnode] (vc16_in) [below of=vc15,yshift=0.3cm]{}; 
\node[cnode] (vc17) [below of=vc16,yshift=0.3cm]{}; 
\node[bsrnode] (vc17_in) [below of=vc16,yshift=0.3cm]{}; 
\node[cnode] (vc18) [below of=vc17,yshift=0.3cm]{}; 
\node[bsrnode] (vc18_in) [below of=vc17,yshift=0.3cm]{}; 
\node[cnode] (vc19) [below of=vc18,yshift=0.3cm]{}; 
\node[bsrnode] (vc19_in) [below of=vc18,yshift=0.3cm]{}; 
\node[cnode] (vc20) [below of=vc19,yshift=0.3cm]{}; 
\node[bsrnode] (vc20_in) [below of=vc19,yshift=0.3cm]{}; 
\node[cnode] (vc21) [below of=vc20,yshift=0.3cm]{}; 
\node[bsrnode] (vc21_in) [below of=vc20,yshift=0.3cm]{}; 
\node[cnode] (vv1) [right of=vc2,xshift=0.5cm]{}; 
\node[scnode] (vv1_in) [right of=vc2,xshift=0.5cm]{}; 
\draw (vv1)--(vc1);
\draw (vv1)--(vc2);
\draw (vv1)--(vc3);
\node[cnode] (vv2) [right of=vc5,xshift=0.5cm]{}; 
\node[scnode] (vv2_in) [right of=vc5,xshift=0.5cm]{}; 
\draw (vv2)--(vc4);
\draw (vv2)--(vc5);
\draw (vv2)--(vc6);
\node[cnode] (vv3) [right of=vc8,xshift=0.5cm]{}; 
\node[scnode] (vv3_in) [right of=vc8,xshift=0.5cm]{}; 
\draw (vv3)--(vc7);
\draw (vv3)--(vc8);
\draw (vv3)--(vc9);
\node[cnode] (vv4) [right of=vc11,xshift=0.5cm]{}; 
\node[scnode] (vv4_in) [right of=vc11,xshift=0.5cm]{}; 
\draw (vv4)--(vc10);
\draw (vv4)--(vc11);
\draw (vv4)--(vc12);
\node[cnode] (vv5) [right of=vc14,xshift=0.5cm]{}; 
\node[scnode] (vv5_in) [right of=vc14,xshift=0.5cm]{}; 
\draw (vv5)--(vc13);
\draw (vv5)--(vc14);
\draw (vv5)--(vc15);
\node[cnode] (vv6) [right of=vc17,xshift=0.5cm]{}; 
\node[scnode] (vv6_in) [right of=vc17,xshift=0.5cm]{}; 
\draw (vv6)--(vc16);
\draw (vv6)--(vc17);
\draw (vv6)--(vc18);

\node[cnode] (vv7) [right of=vc20,xshift=0.5cm]{}; 
\node[scnode] (vv7_in) [right of=vc20,xshift=0.5cm]{}; 
\draw (vv7)--(vc19);
\draw (vv7)--(vc20);
\draw (vv7)--(vc21);
\node[rnode] (cc1)[left of =vc2,xshift=-0.5cm]{};
\node[srnode] (cc1_in)[left of =vc2,xshift=-0.5cm]{};
\draw (cc1)--(vc1);
\draw (cc1)--(vc4);
\draw (cc1)--(vc7);
\draw (cc1)--(vc10.130);
\node[rnode] (cc2)[below of =cc1,yshift=-0.8cm]{};
\node[srnode] (cc2_in)[below of =cc1,yshift=-0.8cm]{};
\draw (cc2)--(vc2);
\draw (cc2)--(vc5);
\draw (cc2)--(vc8);
\draw (cc2)--(vc11.130);
\node[rnode] (cc3)[below of =cc2,yshift=-0.8cm]{};
\node[srnode] (cc3_in)[below of =cc2,yshift=-0.8cm]{};
\draw (cc3)--(vc3);
\draw (cc3)--(vc6);
\draw (cc3)--(vc9);
\draw (cc3)--(vc12.130);

\node[blcnode] (f1)[below of =cc3,yshift=-0.8cm]{};
\draw[blue] (f1)--(vc1.200);
\draw [blue](f1)--(vc2.230);
\draw [blue](f1)--(vc3.240);
\draw [blue](f1)--(vc4);
\draw [blue](f1)--(vc5);
\draw [blue](f1)--(vc6);
\draw [blue](f1)--(vc10);
\draw [blue](f1)--(vc11);
\draw [blue](f1)--(vc12);
\node[rnode] (cc4)[below of =f1,yshift=-0.8cm]{};
\node[srnode] (cc4_in)[below of =f1,yshift=-0.8cm]{};
\draw (cc4)--(vc7.190);
\draw (cc4)--(vc13);
\draw (cc4)--(vc16);
\draw (cc4)--(vc19.130);

\node[rnode] (cc5)[below of =cc4,yshift=-0.8cm]{};
\node[srnode] (cc5_in)[below of =cc4,yshift=-0.8cm]{};
\draw (cc5)--(vc8.190);
\draw (cc5)--(vc14);
\draw (cc5)--(vc17);
\draw (cc5)--(vc20.130);
\node[rnode] (cc6)[below of =cc5,yshift=-0.8cm]{};
\node[srnode] (cc6_in)[below of =cc5,yshift=-0.8cm]{};
\draw (cc6)--(vc9.190);
\draw (cc6)--(vc15);
\draw (cc6)--(vc18);
\draw (cc6)--(vc21.130);

\node[blcnode] (f2)[below of =cc6,yshift=-0.8cm]{};
\draw[blue] (f2)--(vc13);
\draw [blue](f2)--(vc14);
\draw [blue](f2)--(vc15);
\draw [blue](f2)--(vc16);
\draw [blue](f2)--(vc17);
\draw [blue](f2)--(vc18);
\draw [blue](f2)--(vc19);
\draw [blue](f2)--(vc20);
\draw [blue](f2)--(vc21);
\end{scope}
\end{tikzpicture}
        \subcaption{\emph{detector-error} Tanner graph featuring two additional variable nodes (colored blue) to represent the $X$ errors introduced after the faulty execution of CNOTs corresponding to the edges labeled $3$ in Figure~\ref{fig:linear_comb_stab_inducing_TS}. The inner shapes of the checks that do not satisfy the measured syndrome are depicted in black.}
        \label{fig:circuit_level_Tanner_graph_for_two_hook_errors}
    \end{subfigure}
    \caption{Example illustrating how a combination of hook errors induce TSs.}
    \label{fig:connection_between_hook_errors_and_TS}
    \end{figure*}
Figure~\ref{fig:connection_between_hook_errors_and_TS} illustrates how CNOT faults that create hook errors propagate to correlated data errors and thus activate a \emph{stabilizer-induced} TS in the \emph{detector-error} Tanner graph.
The above discussion explains why fault-representing nodes help the OSD stage: by explicitly modeling the hook-error correlations, the true error pattern can appear at much lower OSD order. However, adding these nodes can also worsen the convergence of the BP because it introduces additional short cycles\cite{pacenti2026trapping,bhatnagar2026impulse}.
These nodes representing faults become unnecessary if BP itself is modified to resolve the corresponding \emph{stabilizer-induced} TSs. An approach in this direction is~\cite{pradhan2025_TS_left_right_schedule_decoder}, which characterizes the TSs of HGP/LP codes as graph products of parent-code subgraphs and shows that alternating VV/CC message updates can ensure convergence when the relevant subgraphs are cycle-free. Since good classical codes typically contain many cycles, this guarantees convergence only for a restricted class of TSs~\cite{mct}. Here, we propose decoders that resolve a broader class of \emph{stabilizer-induced} TSs by leveraging decoders for the underlying parent-code , thereby avoiding the need to add a fault node for every correlation-inducing event.

\subsection{CNOT Faults and Syndrome-Measurement Errors}
\label{sec:cnot_faults_and_syndrome_error}
 \begin{figure*}
        \input{Figures/internal_errors/measurement_circuit}
        \caption{Circuit errors leading to an erroneous syndrome (and why repeating the measurement helps): A faulty CNOT between $d_7$ and $a_2$ introduces a $Z$ error on $d_7$ \,during syndrome extraction, so in the first round it is detected only by the first and third stabilizers; in contrast, a pre-existing $Z$ error on $d_7$ (present before measurement) would be detected by all three generators. Repeating the syndrome measurement in a second round makes the syndrome history distinguish this CNOT-induced fault from data errors that were present before measurement.}
        \label{fig:discrepancy_between_error_and_syndrome}
    \end{figure*}
In this section, we discuss CNOT faults that occur during stabilizer measurement and do not propagate to other data qubits, but nevertheless corrupt the measured syndrome. In \emph{detector-error} Tanner graphs, such faults are often modeled by introducing additional variable nodes.

Whenever a stabilizer generator acts nontrivially on a data qubit, the measurement circuit applies a two-qubit gate (e.g., a CNOT) between that qubit and an ancilla. If a data error is introduced midway through the measurement circuit, it is detected only by those stabilizer checks whose relevant two-qubit gates occur after the error. As a result, the measured syndrome can differ from the syndrome of the final data-error pattern.
We illustrate this effect using the Steane code in Fig.~\ref{fig:discrepancy_between_error_and_syndrome}. Assume all CNOT gates are perfect except the gate between data qubit $d_7$ and ancilla $a_2$, and that this faulty CNOT induces a $Z$ error on $d_7$. In the first round, this error is detected by only the first and third stabilizers, even though a pre-existing $Z$ error on $d_7$ would be detected by all three. Feeding such an inconsistent syndrome to a decoder can lead to miscorrection; in this example, a maximum-likelihood decoder would apply a correction on $d_5$, turning a single-qubit error into a two-qubit error.

The corresponding \emph{detector-error} matrix is
\begin{equation}
    \label{eq:det_error_matrix}
        \begin{pNiceArray}{ccccccccccccccc}
   \Block[fill=gray!75]{3-15}{}0 & 0 & 0 & 1 & 1 & 1 & 1 & 1 & 0 & 0 & 0 & 0 & 0 & 0 & 0 \\
        0 & 1 & 1 & 0 & 0 & 1 & 1 & 0 & 0 & 0 & 0 & 0 & 0 & 0 & 0\\
        1 & 0 & 1 & 0 & 1 & 0 & 1  & 1 & 0 & 0 & 0 & 0 & 0   & 0 &  {0}\\
     \Block[fill=cyan!75]{3-15}{} 0 & 0 & 0 & 0 & 0 & 0 & 0 & 0 & 0 & 0 & 0 & 1 & 1 & 1 & 1\\
        0 & 0 & 0 & 0 & 0 & 0 & 0  & 1 & 0 & 1 & 1 & 0 & 0 & 1 & 1 \\
        0 & 0 & 0 & 0 & 0 & 0 & 0 & 0 & 1 & 0 & 1 & 0 & 0 & 0  & {1}
\end{pNiceArray},
\end{equation}
where the cyan block corresponds to the first round of stabilizer measurement. In that round, the eighth column (the $Z$ error introduced after the faulty CNOT) coincides with the fifth column, so these two faults are indistinguishable from a single-round syndrome; in particular, adding explicit fault nodes alone does not resolve the ambiguity when the code is not single-shot. Repeating the measurement mitigates this issue. In~\eqref{eq:det_error_matrix}, the two highlighted blocks together correspond to two rounds, and the eighth column becomes unique, allowing an optimal decoder to identify the correct fault.
This motivates the use of  nodes to represent such faults together with repeated measurements. We next ask whether one can handle these CNOT-failure induced syndrome errors without introducing explicit fault nodes; we return to this question in Section~\ref{sec:syndrome_error_inducing_CNOT_faults}.

\section{Stabilizer-Induced TSs with Multiple Copies of TSs from a Parent Code }
\label{sec:proposed_decoders}
Iterative message-passing decoders fail in the error-floor region of classical LDPC codes due to short cycles and their combinations in the Tanner graph, known as classical TSs. An HGP/LP code inherits these structures from its parent classical codes, resulting in multiple isomorphic copies of each classical TS in the derived HGP/LP code.
Certain collections of these isomorphic copies form large \emph{stabilizer-induced} TSs that arise from linear combinations of stabilizer generators. Hook errors during the measurement of these stabilizer generators can introduce errors in multiple copies simultaneously, increasing the likelihood that enough copies become active (in the sense defined below) to cause a decoding failure.
It is therefore crucial to analyze the decoding dynamics of these large \emph{stabilizer-induced} TSs and to design decoders that can avoid decoding failures caused by them. In this section, we focus on \emph{stabilizer-induced} TSs that contain multiple copies of a TS from one of the parent classical LDPC codes.
Due to the similarity between the TSs induced by $Z$-type and $X$-type stabilizer generators, we restrict our attention to those induced by the $X$-type stabilizer generators. This extends the study of \emph{stabilizer-induced} TSs considered in~\cite{pradhan2025_TS_left_right_schedule_decoder}, which focused only on TSs that are the graph product of tree-like subgraphs from the Tanner graphs of the parent classical LDPC codes, by allowing one of the subgraphs involved in the graph product to have cycles. Next, we formally define these TSs and provide an example to illustrate our approach.

\begin{definition}
An $(a,b)$ TS of an LDPC code is a subgraph of the Tanner graph induced by a set of $a$ variable nodes that contains $b$ odd-degree check nodes. The critical number of a TS (with respect to a given iterative decoder) is the smallest number of variable nodes that must initially be in error to cause a decoding failure.
\end{definition}

\begin{remark}(\textbf{Active TSs})
    Throughout this section, we view TSs as candidate failure mechanisms for a specific iterative decoder. A classical TS is said to be \emph{active} if the number of erroneous variable nodes it contains is at least its critical number.
\end{remark}

\begin{example}
    \label{ex:classical_and_stab_induced_TS}
    \begin{figure*}
    \begin{subfigure}[t!]{0.78\textwidth}
        \centering
        \input{Figures/classical_and_stab_induced_TS/copies_of_VV_type_classical_TS}
        \subcaption{The figure shows how the graph product of a degree-four check node, $c^1$, from $\mathscr{G}_1$ and a TS, $\mathscr{T}(\mathcal{V}_{\mathrm{s}}^2)$, from $\mathscr{G}_2$ results in a graph that has four identical copies of TS $\mathscr{T}(\mathcal{V}_{\mathrm{s}}^2)$. These identical copies are given by $v^1_\mathrm{i} \times \mathscr{T}(\mathcal{V}^2_\mathrm{s})$ for $v^1_{\mathrm{i}}\in \mathcal{N}_{c^1}$. Note that part of the resulting graph from the graph product that contains only $Z$-stabilizers is shown. This graph is induced by CV-type stabilizer generators $c^1v_1^2, c^1v_2^2,c^1v^2_3$ and $c^1v_4^2$ within $\mathscr{G}_{\mathrm{z}}$. 
        To see that indeed the graph is induced by these stabilizer generators, the subgraphs induced by $X$-type stabilizer generators $c^1v_1^2$ and $c^1v_2^2$ within $\mathscr{G}_{\mathrm{z}}$ are explicitly shown in Figure~\ref{fig:stab_induced_TS_inside_the_bigger_TS}.
        As this TS is induced by four $X$-type stabilizer generators, it can be induced by hook errors introduced during their measurement. Similar to how the CC-type variable node is shown to connect with two stabilizer generators in Figure~\ref{fig:stab_induced_TS_inside_the_bigger_TS}, other CC-type variable nodes in the upper part of the figure can also be shown to connect with two stabilizer generators among those that induce this subgraph.}
        \label{fig:multiple_copies_of_classical_TS_inside_stab_induced_TS}
    \end{subfigure}
        \hfill
        \begin{subfigure}[t!]{0.2\textwidth}
            \tikzstyle{cnode}=[circle,minimum size=0.65 cm,draw,blue,scale=0.33]
\tikzstyle{scnode}=[circle,minimum size=0.4 cm,draw,blue,scale=0.33]
\tikzstyle{gcnode}=[circle,minimum size=0.55 cm,draw,green,scale=0.33]
\tikzstyle{sgcnode}=[circle,minimum size=0.15 cm,draw,green,scale=0.33]
\tikzstyle{secnode}=[circle,minimum size=0.55 cm,draw,Sepia,scale=0.33]
\tikzstyle{ssecnode}=[circle,minimum size=0.15 cm,draw,Sepia,scale=0.33]
\tikzstyle{ocnode}=[circle,minimum size=0.55 cm,draw,orange,scale=0.33]
\tikzstyle{socnode}=[circle,minimum size=0.15 cm,draw,orange,scale=0.33]
\tikzstyle{brcnode}=[circle,minimum size=0.55 cm,draw,brown,scale=0.33]
\tikzstyle{sbrcnode}=[circle,minimum size=0.15 cm,draw,brown,scale=0.33]

\tikzstyle{olrnode}=[rectangle,draw,minimum width=0.4 cm,minimum height=0.4cm,outer sep=0pt,olive,scale=0.33]
\tikzstyle{solrnode}=[rectangle,draw,minimum width=0.1 cm,minimum height=0.1cm,outer sep=0pt,olive,scale=0.33]
\tikzstyle{rnode}=[rectangle,draw,minimum width=0.5 cm,minimum height=0.5cm,outer sep=0pt,scale=0.33]
\tikzstyle{srnode}=[rectangle,draw,minimum width=0.3 cm,minimum height=0.3cm,outer sep=0pt,scale=0.33]
\tikzstyle{crnode}=[rectangle,draw,minimum width=0.4 cm,minimum height=0.4cm,outer sep=0pt,cyan,scale=0.33]
\tikzstyle{scrnode}=[rectangle,draw,minimum width=0.2 cm,minimum height=0.1cm,outer sep=0pt,cyan,scale=0.33]
\tikzstyle{Myrnode}=[rectangle,draw,minimum width=0.4 cm,minimum height=0.4cm,outer sep=0pt,Mulberry,scale=0.33]
\tikzstyle{sMyrnode}=[rectangle,draw,minimum width=0.1 cm,minimum height=0.1cm,outer sep=0pt,Mulberry,scale=0.33]
\tikzstyle{vrnode}=[rectangle,draw,minimum width=0.4 cm,minimum height=0.4cm,outer sep=0pt,violet,scale=0.33]
\tikzstyle{svrnode}=[rectangle,draw,minimum width=0.1 cm,minimum height=0.1cm,outer sep=0pt,violet,scale=0.33]
\tikzstyle{prnode}=[rectangle,draw,minimum width=0.4 cm,minimum height=0.4cm,outer sep=0pt,purple,scale=0.33]
\tikzstyle{sprnode}=[rectangle,draw,minimum width=0.1 cm,minimum height=0.1cm,outer sep=0pt,purple,scale=0.33]
\tikzstyle{yrnode}=[rectangle,draw,minimum width=0.4 cm,minimum height=0.4cm,outer sep=0pt,yellow,scale=0.33]
\tikzstyle{syrnode}=[rectangle,draw,minimum width=0.1 cm,minimum height=0.1cm,outer sep=0pt,yellow,scale=0.33]

\definecolor{color1}{rgb}{1,0.2,0.3}
\definecolor{color2}{rgb}{0.4,0.5,0.7}
\definecolor{color3}{rgb}{0.1,0.8,0.5}
\definecolor{color4}{rgb}{0.5,0.3,1}
\definecolor{color5}{rgb}{0.5,1,1}
\definecolor{color6}{rgb}{0.8,0.3,0.6}
\definecolor{color7}{rgb}{0.6,0.4,0.3}

\begin{tikzpicture}[every node/.style={scale=0.85}]
\begin{scope}[node distance=2cm,semithick]

\node[gcnode] (vc3) {};
\node[sMyrnode] (vc3_in) at (vc3.center){};
\node[secnode] (vc4) [right of=vc3,xshift=-0 cm]{};
\node[sMyrnode] (vc4_in) [right of=vc3,xshift=-0 cm]{};
\node[olrnode] (cc1) [above of=vc3,xshift=1 cm]{};
\node[sMyrnode] (cc1_in) [above of=vc3,xshift=1 cm]{};
\node (lcc1)[left of =cc1,xshift=1.6cm]{};
\node[ocnode] (vc1) [above of=cc1,xshift=-1 cm]{};
\node[sMyrnode] (vc1_in) [above of=cc1,xshift=-1 cm]{};
\node[cnode] (vc2) [above of=cc1,xshift=1 cm]{};
\node[sMyrnode] (vc2_in) [above of=cc1,xshift=1 cm]{};

\draw (cc1) -- (vc1);
\draw (cc1) -- (vc2);
\draw (cc1) -- (vc3);
\draw (cc1) -- (vc4);

\node[secnode] (vv1) [right of=vc4,yshift=-1 cm]{};
\node[ssecnode] (vv1_in) [right of=vc4,yshift=-1 cm]{};
\node (lvv1)[below of =vv1,xshift=0cm,yshift=1.8cm]{};
\node[gcnode] (vv2) [below of=vv1,yshift=0cm,xshift=1.2cm]{};
\node[ssecnode] (vv2_in) [below of=vv1,yshift=0cm,xshift=1.2cm]{};
\node (lvv2)[below of =vv2,yshift=1.7cm,xshift=0cm]{};

\draw (vv1) -- (vc4);
\draw (vv2) -- (vc3);

\node[cnode] (vv6) [right of=vc2,yshift=0.75cm]{};
\node[ssecnode] (vv6_in) [right of=vc2,yshift=0.75cm]{};
\node (lvv6)[left of =vv6,yshift=-0cm,xshift=2cm,yshift=0.23cm]{};
\node[ocnode] (vv7) [above of=vv6,yshift=-0.25cm,xshift=1.2cm]{};
\node[ssecnode] (vv7_in) [above of=vv6,yshift=-0.25cm,xshift=1.2cm]{};
\node (lvv7)[left of =vv7,yshift=0.25cm,xshift=1.9cm]{};

\draw (vv6) -- (vc2);
\draw (vv7) -- (vc1);

\node[secnode] (vc9) [right of=vc4,xshift=3.5cm]{};
\node[sprnode] (vc9_in) [right of = vc4,xshift=3.5cm]{};
\node[gcnode] (vc10) [right of=vc9,xshift=0cm]{};
\node[sprnode] (vc10_in) [right of = vc9,xshift=0cm]{};
\node[olrnode] (cc3) [above of=vc10,xshift=-1cm]{};
\node[sprnode] (cc3_in) [above of = vc10,xshift=-1cm]{};
\node (lcc3)[above of =cc3,xshift=0.35cm,yshift=-1.9cm]{};
\node[cnode] (vc11) [above of=cc3,xshift=-1cm]{};
\node[sprnode] (vc11_in) [above of = cc3,xshift=-1cm]{};
\node[ocnode] (vc12) [above of=cc3,xshift=1cm]{};
\node[sprnode] (vc12_in) [above of = cc3,xshift=1cm]{};

\draw (cc3) -- (vc9);
\draw (cc3) -- (vc10);
\draw (cc3) -- (vc11);
\draw (cc3) -- (vc12);

\node[olrnode] (cc5) [right of=cc3,xshift=1 cm,yshift=-0.5cm]{};
\node[svrnode] (cc5_in) [right of = cc3,xshift=1 cm,yshift=-0.5cm]{};
\node (lcc5)[right of =cc5,xshift=-1.5cm,yshift=-0cm]{};
\node[ocnode] (vc17) [above of=cc5,xshift=1cm,yshift=0.5cm]{};
\node[svrnode] (vc17_in) [above of = cc5,xshift=1cm,yshift=0.5cm]{};
\node[cnode] (vc18) [above of=cc5,xshift=-1cm,yshift=0.5cm]{};
\node[svrnode] (vc18_in) [above of = cc5,xshift=-1cm,yshift=0.5cm]{};
\node[gcnode] (vc19) [below of=cc5,xshift=1cm,yshift=0 cm]{};
\node[svrnode] (vc19_in) [below of = cc5,xshift=1cm,yshift=0 cm]{};
\node[secnode] (vc20) [below of=cc5,xshift=-1cm,yshift=0 cm]{};
\node[svrnode] (vc20_in) [below of = cc5,xshift=-1cm,yshift=0 cm]{};

\draw (cc5) -- (vc17);
\draw (cc5) -- (vc18);
\draw (cc5) -- (vc19);
\draw (cc5) -- (vc20);

\draw (vv1) --(vc9);
\draw (vv1.0) --(vc20.270);

\draw (vv2) --(vc10);
\draw (vv2.0) --(vc19);

\draw (vv6) --(vc11);
\draw (vv7) --(vc17.90);
\draw (vv6) --(vc18.120);
\draw (vv7) --(vc12);

\node[olrnode] (cv1) [right of =cc1,xshift=1.5cm]{};
\node[ssecnode] (cv1_in) [right of =cc1,xshift=1.5cm]{};
\draw[dashed] (cv1) -- node[above,midway,yshift=-0.1cm]{} (cc1);
\draw[dashed] (cv1) -- node[above,midway,yshift=-0.1cm]{} (cc3);
\draw[dashed] (cv1) -- node[below,near start,yshift=0.1cm]{} (cc5);
\draw[dashed] (cv1) -- node[left,midway,xshift=0.1cm]{} (vv1);
\draw[dashed] (cv1) -- node[right,near start,xshift=-0.1cm]{} (vv2);
\draw[dashed] (cv1) -- node[left,midway,xshift=0.1cm]{} (vv6);
\draw[dashed] (cv1) -- node[right,near start,xshift=-0.1cm]{} (vv7);
\node[scale=0.75] [left of =cv1,xshift=1.5 cm,yshift=0.2cm]{$c^1v^2_1$};

\node[gcnode] (vc21) [above of=vc1,yshift=5cm] {};
\node[sMyrnode] (vc21_in) at (vc21.center){};
\node[secnode] (vc22) [right of=vc21,xshift=-0 cm]{};
\node[sMyrnode] (vc22_in) [right of=vc21,xshift=-0 cm]{};
\node[olrnode] (cc6) [above of=vc21,xshift=1 cm]{};
\node[sMyrnode] (cc6_in) [above of=vc21,xshift=1 cm]{};
\node[ocnode] (vc23) [above of=cc6,xshift=-1 cm]{};
\node[sMyrnode] (vc23_in) [above of=cc6,xshift=-1 cm]{};
\node[cnode] (vc24) [above of=cc6,xshift=1 cm]{};
\node[sMyrnode] (vc24_in) [above of=cc6,xshift=1 cm]{};

\draw (cc6) -- (vc21);
\draw (cc6) -- (vc22);
\draw (cc6) -- (vc23);
\draw (cc6) -- (vc24);

\node[secnode] (vv8) [right of=vc22,yshift=-1 cm]{};
\node[sgcnode] (vv8_in) [right of=vc22,yshift=-1 cm]{};
\node[gcnode] (vv9) [below of=vv8,yshift=0cm,xshift=1.2cm]{};
\node[sgcnode] (vv9_in) [below of=vv8,yshift=0cm,xshift=1.2cm]{};

\draw (vv8) -- (vc22);
\draw (vv9) -- (vc21);

\node[cnode] (vv10) [right of=vc24,yshift=0.75cm]{};
\node[sgcnode] (vv10_in) [right of=vc24,yshift=0.75cm]{};
\node[ocnode] (vv11) [above of=vv10,yshift=-0.25cm,xshift=1.2cm]{};
\node[sgcnode] (vv11_in) [above of=vv10,yshift=-0.25cm,xshift=1.2cm]{};

\draw (vv10) -- (vc24);
\draw (vv11) -- (vc23);

\node[secnode] (vc25) [right of=vc22,xshift=3.5cm]{};
\node[scrnode] (vc25_in) [right of = vc22,xshift=3.5cm]{};
\node[gcnode] (vc26) [right of=vc25,xshift=0cm]{};
\node[scrnode] (vc26_in) [right of = vc25,xshift=0cm]{};
\node[olrnode] (cc7) [above of=vc26,xshift=-1cm]{};
\node[scrnode] (cc7_in) [above of = vc26,xshift=-1cm]{};
\node[cnode] (vc27) [above of=cc7,xshift=-1cm]{};
\node[scrnode] (vc27_in) [above of = cc7,xshift=-1cm]{};
\node[ocnode] (vc28) [above of=cc7,xshift=1cm]{};
\node[scrnode] (vc28_in) [above of = cc7,xshift=1cm]{};

\draw (cc7) -- (vc25);
\draw (cc7) -- (vc26);
\draw (cc7) -- (vc27);
\draw (cc7) -- (vc28);

\node[olrnode] (cc8) [right of=cc7,xshift=1 cm,yshift=-0.5cm]{};
\node[srnode] (cc8_in) [right of = cc7,xshift=1 cm,yshift=-0.5cm]{};
\node[ocnode] (vc29) [above of=cc8,xshift=1cm,yshift=0.5cm]{};
\node[srnode] (vc29_in) [above of = cc8,xshift=1cm,yshift=0.5cm]{};
\node[cnode] (vc30) [above of=cc8,xshift=-1cm,yshift=0.5cm]{};
\node[srnode] (vc30_in) [above of = cc8,xshift=-1cm,yshift=0.5cm]{};
\node[gcnode] (vc31) [below of=cc8,xshift=1cm,yshift=0 cm]{};
\node[srnode] (vc31_in) [below of = cc8,xshift=1cm,yshift=0 cm]{};
\node[secnode] (vc32) [below of=cc8,xshift=-1cm,yshift=0 cm]{};
\node[srnode] (vc32_in) [below of = cc8,xshift=-1cm,yshift=0 cm]{};

\draw (cc8) -- (vc29);
\draw (cc8) -- (vc30);
\draw (cc8) -- (vc31);
\draw (cc8) -- (vc32);

\draw (vv8) --(vc25);
\draw (vv8.0) --(vc32.270);

\draw (vv9) --(vc26);
\draw (vv9.0) --(vc31);

\draw (vv10) --(vc27);
\draw (vv11) --(vc29.90);
\draw (vv10) --(vc30.120);
\draw (vv11) --(vc28);

\node[olrnode] (cv2) [right of =cc6,xshift=1.5cm]{};
\node[sgcnode] (cv2_in) [right of =cc6,xshift=1.5cm]{};
\draw[dashed] (cv2) -- node[above,midway,yshift=-0.1cm]{} (cc6);
\draw[dashed] (cv2) -- node[above,midway,yshift=-0.1cm]{} (cc7);
\draw[dashed] (cv2) -- node[below,near start,yshift=0.1cm]{} (cc8);
\draw[dashed] (cv2) -- node[left,midway,xshift=0.1cm]{} (vv8);
\draw[dashed] (cv2) -- node[right,near start,xshift=-0.1cm]{} (vv9);
\draw[dashed] (cv2) -- node[left,midway,xshift=0.1cm]{} (vv10);
\draw[dashed] (cv2) -- node[right,near start,xshift=-0.1cm]{} (vv11);
\node[scale=0.75] [left of =cv2,xshift=1.5 cm,yshift=0.2cm]{$c^1v^2_2$};
\end{scope}    
\end{tikzpicture}
            \subcaption{\emph{Stabilizer-induced} TSs induced by the $X$-type stabilizer generators $c^1v^2_1$ and $c^1v^2_2$ within $\mathscr{G}_{\mathrm{z}}$. All variable nodes in these two induced subgraphs also appear in Fig.~\ref{fig:multiple_copies_of_classical_TS_inside_stab_induced_TS}. The CC-type variable node represented by \input{Figures/classical_and_stab_induced_TS/shape_CC_type_node_olive_Myr} is common to both subgraphs.}
            \label{fig:stab_induced_TS_inside_the_bigger_TS}
        \end{subfigure}
        \caption{Illustration of a \emph{stabilizer-induced} TS containing multiple copies of a classical TS from $\mathscr{G}_2$.  }
        \label{fig:classical_and_stab_included_TS}
    \end{figure*}

Consider the product of a degree-four check node, $c^1$, from $\mathscr{G}_1$ with a classical TS, $\mathscr{T}(\mathcal{V}^2_{\mathrm{s}})$, from $\mathscr{G}_2$, as shown in Figure~\ref{fig:multiple_copies_of_classical_TS_inside_stab_induced_TS}. To simplify the pictorial representation, $X$-type (or CV-type) generators and their edges have been omitted. Since $c^1$ has degree four, the graph contains four identical copies of $\mathscr{T}(\mathcal{V}^2_{\mathrm{s}})$. These isomorphic copies are denoted by $v^1_{\mathrm{i}} \times \mathscr{T}(\mathcal{V}^2_{\mathrm{s}})$ for $v^1_{\mathrm{i}} \in \mathcal{N}_{c^1}$. In each copy, the variable nodes from $\mathscr{T}(\mathcal{V}^2_{\mathrm{s}})$ are transformed into VV-type variable nodes, while the check nodes are transformed into VC-type check nodes. The edges are therefore represented as edges between VV-type variable nodes and VC-type check nodes. This observation is formalized in Part~1 of Lemma~\ref{lemma:stab_induced_TS_containg_classical_TS}, described later in this section. Furthermore, note that two isomorphic copies of $\mathscr{T}(\mathcal{V}^2_{\mathrm{s}})$ do not share nodes or edges; this is formalized in Part~2 of Lemma~\ref{lemma:stab_induced_TS_containg_classical_TS}.

The graph in Figure~\ref{fig:multiple_copies_of_classical_TS_inside_stab_induced_TS} is induced by X-type stabilizer generators (CV-type checks) $c^1 v_1^2$, $c^1 v^2_2$, $c^1 v^2_3$, and $c^1 v_4^2$. This can be verified by checking that the variable nodes in the graph coincide exactly with those present in the subgraphs induced by these generators. To facilitate this verification, the nodes in the figure are color-coded, since the X-type stabilizer generators themselves are not shown in Figure~\ref{fig:multiple_copies_of_classical_TS_inside_stab_induced_TS}. As an example, consider the subgraph induced by the X-type stabilizer generator $c^1 v^2_1$, shown in Figure~\ref{fig:stab_induced_TS_inside_the_bigger_TS}.
In this subgraph, the outer circle in the representation of VV-type variable nodes is derived from the four neighboring variable nodes of $c^1$, represented as \tikzstyle{cnode}=[circle,minimum size=0.65 cm,draw,blue,scale=0.44]
\begin{tikzpicture}[every node/.style={scale=1}]
\begin{scope}[node distance=1.5 cm,semithick]
\node[cnode](bc){};
\end{scope}
\end{tikzpicture}, \tikzstyle{gcnode}=[circle,minimum size=0.65 cm,draw,green,scale=0.44]
\begin{tikzpicture}[every node/.style={scale=1}]
\begin{scope}[node distance=1.5 cm,semithick]
\node[gcnode](gc){};
\end{scope}
\end{tikzpicture}, \tikzstyle{ocnode}=[circle,minimum size=0.65 cm,draw,orange,scale=0.44]
\begin{tikzpicture}[every node/.style={scale=1}]
\begin{scope}[node distance=1.5 cm,semithick]
\node[ocnode](oc){};
\end{scope}
\end{tikzpicture} and \tikzstyle{secnode}=[circle,minimum size=0.65 cm,draw,Sepia,scale=0.44]
\begin{tikzpicture}[every node/.style={scale=1}]
\begin{scope}[node distance=1.5 cm,semithick]
\node[secnode](sec){};
\end{scope}
\end{tikzpicture}. The inner circle is derived from $v^2_1$, represented by . Similarly, the outer rectangle in the representation of CC-type variable nodes is derived from $c^1$, represented as \tikzstyle{olrnode}=[rectangle,draw,minimum width=0.1 cm,minimum height=0.1cm,outer sep=0pt,olive,scale=1]
\begin{tikzpicture}[every node/.style={scale=1}]
\begin{scope}[node distance=1.5 cm,semithick]
\node[olrnode](olr){};
\end{scope}
\end{tikzpicture}, and the inner rectangle is derived from one of the three neighboring check nodes of $v^2_1$, represented as \tikzstyle{brnode}=[rectangle,draw,minimum width=0.1 cm,minimum height=0.1cm,outer sep=0pt,black,scale=1]
\begin{tikzpicture}[every node/.style={scale=1}]
\begin{scope}[node distance=1.5 cm,semithick]
\node[brnode](br){};
\end{scope}
\end{tikzpicture}, \tikzstyle{crnode}=[rectangle,draw,minimum width=0.1 cm,minimum height=0.1cm,outer sep=0pt,cyan,scale=1]
\begin{tikzpicture}[every node/.style={scale=1}]
\begin{scope}[node distance=1.5 cm,semithick]
\node[crnode](Myr){};
\end{scope}
\end{tikzpicture}, and \tikzstyle{Myrnode}=[rectangle,draw,minimum width=0.1 cm,minimum height=0.1cm,outer sep=0pt,Mulberry,scale=1]
\begin{tikzpicture}[every node/.style={scale=1}]
\begin{scope}[node distance=1.5 cm,semithick]
\node[Myrnode](Myr){};
\end{scope}
\end{tikzpicture}.
From this, one can verify that the variable nodes connected to $c^1 v^2_1$ are in $\mathscr{T}(c^1) \times \mathscr{T}(\mathcal{V}^2_{\mathrm{s}})$, which is shown in Figure~\ref{fig:multiple_copies_of_classical_TS_inside_stab_induced_TS}. Although not explicitly stated for brevity, the analogous properties hold for variable nodes connected to X-type stabilizers $c^1 v^2_2$, $c^1 v^2_3$, and $c^1 v^2_4$. These can be used to confirm that these four X-type stabilizer generators indeed induce the graph in Figure~\ref{fig:multiple_copies_of_classical_TS_inside_stab_induced_TS}. This observation is formalized in Part~3 of Lemma~\ref{lemma:stab_induced_TS_containg_classical_TS}. Finally, observe that every CC-type variable node is connected to one VC-type check node in each of the isomorphic copies of $\mathscr{T}(\mathcal{V}^2_{\mathrm{s}})$, as formalized in Part~6 of Lemma~\ref{lemma:stab_induced_TS_containg_classical_TS}.
\end{example}

When a \emph{stabilizer-induced} TS contains multiple isomorphic copies of a classical TS, we call a copy \emph{active} if it contains at least the critical number of errors for that underlying classical TS. We call the overall \emph{stabilizer-induced} TS \emph{active} if at least $\left\lceil |\mathcal{N}_{c^1}|/2 \right\rceil$ of these copies are active (and analogously, in the dual case, if at least $\left\lceil |\mathcal{N}_{v^2}|/2 \right\rceil$ copies are active for the relevant variable node $v^2$).
Intuitively, decoders derived from the parent-code decoders can resolve such \emph{stabilizer-induced} TSs without modification as long as the number of active copies is strictly less than $\left\lceil |\mathcal{N}_{c^1}|/2 \right\rceil$ (resp., $\left\lceil |\mathcal{N}_{v^2}|/2 \right\rceil$); when this threshold is met or exceeded, the derived decoder may fail and additional measures are needed.
The following lemma formalizes the observations made in Example~\ref{ex:classical_and_stab_induced_TS} about how multiple instances of a given TS arise in HGP codes, inherited from their parent classical LDPC codes.

\begin{lemma}
\label{lemma:stab_induced_TS_containg_classical_TS}
Consider an HGP code constructed from the graph product of two classical LDPC codes, represented by regular Tanner graphs $\mathscr{G}_1$ and $\mathscr{G}_2$. Let $\mathcal{V}_{\mathrm{s}}^2 \subset \mathcal{V}^2$ denote a subset of variable nodes of $\mathscr{G}_2$, and let $\mathscr{T}(\mathcal{V}_{\mathrm{s}}^2)$ be the subgraph induced by these nodes. For a check node $c^1$ in $\mathscr{G}_1$, denote by $\mathscr{T}(c^1)$ the subgraph it induces. 
The subgraph induced by CV-type stabilizer generators that are obtained from $c^1$ and $\mathcal{V}_{\mathrm{s}}^2$ is defined as
\[ \mathscr{T}_{\mathrm{stab}} \triangleq \bigl( \mathscr{T}(c^1) \times \mathscr{T}(\mathcal{V}_{\mathrm{s}}^2) \bigr) \cap \mathscr{G}_{\mathrm{z}}, \]
where $\mathscr{G}_{\mathrm{z}}$ is the Tanner graph corresponding to the $Z$-stabilizer generators. 
\begin{enumerate}

\item The subgraph $\mathscr{T}_{\mathrm{stab}}$ contains $\vert\mathcal{N}_{\mathrm{c}}^1\rvert$ pairwise isomorphic copies of $\mathscr{T}(\mathcal{V}^2_{\mathrm{s}})$, one for each neighboring variable node $v^1_{\mathrm{i}'}\in \mathcal{N}_{c^1}$. 
For a fixed $v^1_{\mathrm{i}'} \in \mathcal{N}_{c^1}$, each variable node $v^2\in \mathscr{T}(\mathcal{V}^2_{\mathrm{s}})$ corresponds to the VV-type variable node $v^1_{\mathrm{i}'}v^2$, and each check node $c^2_{\mathrm{j}}\in \mathscr{T}(\mathcal{V}^2_{\mathrm{s}})$ corresponds to the VC-type check node $v^1_{\mathrm{i}'}c^2_{\mathrm{j}}$.

\item These isomorphic copies do not share any common vertices, i.e., 
$\bigl(\{v_{\mathrm{i}'}^1\} \times \mathscr{T}(\mathcal{V}^2_{\mathrm{s}})\bigr) \cap 
\bigl(\{v_{\mathrm{k}'}^1\} \times \mathscr{T}(\mathcal{V}^2_{\mathrm{s}})\bigr) = \emptyset$
for all $v_{\mathrm{i}'}^1, v_{\mathrm{k}'}^1 \in \mathcal{N}_{c^1}$ with $v_{\mathrm{i}'}^1 \neq v_{\mathrm{k}'}^1$.

\item The subgraph $\mathscr{T}_{\mathrm{stab}}$ is induced by the CV-type stabilizer generators contained in $\{c^1\} \times \mathcal{V}^2_{\mathrm{s}}$. 

\item Assume that \(\mathscr{T}(\mathcal{V}_{\mathrm{s}}^2)\) is a TS in the Tanner graph \(\mathscr{G}_2\).
Then there exists a measurement schedule such that a single CNOT failure in the measurement circuit of each
CV-type stabilizer generator in the set $\{c^1\}\times \mathcal{V}^2_{\mathrm{s}}$ can corrupt all
variable nodes in $\ell$ of the isomorphic copies of $\mathscr{T}(\mathcal{V}^2_{\mathrm{s}})$
contained in $\mathscr{T}_{\mathrm{stab}}$, for some $\ell \in \{1,2,\dots,\lvert\mathcal{N}_{\mathrm{c}}^1\rvert\}$.

\item  Let $\mathcal{C}^2_{\mathrm{o}}$ denote the set of check nodes that have an odd degree in $\mathscr{T}(\mathcal{V}^2_{\mathrm{s}})$. 
The stabilizer support (i.e., the set of variable nodes on which the corresponding stabilizer vector is nonzero) generated by linear combinations of
 stabilizer generators in $\{c^1\} \times \mathcal{V}^2_{\mathrm{s}}$ then contains all VV-type variable nodes in $\mathscr{T}_{\mathrm{stab}}$, as well as all CC-type variable nodes in $\{c^1\} \times \mathcal{C}^2_{\mathrm{o}}$.
\item Each CC-type variable node $c^1c^2$ is connected to the VC-type check node $v^1_{\mathrm{i}'}c^2$ in the isomorphic copy $\{v^1_{\mathrm{i}'}\}\times \mathscr{T}(\mathcal{V}^2_{\mathrm{s}})$, for all $v^1_{\mathrm{i}'} \in \mathcal{N}_{c^1}$.


\end{enumerate}
\end{lemma}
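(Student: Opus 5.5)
The plan is to unwind the graph-product definition of the HGP Tanner graph, restricted to $\mathscr{G}_{\mathrm{z}}$, and check each part directly from the neighborhood formulas for $X$- and $Z$-checks in Section~\ref{sec:LP_codes} (with $\gamma=1$). I take $\mathscr{T}(c^1)$ to be the star on $c^1$ and $\mathcal{N}_{c^1}$. I take $\mathscr{T}(\mathcal{V}^2_{\mathrm{s}})$ to have node set $\mathcal{V}^2_{\mathrm{s}}\cup\mathcal{N}(\mathcal{V}^2_{\mathrm{s}})$ and the edges incident to $\mathcal{V}^2_{\mathrm{s}}$.

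The product of nodes gives four kinds of nodes. $\mathcal{N}_{c^1}\times\mathcal{V}^2_{\mathrm{s}}$ gives VV variable nodes. $\{c^1\}\times\mathcal{N}(\mathcal{V}^2_{\mathrm{s}})$ gives CC variable nodes. $\mathcal{N}_{c^1}\times\mathcal{N}(\mathcal{V}^2_{\mathrm{s}})$ gives VC ($Z$) checks. $\{c^1\}\times\mathcal{V}^2_{\mathrm{s}}$ gives CV ($X$) checks, which are dropped on intersecting with $\mathscr{G}_{\mathrm{z}}$. The $Z$-check neighborhood formula $\mathcal{N}_{v^1c^2}=(v^1\times\mathcal{N}_{c^2})\cup(\mathcal{N}_{v^1}\times c^2)$ shows that $\mathscr{G}_{\mathrm{z}}$ has two kinds of edges. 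An edge $v^1_{\mathrm{i}'}v^2$--$v^1_{\mathrm{i}'}c^2$ exists iff $(v^2,c^2)$ is an edge of $\mathscr{G}_2$. An edge $c^1c^2$--$v^1_{\mathrm{i}'}c^2$ exists iff $v^1_{\mathrm{i}'}\in\mathcal{N}_{c^1}$.

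Parts 1, 2 and 6 follow from this. Fixing $v^1_{\mathrm{i}'}$ and mapping $v^2\mapsto v^1_{\mathrm{i}'}v^2$, $c^2\mapsto v^1_{\mathrm{i}'}c^2$ is an edge-preserving bijection onto its image, so it gives an isomorphic copy of $\mathscr{T}(\mathcal{V}^2_{\mathrm{s}})$ (Part 1). Distinct first coordinates give disjoint node sets (Part 2). The second edge type is exactly Part 6.

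For Part 3, I compute the induced subgraph of the CV checks $c^1v^2$ with $v^2\in\mathcal{V}^2_{\mathrm{s}}$, using $\mathcal{N}_{c^1v^2}=(\mathcal{N}_{c^1}\times v^2)\cup(c^1\times\mathcal{N}_{v^2})$. The union over $v^2\in\mathcal{V}^2_{\mathrm{s}}$ is exactly the VV and CC sets above. Taking these variable nodes together with their $\mathscr{G}_{\mathrm{z}}$-neighborhoods reproduces $\mathscr{T}_{\mathrm{stab}}$. One subtlety is that a VV node $v^1_{\mathrm{i}'}v^2$ also has $\mathscr{G}_{\mathrm{z}}$-neighbors $v^1_{\mathrm{i}'}c^2$ for every $c^2\in\mathcal{N}_{v^2}$. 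These all lie in $\mathcal{N}(\mathcal{V}^2_{\mathrm{s}})$, so nothing leaks outside. The CC nodes also have $Z$-neighbors only of the form $v^1c^2$ with $v^1\in\mathcal{N}_{c^1}$, so they match as well.

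For Part 5, I take the sum of all generators in $\{c^1\}\times\mathcal{V}^2_{\mathrm{s}}$. Each VV node $v^1_{\mathrm{i}'}v^2$ lies in exactly one of these supports, so its coefficient is $1$. A CC node $c^1c^2$ lies in $c^1v^2$ for each $v^2\in\mathcal{N}_{c^2}\cap\mathcal{V}^2_{\mathrm{s}}$, so its coefficient is $\deg_{\mathscr{T}(\mathcal{V}^2_{\mathrm{s}})}(c^2)\bmod 2$. This is nonzero exactly for $c^2\in\mathcal{C}^2_{\mathrm{o}}$.

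Part 4 is the main obstacle because it is existential over schedules. For each generator $c^1v^2$, I would order its CNOTs so that the CC-type qubits come first and the VV-type qubits $v^1_{\mathrm{i}'}v^2$ come last, in a common order of $v^1_{\mathrm{i}'}\in\mathcal{N}_{c^1}$ shared by all $v^2$. An ancilla $X$ fault just before the last $\ell$ VV gates then propagates by the CNOT rules to $v^1_{\mathrm{i}'}v^2$ for the last $\ell$ choices of $v^1_{\mathrm{i}'}$. Putting one such fault in every generator with $v^2\in\mathcal{V}^2_{\mathrm{s}}$ corrupts $\{v^1_{\mathrm{i}'}\}\times\mathcal{V}^2_{\mathrm{s}}$, which is all variable nodes of $\ell$ copies, for any $\ell\in\{1,\dots,|\mathcal{N}_{c^1}|\}$. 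Distinct generators touch disjoint VV qubits, so no cancellations occur. I would then remark that this schedule is compatible with commutation and validity constraints, or simply note that the statement only asserts existence.
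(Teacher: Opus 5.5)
Your proposal is correct and follows essentially the same route as the paper: you read off the VV/CC/VC structure and the two edge types of $\mathscr{G}_{\mathrm{z}}$ from the product neighborhood formulas (Parts~1, 2, 6), match the union of the CV-check neighborhoods to the node set of $\mathscr{T}_{\mathrm{stab}}$ (Part~3), count parities in the sum of all generators in $\{c^1\}\times\mathcal{V}^2_{\mathrm{s}}$ (Part~5), and schedule the VV-type CNOTs last so that one ancilla $X$ fault per generator hooks onto $\ell$ copies (Part~4). Your explicit checks that the hook errors from distinct generators land on disjoint VV qubits (so nothing cancels), and that no $\mathscr{G}_{\mathrm{z}}$-neighbors leak outside $\mathscr{T}_{\mathrm{stab}}$, are small clarifications that the paper leaves implicit.
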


\begin{proof}\begin{enumerate}

 \item Consider a VV-type variable node $v^1 v^2$ in subgraph $\mathscr{T}_{\mathrm{stab}}$, where $v^1 \in \mathcal{N}_{c^1}$ and $v^2 \in \mathscr{T}(\mathcal{V}^2_{\mathrm{s}})$. The neighborhood of $v^1 v^2$ in $\mathscr{G}_{\mathrm{z}}$ is identical to the neighborhood of $v^2$ in $\mathscr{G}_2$ because the set of check nodes adjacent to $v^1 v^2$ in $\mathscr{G}_{\mathrm{z}}$ is given by $\{v^1\} \times \mathcal{N}_{v^2}$. Consequently, for each $v^1 \in \mathcal{N}_{c^1}$, the subgraph $\{v^1\} \times \mathscr{T}(\mathcal{V}^2_{\mathrm{s}})$ is isomorphic to $\mathscr{T}(\mathcal{V}^2_{\mathrm{s}})$, yielding $\lvert\mathcal{N}_{c^1}\rvert$, such isomorphic copies in $\mathscr{T}_{\mathrm{stab}}$.

\item From the definition of the graph product, it follows that the subgraphs $(\{v_{\mathrm{i}'}^1\} \times \mathscr{T}(\mathcal{V}^2_{\mathrm{s}}))$ and $(\{v_{\mathrm{k}'}^1\} \times \mathscr{T}(\mathcal{V}^2_{\mathrm{s}}))$ have no common vertices whenever $v^1_{\mathrm{i}'} \neq v^1_{\mathrm{k}'}$. Hence, $(\{v_{\mathrm{i}'}^1\} \times \mathscr{T}(\mathcal{V}^2_{\mathrm{s}})) \cap (\{v_{\mathrm{k}'}^1\} \times \mathscr{T}(\mathcal{V}^2_{\mathrm{s}})) = \emptyset$.

\item To show that the subgraph $\mathscr{T}_{\mathrm{stab}}$ is induced by the CV-type check nodes in $\{c^1\} \times \mathcal{V}^2_{\mathrm{s}}$, it suffices to verify that the set of variable nodes adjacent to these check nodes coincides with the set of variable nodes in the subgraph. 
From the construction of the subgraph, variable nodes within subgraph $\mathscr{T}_{\mathrm{stab}}$ are given by
\begin{align}
\label{eq:VN_subgraph}
\mathcal{V}(\mathscr{T}_{\mathrm{stab}})
&= \Bigl(\underbrace{\mathcal{N}_{c^1} \times \mathcal{V}^2_{\mathrm{s}}}_{\text{VV-type}}\Bigr)
\,\cup\,
\Bigl(\underbrace{\{c^1\} \times\Bigl( \bigcup_{v^2 \in \mathcal{V}^2_{\mathrm{s}}} \bigcup_{c^2 \in \mathcal{N}_{v^2}} c^2\Bigr)\Bigr)}_{\text{CC-type}}.
\end{align}
 Rewriting this as a union over $v^2 \in \mathcal{V}^2_{\mathrm{s}}$ yields
\begin{align}
\label{eq:VN_check_nodes}
\mathcal{V}(\mathscr{T}_{\mathrm{stab}})
&= \bigcup_{v^2 \in \mathcal{V}^2_{\mathrm{s}}}
\underbrace{\bigl(\mathcal{N}_{c^1} \times \{v^2\}\bigr)}_{\text{VV-type neighbors}}
\,\cup\,
\bigcup_{v^2 \in \mathcal{V}^2_{\mathrm{s}}}
\underbrace{\bigl(\{c^1\} \times \mathcal{N}_{v^2}\bigr)}_{\text{CC-type  neighbors}}.
\end{align}
By construction, the variable nodes described in \eqref{eq:VN_check_nodes} are precisely the neighbors of the check nodes in $\{c^1\} \times \mathcal{V}_{\mathrm{s}}^2$. Comparing \eqref{eq:VN_check_nodes} with \eqref{eq:VN_subgraph} shows that this set of neighbors coincides with the set of variable nodes of $\mathscr{T}_{\mathrm{stab}}$. Hence $\mathscr{T}_{\mathrm{stab}}$ is exactly the subgraph induced by the check-node set $\{c^1\} \times \mathcal{V}_{\mathrm{s}}^2$.

\item We want to show that there exists a measurement schedule such that a single CNOT failure in the measurement circuit of each CV-type stabilizer generator in the set $\{c^1\}\times \mathcal{V}_{\mathrm{s}}^2$ can corrupt all VV-type variable nodes in $\ell$ isomorphic copies of $\mathscr{T}(\mathcal{V}_{\mathrm{s}}^2)$.
Fix distinct neighbors $v^1_1,\dots,v^1_{\ell}\in \mathcal{N}_{c^1}$. Consider a stabilizer generator of the form $c^1 v^2_{\mathrm{i}}$ with $v^2_{\mathrm{i}}\in \mathcal{V}^2_{\mathrm{s}}$. Its measurement circuit consists of an ancilla qubit and CNOT gates from the ancilla to all variable nodes connected to it. Since the order of these CNOTs can be chosen arbitrarily, we may choose a schedule in which the $\ell$ CNOT gates
\[
\text{CNOT}(\text{ancilla}\to v^1_{\mathrm{i'}} v^2_{\mathrm{i}}), \quad \forall\, \mathrm{i}' \in [\ell]
\]
are executed last. Assume that exactly one CNOT in the measurement circuit for this stabilizer generator fails and introduces an $X$-error on the ancilla immediately before these last $\ell$ CNOTs. Then, by the hook-error propagation rule from Section~\ref{sec:hook_error}, this $X$-error is copied to the $\ell$ data qubits $v^1_1 v^2_{\mathrm{i}},\dots,v^1_{\ell} v^2_{\mathrm{i}}$.
Now repeat this construction for every stabilizer generator in the set $\{c^1\}\times \mathcal{V}^2_{\mathrm{s}}$. For each $v^2_{\mathrm{i}}\in \mathcal{V}^2_{\mathrm{s}}$, choose the same type of schedule (with the CNOTs to $v^1_1 v^2_{\mathrm{i}},\dots,v^1_{\ell} v^2_{\mathrm{i}}$ executed last) and assume one such CNOT failure.
Therefore, after performing the measurement of all these stabilizer generators, the full set of erroneous VV-type variable nodes is
\begin{align*}
\bigcup_{i'\in[\ell]} \bigl(\{v^1_{i'}\}\times \mathcal{V}^2_{\mathrm{s}}\bigr),
\end{align*}
which is the union of VV-type variable nodes in $\ell$ isomorphic copies of $\mathscr{T}(\mathcal{V}^2_{\mathrm{s}})$ contained in $\mathscr{T}_{\mathrm{stab}}$.

\item Consider the set of CV-type stabilizer generators in
\[
\{c^1\}\times \mathcal{V}^2_{\mathrm{s}}
= \{\, c^1 v^2_{\mathrm{i}} : v^2_{\mathrm{i}} \in \mathcal{V}^2_{\mathrm{s}}\,\}.
\]
The stabilizer support generated by linear combinations of these generators is the set of variable nodes
that appear in the support of an odd number of generators in $\{c^1\}\times \mathcal{V}^2_{\mathrm{s}}$.
Fix any $v^2_{\mathrm{i}} \in \mathcal{V}^2_{\mathrm{s}}$. The VV-type variable nodes in the support of the
generator $c^1 v^2_{\mathrm{i}}$ are $\mathcal{N}_{c^1} \times \{v^2_{\mathrm{i}}\}$, i.e., the VV-type
variable nodes of the form $v^1 v^2_{\mathrm{i}}$ with $v^1 \in \mathcal{N}_{c^1}$.
Now take two stabilizer generators $c^1 v^2_{\mathrm{i}}$ and $c^1 v^2_{\mathrm{k}}$ in
$\{c^1\}\times \mathcal{V}^2_{\mathrm{s}}$, with $v^2_{\mathrm{i}} \neq v^2_{\mathrm{k}}$. Their VV-type
supports,
\[
\mathcal{N}_{c^1} \times \{v^2_{\mathrm{i}}\}
\quad\text{and}\quad
\mathcal{N}_{c^1} \times \{v^2_{\mathrm{k}}\},
\]
are disjoint. Hence, no VV-type variable node in $\mathscr{T}_{\mathrm{stab}}$ can appear in more than one
stabilizer generator in $\{c^1\}\times \mathcal{V}^2_{\mathrm{s}}$. Therefore, each VV-type variable node in
$\mathscr{T}_{\mathrm{stab}}$ appears in exactly one generator of $\{c^1\}\times \mathcal{V}^2_{\mathrm{s}}$,
and thus appears in the support of an odd number of generators. It follows that the stabilizer support
contains all VV-type variable nodes present in $\mathscr{T}_{\mathrm{stab}}$.
We now determine the CC-type variable nodes in $\mathscr{T}_{\mathrm{stab}}$ that lie in the support of an
odd number of generators in $\{c^1\}\times \mathcal{V}^2_{\mathrm{s}}$. A CC-type variable node in
$\mathscr{T}_{\mathrm{stab}}$ has the form $c^1 c^2_{\mathrm{j}}$, where $c^2_{\mathrm{j}}$ is a check node
in $\mathscr{T}(\mathcal{V}^2_{\mathrm{s}})$. Such a node belongs to the support of a generator
$c^1 v^2_{\mathrm{i}} \in \{c^1\}\times \mathcal{V}^2_{\mathrm{s}}$ if and only if the check node
$c^2_{\mathrm{j}}$ is adjacent to $v^2_{\mathrm{i}}$ in the Tanner graph of the second code, i.e.,
\[
 c^1 c^2_{\mathrm{j}} \in \operatorname{supp}(c^1 v^2_{\mathrm{i}})
\quad\Longleftrightarrow\quad
 c^2_{\mathrm{j}} \in \mathcal{N}_{v^2_{\mathrm{i}}}.
\]
Equivalently, the number of generators in $\{c^1\}\times \mathcal{V}^2_{\mathrm{s}}$ whose support contains
$c^1 c^2_{\mathrm{j}}$ is exactly the degree of $c^2_{\mathrm{j}}$ in $\mathscr{T}(\mathcal{V}^2_{\mathrm{s}})$.
Thus, $c^1 c^2_{\mathrm{j}}$ is contained in the support of an odd number of these stabilizer generators if and
only if the degree of $c^2_{\mathrm{j}}$ in $\mathscr{T}(\mathcal{V}^2_{\mathrm{s}})$ is odd. By definition,
the set of such check nodes is $\mathcal{C}^2_{\mathrm{o}}$. Therefore, a CC-type variable node
$c^1 c^2_{\mathrm{j}}$ lies in the stabilizer support if and only if $c^2_{\mathrm{j}} \in \mathcal{C}^2_{\mathrm{o}}$.
This characterizes completely which VV- and CC-type variable nodes are in the stabilizer support, with the role
of VV-nodes determined by the non-intersecting support of stabilizer generators and the role of CC-nodes
determined by odd-degree check nodes in $\mathscr{T}(\mathcal{V}^2_{\mathrm{s}})$.

\item The set of VC-type check nodes connected to the CC-type variable node \(c^1 c^2\) is
\(\mathcal{N}_{c^1} \times \{c^2\}\).
Recall that the isomorphic copies of \(\mathscr{T}(\mathcal{V}_{\mathrm{s}}^2)\) within
\(\mathscr{T}_{\mathrm{stab}}\) are given by
\[
\{v^1_{\mathrm{i}'}\}\,\times \,\mathscr{T}(\mathcal{V}_{\mathrm{s}}^2), \quad v^1_{\mathrm{i}'} \in \mathcal{N}_{c^1}.
\]
Since the check node \(c^2\) is contained in \(\mathscr{T}(\mathcal{V}_{\mathrm{s}}^2)\), it follows that for each
\(v^1_{\mathrm{i}'} \in \mathcal{N}_{c^1}\), the VC-type check node \(v^1_{\mathrm{i}'} c^2\) belongs to the
corresponding isomorphic copy \(\{v^1_{\mathrm{i}'}\}\times \mathscr{T}(\mathcal{V}_{\mathrm{s}}^2)\).
\end{enumerate}

\end{proof}
\begin{remark}
Part~4 of Lemma~\ref{lemma:stab_induced_TS_containg_classical_TS} analyzes a fixed syndrome-measurement schedule and shows that the number of CNOT failures needed to activate multiple isomorphic copies of an inherited TS does not grow with the number of copies. A natural question is whether this non-scaling is an artifact of that particular schedule. In fact, there are many other schedules with the same behavior. It is unclear whether there exists a schedule for which the minimum number of CNOT failures required to activate a fixed number of inherited TS copies grows with the number of available copies; we do not know the answer.
Even if schedules exist that suppress the activation of certain inherited TSs, optimizing a single schedule to avoid activating all inherited TSs is challenging: a schedule that suppresses activation of one inherited TS may inadvertently activate a different one, and each parent code typically contains many distinct TSs. Accordingly, in this work we do not attempt to optimize the measurement schedule. Instead, we design decoders that resolve inherited TSs when they appear in $\mathscr{T}_{\mathrm{stab}}$, making schedule optimization unnecessary to avoid these structures in our setting.
We also note that for HGP codes, the effective distance is invariant under the choice of the measurement schedule~\cite{manes2025distance}. Consequently, by resolving \emph{stabilizer-induced} TSs directly, the decoders considered in this work provide error-correction capabilities that grow with the effective distance.
\end{remark}

In an HGP code, the Tanner graphs $\mathscr{G}_{\mathrm{x}}$ and $\mathscr{G}_{\mathrm{z}}$ associated with the $X$- and $Z$-stabilizer generators arise as graph products of two parent LDPC codes with Tanner graphs $\mathscr{G}_1$ and $\mathscr{G}_2$. Lemma~\ref{lemma:stab_induced_TS_containg_classical_TS} shows that a classical TS from $\mathscr{G}_2$ can reappear inside a subgraph induced by an $X$-type stabilizer in $\mathscr{G}_{\mathrm{z}}$, due to the presence of multiple isomorphic copies of $\mathscr{G}_2$ in $\mathscr{G}_{\mathrm{z}}$. Similarly, $\mathscr{G}_1^\mathsf{T}$ also appears multiple times in $\mathscr{G}_{\mathrm{z}}$, so isomorphic copies of a classical TS from $\mathscr{G}_1^\mathsf{T}$ can occur within an $X$-type \emph{stabilizer-induced} TS.
The difference between \emph{stabilizer-induced} TSs that contain multiple copies inherited from $\mathscr{G}_2$ and those inherited from $\mathscr{G}_1^\mathsf{T}$ lies in the role of the variable nodes: for TSs inherited from $\mathscr{G}_2$, the variable nodes of the classical TS become VV-type nodes and CC-type nodes connect different copies, whereas for TSs inherited from $\mathscr{G}_1^\mathsf{T}$ the roles are reversed. These observations are formalized in Appendix~\ref{sec:lemma2}.

To illustrate this, consider Figure~\ref{fig:multiple_copies_of_classical_TS_inside_stab_induced_TS_CC}, which depicts the graph product of the classical TS $\mathscr{T}(\mathcal{C}^1_\mathrm{s})$ from $\mathscr{G}_1^\mathsf{T}$ with a variable node $v^2$ from $\mathscr{G}_2$. We denote this graph by $\mathscr{T}(\mathcal{C}^1_\mathrm{s}) \times \mathscr{T}(v^2)$. The graph contains three isomorphic, pairwise disjoint copies, namely $\mathscr{T}(\mathcal{C}^1_\mathrm{s}) \times c^2_{\mathrm{j}}$ for $c^2_{\mathrm{j}} \in \mathcal{N}_{v^2}$. It is generated by CV-type stabilizer generators $c^1_1 v^2, c^1_2 v^2$, and $c^1_3 v^2$ (see Example~\ref{ex:classical_and_stab_induced_TS} for a color-coded verification). These properties are formally stated in Lemma~\ref{lemma:stab_induced_TS_containg_classical_TS_CC} in Appendix~\ref{sec:lemma2}.
Analogous structural results for \emph{stabilizer-induced} TSs extend to LP codes and can be established by a similar argument combined with \cite[Lemma~4]{pradhan2025_TS_left_right_schedule_decoder}.
\begin{figure*}
   \begin{subfigure}{0.6\textwidth}
   \centering
       \input{Figures/classical_and_stab_induced_TS_CC/copies_of_CC_Type_classical_TS}
       \subcaption{The figure shows how the graph product of a TS $\mathscr{T}(\mathcal{C}^1_{\mathrm{s}})$ from $\mathscr{G}_1^\mathsf{T}$ with a degree-three variable node $v^2$ from $\mathscr{G}_2$ results in a graph that contains three isomorphic copies of the TS. These copies are $\mathscr{T}(\mathcal{C}^1_{\mathrm{s}})\times c^2_{\mathrm{j}}$ for $c^2_{\mathrm{j}} \in \mathcal{N}_{v^2}$. To reduce clutter, only the part of the resulting graph containing $Z$-stabilizers is shown. The same graph is induced in $\mathscr{G}_{\mathrm{z}}$ by the $X$-type stabilizer generators $c_1^1v^2, c_2^1v^2$, and $c_3^1v^2$; see Fig.~\ref{fig:stab_induced_TS_inside_bigger_TS} for two induced subgraphs. Since this structure is induced by CV-type stabilizer generators, it can be activated by hook errors introduced during their measurement.}
       \label{fig:multiple_copies_of_classical_TS_inside_stab_induced_TS_CC}
       \end{subfigure}
       \hfill
       \begin{subfigure}{0.3\textwidth}
       \input{Figures/classical_and_stab_induced_TS_CC/stabilizer_induced_TS}
           \subcaption{\emph{stabilizer-induced} subgraphs generated by the CV-type stabilizer generators $c^1_1v^2$ and $c^1_2v^2$ within $\mathscr{G}_{\mathrm{z}}$. Observe that all variable nodes in these induced subgraphs also appear in Fig.~\ref{fig:multiple_copies_of_classical_TS_inside_stab_induced_TS_CC}. The CC-type variable node represented by \tikzstyle{gcnode}=[circle,minimum size=0.5 cm,draw,green,scale=0.6]
\tikzstyle{sblnode}=[circle,minimum size=0.15 cm,draw,blue,scale=0.6]
\begin{tikzpicture}[every node/.style={scale=1}]
\begin{scope}[node distance=1.5 cm,semithick]
\node[gcnode](olr){};
\node[sblnode] (sbc) at (olr.center){};
\end{scope}
\end{tikzpicture} is shared by both induced subgraphs.}
           \label{fig:stab_induced_TS_inside_bigger_TS}
       \end{subfigure}
        \caption{Illustration of a \emph{stabilizer-induced} TS containing multiple copies of a classical TS from $\mathscr{G}_1^\mathsf{T}$.}
        \label{fig:classical_and_stab_included_TS_CC}
   \end{figure*}
Lemma~\ref{lemma:stab_induced_TS_containg_classical_TS} quantifies, for the families considered there, how many CNOT failures during syndrome measurement are required to induce the corresponding \emph{stabilizer-induced} TSs. As discussed in Section~\ref{sec:motivation_for_proposed_decoder}, addressing such large \emph{stabilizer-induced} structures via OSD can require increasing the OSD order with the number of inducing stabilizer generators, which is not viable for HGP/LP decoding.

Next, we illustrate a decoding approach that builds on good decoders for the parent LDPC codes and can resolve TSs like the one shown in Fig.~\ref{fig:multiple_copies_of_classical_TS_inside_stab_induced_TS_CC} without relying on OSD and without requiring optimization of the measurement schedule.

\subsection{Deriving QLDPC Decoders from Parent Code Decoders}
\label{sec:derived_decoders}

We first describe properties of HGP/LP codes that allow us to derive decoders from those of the parent classical codes. Recall that the Tanner graph \(\mathscr{G}_\mathrm{z}\) of an HGP (respectively, LP) code contains \(|\mathcal{V}_1|\) (respectively, \(\lvert\underline{\mathcal{V}}_1\rvert\)) disjoint copies of \(\mathscr{G}_2\) (respectively, \(\underline{\mathscr{G}}_2\)). As in Part~2 of Lemma~\ref{lemma:stab_induced_TS_containg_classical_TS}, these copies of \(\mathscr{G}_2\) (respectively, \(\underline{\mathscr{G}}_2\)) are pairwise disjoint, and the union of their variable nodes is exactly the full set of VV-type variable nodes of \(\mathscr{G}_\mathrm{z}\).
Every check node in \(\mathscr{G}_\mathrm{z}\) is adjacent to at least one VV-type variable node.
Thus, we can apply a parent-code decoder \(D_2\), originally defined on \(\mathscr{G}_2\), independently to each embedded copy of \(\mathscr{G}_2\) to estimate errors at the corresponding VV-type variable nodes. Equivalently, this defines a decoder acting on the subgraph of \(\mathscr{G}_\mathrm{z}\) induced by VV-type variable nodes, denoted \(\mathscr{G}_\mathrm{z}(\mathcal{VV})\). We denote this decoder by \(D_{\mathrm{vv}}\).
Similarly, starting from a decoder $D_1^\mathsf{T}$ defined on \(\mathscr{G}_1^{\mathsf{T}}\), we obtain a decoder \(D_{\mathrm{cc}}\) that acts on the subgraph of \(\mathscr{G}_\mathrm{z}\) induced by CC-type variable nodes, denoted \(\mathscr{G}_\mathrm{z}(\mathcal{CC})\).
We view \(D_{\mathrm{vv}}\) and \(D_{\mathrm{cc}}\) as functions, written \(D_{\mathrm{vv}}(\cdot)\) and \(D_{\mathrm{cc}}(\cdot)\), that take a syndrome vector as input and output error estimates on VV-type and CC-type variable nodes, respectively. A decoder for \(\mathscr{G}_\mathrm{z}\) is then obtained by running \(D_{\mathrm{vv}}\) on \(\mathscr{G}_\mathrm{z}(\mathcal{VV})\) and \(D_{\mathrm{cc}}\) on \(\mathscr{G}_\mathrm{z}(\mathcal{CC})\) alternately until the estimated syndrome matches the measured syndrome.

This alternating procedure is effective only under the following condition: if an error pattern has support on both VV-type and CC-type variable nodes, then no check node is adjacent to both a VV-type variable node in the support of the error and a CC-type variable node in the support of the error. When decoding the TSs considered in this section, we enforce this condition by preprocessing (Steps~6--12 and Steps~18--24) that flips selected variable-node estimates so that every check node is adjacent exclusively to VV-type nodes in support of the error or exclusively to CC-type nodes in support of the error, but never to both.
The next three lemmas formalize (i) when such a purely-VV syndrome (or purely-CC) compatible representative of the error pattern exists, and (ii) why the preprocessing steps in Algorithm~\ref{alg:derived_decoders_from_parent_decoders} guarantee their existence for the \emph{stabilizer-induced} TS families studied here.
The decoding procedure is summarized in Algorithm~\ref{alg:derived_decoders_from_parent_decoders}. Following construction in \eqref{eq:LP_construction}, the error associated with the VV-type variable node \(v^1_{\mathrm{i}'}v^2_{\mathrm{i}}\) is the \((n_2(i'-1)+i)\)-th entry of the error vector \(\mathbf{e}\). Similarly, the error associated with the CC-type variable node \(c^1_{\mathrm{j}'}c^2_{\mathrm{j}}\) is the \((n_1n_2 + (j'-1)m_2 + j)\)-th entry of \(\mathbf{e}\).

\begin{algorithm}[!ht]
\footnotesize
\DontPrintSemicolon
   \KwInput{Measured syndrome $\boldsymbol{\sigma}$, Maximum number of iterations $Maxiter$}
  \KwOutput{Error estimate $\hat{\mathbf{e}}$}
  \KwData{Tanner graph $\mathscr{G}_{\mathrm{z}}$, Decoders $D_{\mathrm{vv}}(\cdot)$ and $D_{\mathrm{cc}}(\cdot)$. }
  \KwVars{Intermediate error estimate on VV-type (respectively, CC-type) variable nodes $\tilde{\mathbf{e}}_{\mathrm{vv}}$ (respectively, $\tilde{\mathbf{e}}_{\mathrm{cc}}$),  Intermediate error estimate $\tilde{\mathbf{e}}=\begin{bmatrix}\tilde{\mathbf{e}}_{\mathrm{vv}} & \tilde{\mathbf{e}}_{\mathrm{cc}}\end{bmatrix}$, Mismatched syndrome $\tilde{\boldsymbol{\sigma}}$, Estimated syndrome $\hat{\boldsymbol{\sigma}}$}
  \KwInits{$\tilde{\boldsymbol{\sigma}} =\boldsymbol{\sigma}$, $\hat{\mathbf{e}}=\boldsymbol{0}$}
  
  Run decoder $D_{\mathrm{cc}}$ on Tanner graph $\mathscr{G}_{\mathrm{z}}(\mathcal{CC})$ to obtain error estimate $\hat{\mathbf{e}}_{\mathrm{cc}}$ given syndrome $\tilde{\boldsymbol{\sigma}}$, i.e., $\hat{\mathbf{e}}_{\mathrm{cc}}=D_{\mathrm{cc}}(\tilde{\boldsymbol{\sigma}})$.\;
   Update the estimated error $\hat{\mathbf{e}} \leftarrow \hat{\mathbf{e}} + \begin{bmatrix}
      \boldsymbol{0} & \hat{\mathbf{e}}_{\mathrm{cc}}
  \end{bmatrix}$.\;
   Compute the updated mismatched syndrome $\tilde{\boldsymbol{\sigma}} \leftarrow \tilde{\boldsymbol{\sigma}}+\mathbf{H}_{\mathrm{z}}\begin{bmatrix}
      \boldsymbol{0}\\
       \hat{\mathbf{e}}_{\mathrm{cc}}^\mathsf{T}
  \end{bmatrix}$.\;
  \If{$\tilde{\boldsymbol{\sigma}}=\boldsymbol{0}$}
  {
    Declare that decoding is successful and \textbf{Exit}
  }
  \For{ each $c^1_{\mathrm{l}'}\in \mathcal{C}^1$}
  {
    \If{there exist more than \(\left\lfloor\left\lvert\mathcal{N}_{c^1_{\mathrm{l}'}}\right\rvert/2\right\rfloor\) variable nodes \(v^1_{\mathrm{i}'} \in \mathcal{N}_{c^1_{\mathrm{l}'}}\) such that, for each of these nodes, there exists at least one VC-type check of the form \(v^1_{\mathrm{i}'} c^2_\mathrm{j}\), for some \(c^2_\mathrm{j} \in \mathcal{C}^2\), that is unsatisfied (i.e., $\tilde{\boldsymbol{\sigma}}(m_2(i'-1)+j)=1$)}
    {
        Randomly choose an unsatisfied VC-type check of the form $v_{\mathrm{k}'}^1c^2_{\mathrm{j}}$ for $v_{\mathrm{k}'}^1  \in \mathcal{N}_{c^1_{\mathrm{l}'}} $ and $c^2_{\mathrm{j}} \in \mathcal{C}^2$ \;
        \For{ each $c_{\mathrm{l}}^2\in \mathcal{C}^2$}
        {
            \If{$\tilde{\boldsymbol{\sigma}}(m_2(k'-1)+l)=1$}
            {
               Flip the error estimate on CC-type variable node $c^1_{\mathrm{l}'}c^2_{\mathrm{l}}$, i.e., $\hat{\mathbf{e}}(n_1n_2+(l'-1)m_2+l) \leftarrow \hat{\mathbf{e}}({n_1n_2+\mathbf{e}}((l'-1)m_2+l) \oplus 1. $ \;
               Update the syndrome vector accordingly $\tilde{\boldsymbol{\sigma}}=\tilde{\boldsymbol{\sigma}}\oplus \mathbf{H}(:,n_1n_2+(l'-1)m_2+l)$.\;
            }
        }
    }
  }
  Run decoder $D_{\mathrm{vv}}$ on Tanner graph $\mathscr{G}_{\mathrm{z}}(\mathcal{VV})$ to obtain error estimate $\hat{\mathbf{e}}_{\mathrm{vv}}$ given syndrome $\tilde{\boldsymbol{\sigma}}$, i.e., $\hat{\mathbf{e}}_{\mathrm{vv}}=D_{\mathrm{vv}}(\tilde{\boldsymbol{\sigma}})$.\;
  Update the estimated error $\hat{\mathbf{e}} \leftarrow \hat{\mathbf{e}} + \begin{bmatrix}
      \hat{\mathbf{e}}_{\mathrm{vv}} & \boldsymbol{0}
  \end{bmatrix}$.\;
  Compute the updated mismatched syndrome $\tilde{\boldsymbol{\sigma}} \leftarrow \tilde{\boldsymbol{\sigma}}+\mathbf{H}_{\mathrm{z}}\begin{bmatrix}
      \hat{\mathbf{e}}_{\mathrm{vv}}^\mathsf{T}\\ \boldsymbol{0}
  \end{bmatrix}$.\;
   \If{$\tilde{\boldsymbol{\sigma}}=\boldsymbol{0}$}
  {
    Declare that decoding is successful and \textbf{Exit}
  }
 \For{ each $v^2_{\mathrm{k}}\in \mathcal{V}^2$}
  {
    \If{there exist more than \(\left\lfloor\left\lvert\mathcal{N}_{v^2_{\mathrm{k}}}\right\rvert/2\right\rfloor\) check nodes \(c^2_{\mathrm{j}} \in \mathcal{N}_{v^2_{\mathrm{k}}}\) such that, for each of these nodes, there exists at least one VC-type check of the form \(v^1_{\mathrm{i}'} c^2_\mathrm{j}\), for some \(v^1_{\mathrm{i}'} \in \mathcal{V}^1\), that is unsatisfied (i.e., $\tilde{\boldsymbol{\sigma}}(m_2(i'-1)+j)=1$)}
    {
         Randomly choose an unsatisfied VC-type check of the form $v_{\mathrm{i}'}^1c^2_{\mathrm{l}}$ for $c_{\mathrm{l}}^2  \in \mathcal{N}_{v^2_{\mathrm{k}}} $ and $v^1_{\mathrm{i}'} \in \mathcal{V}^1$ \;
        \For{each $v^1_{\mathrm{k}'} \in \mathcal{V}^1$}
        {
            \If{$\tilde{\boldsymbol{\sigma}}(m_2(k'-1)+l)=1$}
            {
                Flip error estimate on $v^1_{\mathrm{k}'}v^2_{\mathrm{k}}$, i.e.,$\hat{\mathbf{e}}(n_2(k'-1)+k)\leftarrow\hat{\mathbf{e}}(n_2(k'-1)+k)\oplus 1.$\;
                Update syndrome $\tilde{\boldsymbol{\sigma}}=\tilde{\boldsymbol{\sigma}}\oplus \mathbf{H}((:,(n_2(k'-1)+k).$\;
            }
        }
    }
   
  }
   Repeat Steps $1$ to $5$.\;
  \caption{Decoder that resolves TSs containing multiple copies of a TS inherited from a parent code.}
  \label{alg:derived_decoders_from_parent_decoders}
\end{algorithm}

Figure~\ref{fig:decoding_illustration_of_stab_induced_TS_with_multiple_classical_TS} illustrates, by means of an example, the operation of the decoder described in Algorithm~\ref{alg:derived_decoders_from_parent_decoders}. These ideas are formalized in Lemmas~\ref{lemma:error_exclusively_on_VV_condition} and~\ref{lemma:decoder_resolving_classical_plus_stab_TS}. In particular, Part~2 of Lemma~\ref{lemma:error_exclusively_on_VV_condition} shows that Steps~$7$ and~$19$ are sufficient to determine whether there exists an error pattern exclusively on VV-type or CC-type variable nodes that is consistent with the measured syndrome. When such an error pattern does not exist, Lemma~\ref{lemma:decoder_resolving_classical_plus_stab_TS} shows that the preprocessing steps in Steps~$8$--$12$ or Steps~$20$--$24$ ensure the existence of such an error pattern. The decoder then resolves the TSs considered in Lemma~\ref{lemma:stab_induced_TS_containg_classical_TS} and Lemma~\ref{lemma:stab_induced_TS_containg_classical_TS_CC}.

\begin{figure*}
    \begin{subfigure}{0.99\textwidth}
    \centering
    \tikzstyle{cnode}=[circle,minimum size=0.55 cm,draw,black,scale=0.33]
\tikzstyle{scnode}=[circle,minimum size=0.15 cm,draw,black,scale=0.33]
\tikzstyle{sb_cnode}=[circle,minimum size=0.15 cm,draw,black,fill=gray,scale=0.28]
\tikzstyle{rnode}=[rectangle,draw,minimum width=0.4 cm,minimum height=0.4cm,outer sep=0pt,scale=0.4]
\tikzstyle{rrnode}=[rectangle,red,draw,minimum width=0.4 cm,minimum height=0.4cm,outer sep=0pt,scale=0.4]
\tikzstyle{srnode}=[rectangle,draw,minimum width=0.1 cm,minimum height=0.1cm,outer sep=0pt,scale=0.33]
\tikzstyle{sb_rnode}=[rectangle,draw,minimum width=0.1 cm,minimum height=0.1cm,outer sep=0pt,fill=gray,scale=0.33]

\tikzstyle{hexagon}=[regular polygon, regular polygon sides=6, shape aspect=0.5, minimum width=1.5cm, minimum height=1cm, draw,shape border rotate=60,scale=0.33]
\tikzstyle{Ellipse}=[ellipse,minimum width=3cm,minimum height=1.25cm,draw,scale=0.33]
\tikzstyle{smallEllipse}=[ellipse,minimum width=1.75cm,minimum height=1.25cm,draw,scale=0.33]
\definecolor{color1}{rgb}{1,0.2,0.3}
\definecolor{color2}{rgb}{0.4,0.5,0.7}
\definecolor{color3}{rgb}{0.1,0.8,0.5}
\definecolor{color4}{rgb}{0.5,0.3,1}
\definecolor{color5}{rgb}{0.5,1,1}
\definecolor{color6}{rgb}{0.8,0.3,0.6}
\definecolor{color7}{rgb}{0.6,0.4,0.3}

\begin{tikzpicture}[every node/.style={scale=0.9}]
\begin{scope}[node distance=1.5 cm,semithick]

\node[cnode](cv3){};
\node[srnode](cv3_in) at (cv3.center) {};
\node[cnode](vv1)[right of=cv3,xshift=2cm]{};
\node[sb_cnode](vv1_in) at (vv1.center) {};
\node[cnode](cv2)[right of=cv3,yshift=-0.5cm]{};
\node[srnode](cv2_in) at (cv2.center) {};
\node[cnode](cv1)[right of=vv1,yshift=0 cm]{};
\node[srnode](cv2_in) at (cv1.center) {};
\node[cnode](vv2)[right of=cv1,xshift=0cm]{};
\node[sb_cnode](vv2_in) at (vv2.center) {};
\node[cnode](cv4)[right of=vv2,yshift=-0.5cm]{};
\node[srnode](cv4_in) at (cv4.center) {};
\node[cnode](cv5)[right of=vv2,xshift=2cm]{};
\node[srnode](cv5_in) at (cv5.center) {};
\node[cnode] (vv3) [below of=cv1,yshift=-0.5cm]{};
\node[sb_cnode](vv3_in) at (vv3.center) {};
\node[cnode] (vv4) [below of=cv1,yshift=-2.5cm]{};
\node[sb_cnode](vv4_in) at (vv4.center) {};
\node[cnode] (cv6) [below of=vv3,yshift=0.5 cm]{};
\node[sb_rnode](cv6_in) at (cv6.center) {};
\node[cnode] (cv7) [below of=vv4,yshift=0.5 cm]{};
\node[srnode](cv7_in) at (cv7.center) {};
\draw (vv1)--(cv1);
\draw (vv1)--(cv2);
\draw (vv1)--(cv3);
\draw (vv2)--(cv1);
\draw (vv2)--(cv4);
\draw (vv2)--(cv5);
\draw (vv3)--(cv2);
\draw (vv3)--(cv4);
\draw (vv4)--(cv3);
\draw (vv4)--(cv5);
\draw (vv3)--(cv6);
\draw (vv4)--(cv7);


\node[cnode](cv10)[right of =cv5]{};
\node[srnode](cv10_in) at (cv10.center) {};
\node[cnode](vv5)[right of=cv10,xshift=2cm]{};
\node[sb_cnode](vv5_in) at (vv5.center) {};
\node[cnode](cv9)[right of=cv10,yshift=-0.5cm]{};
\node[srnode](cv9_in) at (cv9.center) {};
\node[cnode](cv8)[right of=vv5,yshift=0 cm]{};
\node[srnode](cv8_in) at (cv8.center) {};
\node[cnode](vv6)[right of=cv8,xshift=0cm]{};
\node[sb_cnode](vv6_in) at (vv6.center) {};
\node[cnode](cv11)[right of=vv6,yshift=-0.5cm]{};
\node[srnode](cv11_in) at (cv11.center) {};
\node[cnode](cv12)[right of=vv6,xshift=2cm]{};
\node[srnode](cv12_in) at (cv12.center) {};
\node[cnode] (vv7) [below of=cv8,yshift=-0.5cm]{};
\node[sb_cnode](vv7_in) at (vv7.center) {};
\node[cnode] (vv8) [below of=cv8,yshift=-2.5cm]{};
\node[sb_cnode](vv8_in) at (vv8.center) {};
\node[cnode] (cv13) [below of=vv7,yshift=0.5 cm]{};
\node[sb_rnode](cv13_in) at (cv13.center) {};
\node[cnode] (cv14) [below of=vv8,yshift=0.5 cm]{};
\node[srnode](cv14_in) at (cv14.center) {};
\draw (vv5)--(cv8);
\draw (vv5)--(cv9);
\draw (vv5)--(cv10);
\draw (vv6)--(cv8);
\draw (vv6)--(cv11);
\draw (vv6)--(cv12);
\draw (vv7)--(cv9);
\draw (vv7)--(cv11);
\draw (vv8)--(cv10);
\draw (vv8)--(cv12);
\draw (vv7)--(cv13);
\draw (vv8)--(cv14);

\node[cnode](cv17)[right of =cv12]{};
\node[sb_rnode](cv17_in) at (cv17.center) {};
\node[cnode](vv9)[right of=cv17,xshift=2cm]{};
\node[sb_cnode](vv9_in) at (vv9.center) {};
\node[cnode](cv16)[right of=cv17,yshift=-0.5cm]{};
\node[sb_rnode](cv16_in) at (cv16.center) {};
\node[cnode](cv15)[right of=vv9,yshift=0 cm]{};
\node[srnode](cv15_in) at (cv15.center) {};
\node[cnode](vv10)[right of=cv15,xshift=0cm]{};
\node[sb_cnode](vv10_in) at (vv10.center) {};
\node[cnode](cv18)[right of=vv10,yshift=-0.5cm]{};
\node[sb_rnode](cv18_in) at (cv18.center) {};
\node[cnode](cv19)[right of=vv10,xshift=2cm]{};
\node[sb_rnode](cv19_in) at (cv19.center) {};
\node[cnode] (vv11) [below of=cv15,yshift=-0.5cm]{};
\node[scnode](vv11_in) at (vv11.center) {};
\node[cnode] (vv12) [below of=cv15,yshift=-2.5cm]{};
\node[scnode](vv12_in) at (vv12.center) {};
\node[cnode] (cv20) [below of=vv11,yshift=0.5 cm]{};
\node[srnode](cv20_in) at (cv20.center) {};
\node[cnode] (cv21) [below of=vv12,yshift=0.5 cm]{};
\node[sb_rnode](cv21_in) at (cv21.center) {};
\draw (vv9)--(cv15);
\draw (vv9)--(cv16);
\draw (vv9)--(cv17);
\draw (vv10)--(cv15);
\draw (vv10)--(cv18);
\draw (vv10)--(cv19);
\draw (vv11)--(cv16);
\draw (vv11)--(cv18);
\draw (vv12)--(cv17);
\draw (vv12)--(cv19);
\draw (vv11)--(cv20);
\draw (vv12)--(cv21);

\node[cnode](cv24)[right of =cv19]{};
\node[srnode](cv24_in) at (cv24.center) {};
\node[cnode](vv13)[right of=cv24,xshift=2cm]{};
\node[scnode](vv13_in) at (vv13.center) {};
\node[cnode](cv23)[right of=cv24,yshift=-0.5cm]{};
\node[srnode](cv23_in) at (cv23.center) {};
\node[cnode](cv22)[right of=vv13,xshift=0.5 cm]{};
\node[srnode](cv22_in) at (cv22.center) {};
\node[cnode](vv14)[right of=cv22,xshift=0.5cm]{};
\node[scnode](vv14_in) at (vv14.center) {};
\node[cnode](cv25)[right of=vv14,yshift=-0.5cm]{};
\node[srnode](cv25_in) at (cv25.center) {};
\node[cnode](cv26)[right of=vv14,xshift=2cm]{};
\node[srnode](cv26_in) at (cv26.center) {};
\node[cnode] (vv15) [below of=cv22,yshift=-0.5cm]{};
\node[scnode](vv15_in) at (vv15.center) {};
\node[cnode] (vv16) [below of=cv22,yshift=-2.5cm]{};
\node[scnode](vv16_in) at (vv16.center) {};
\node[cnode] (cv27) [below of=vv15,yshift=0.5 cm]{};
\node[srnode](cv27_in) at (cv27.center) {};
\node[cnode] (cv28) [below of=vv16,yshift=0.5 cm]{};
\node[sb_rnode](cv28_in) at (cv28.center) {};
\draw (vv13)--(cv22);
\draw (vv13)--(cv23);
\draw (vv13)--(cv24);
\draw (vv14)--(cv22);
\draw (vv14)--(cv25);
\draw (vv14)--(cv26);
\draw (vv15)--(cv23);
\draw (vv15)--(cv25);
\draw (vv16)--(cv24);
\draw (vv16)--(cv26);
\draw (vv15)--(cv27);
\draw (vv16)--(cv28);

\node[rnode] (cc1)[below of=cv14,xshift=5 cm,yshift=-1.5cm]{};
\node[srnode] (cc1_in) at (cc1.center){};
\node[rrnode] (cc2)[below of=cc1,yshift=0.5 cm]{};
\node[sb_rnode] (cc2_in) at (cc2.center){};
\draw (cv6) -- (cc1.90);
\draw (cv7) -- (cc2.90);
\draw (cv13) -- (cc1.90);
\draw (cv14) -- (cc2.90);
\draw (cv20) -- (cc1.90);
\draw (cv21) -- (cc2.90);
\draw (cv27) -- (cc1.90);
\draw (cv28) -- (cc2.90);
\node[rrnode] (cc3)[above of=vv6,xshift=0.5 cm,yshift=3cm]{};
\node[srnode](cc3_in) at (cc3.center) {};
\node[rrnode] (cc4)[right of=cc3,xshift=0 cm,yshift=0cm]{};
\node[srnode](cc4_in) at (cc4.center) {};
\node[rnode] (cc5)[right of=cc4,xshift=0 cm,yshift=0cm]{};
\node[srnode](cc5_in) at (cc5.center) {};
\node[rrnode] (cc6)[right of=cc5,xshift=0 cm,yshift=0cm]{};
\node[srnode](cc6_in) at (cc6.center) {};
\node[rrnode] (cc7)[right of=cc6,xshift=0 cm,yshift=0cm]{};
\node[srnode](cc7_in) at (cc7.center) {};
\node[scale=0.55] [below of =cv1,yshift=0 cm,yshift=1cm]{$v^1_1\times\mathscr{T}(\mathcal{V}^2_\mathrm{s})$};
\node[scale=0.55] [below of =cv8,yshift=0 cm,yshift=1cm]{$v^1_2\times\mathscr{T}(\mathcal{V}^2_\mathrm{s})$};
\node[scale=0.55] [below of =cv15,yshift=0 cm,yshift=1cm]{$v^1_3\times\mathscr{T}(\mathcal{V}^2_\mathrm{s})$};
\node[scale=0.55] [below of =cv22,yshift=0 cm,yshift=1cm]{$v^1_4\times\mathscr{T}(\mathcal{V}^2_\mathrm{s})$};
\draw (cv3) --(cc3);
\draw (cv10) --(cc3);
\draw (cv17) --(cc3);
\draw (cv24) --(cc3);
\draw (cv2.90) --(cc4);
\draw (cv9) --(cc4);
\draw (cv16) --(cc4);
\draw (cv23.90) --(cc4);
\draw (cv1) --(cc5);
\draw (cv8) --(cc5);
\draw (cv15) --(cc5);
\draw (cv22) --(cc5);

\draw (cv4.60) --(cc6);
\draw (cv11) --(cc6);
\draw (cv18) --(cc6);
\draw (cv25.200) --(cc6);
\draw (cv5) --(cc7);
\draw (cv12) --(cc7);
\draw (cv19) --(cc7);
\draw (cv26) --(cc7);

\end{scope}    
\end{tikzpicture}
    \subcaption{Iteration~$1$: In the first two classical TSs (from the left), all VV-type variable nodes are in error; in the third, only the two VV-type variable nodes are in error; in the fourth, none are in error. The decoder is initialized with the estimated error vector set to zero, so the mismatched and measured syndromes coincide. For each $v^1_{\mathrm{i}'} \in \mathcal{N}_{c^1}$, there is at least one unsatisfied check in $v^1_{\mathrm{i}'} \times \mathscr{T}(\mathcal{V}^2_{\mathrm{s}})$. Hence, the decoder selects a neighbor of $c^1$ (say $v^1_3$) and flips the CC-type variable nodes connected to unsatisfied checks of form $v^1_3 c^2_j$ (i.e., the unsatisfied checks in $v^1_3 \times \mathscr{T}(\mathcal{V}^2_{\mathrm{s}})$). The outer boundary of the CC-type variable nodes to be flipped is highlighted in red.}
    \label{fig:iteration1}
    \end{subfigure}
    \hfill \begin{subfigure}{0.99\textwidth}
    \centering
    \tikzstyle{cnode}=[circle,minimum size=0.55 cm,draw,black,scale=0.33]
\tikzstyle{rcnode}=[circle,minimum size=0.55 cm,draw,red,scale=0.33]
\tikzstyle{scnode}=[circle,minimum size=0.15 cm,draw,black,scale=0.33]
\tikzstyle{sb_cnode}=[circle,minimum size=0.15 cm,draw,black,fill=gray,scale=0.28]
\tikzstyle{rnode}=[rectangle,draw,minimum width=0.4 cm,minimum height=0.4cm,outer sep=0pt,scale=0.4]
\tikzstyle{srnode}=[rectangle,draw,minimum width=0.1 cm,minimum height=0.1cm,outer sep=0pt,scale=0.33]
\tikzstyle{sb_rnode}=[rectangle,draw,minimum width=0.1 cm,minimum height=0.1cm,outer sep=0pt,fill=gray,scale=0.33]

\tikzstyle{hexagon}=[regular polygon, regular polygon sides=6, shape aspect=0.5, minimum width=1.5cm, minimum height=1cm, draw,shape border rotate=60,scale=0.33]
\tikzstyle{Ellipse}=[ellipse,minimum width=3cm,minimum height=1.25cm,draw,scale=0.33]
\tikzstyle{smallEllipse}=[ellipse,minimum width=1.75cm,minimum height=1.25cm,draw,scale=0.33]
\definecolor{color1}{rgb}{1,0.2,0.3}
\definecolor{color2}{rgb}{0.4,0.5,0.7}
\definecolor{color3}{rgb}{0.1,0.8,0.5}
\definecolor{color4}{rgb}{0.5,0.3,1}
\definecolor{color5}{rgb}{0.5,1,1}
\definecolor{color6}{rgb}{0.8,0.3,0.6}
\definecolor{color7}{rgb}{0.6,0.4,0.3}

\begin{tikzpicture}[every node/.style={scale=0.9}]
\begin{scope}[node distance=1.5 cm,semithick]

\node[cnode](cv3){};
\node[sb_rnode](cv3_in) at (cv3.center) {};
\node[cnode](vv1)[right of=cv3,xshift=2cm]{};
\node[sb_cnode](vv1_in) at (vv1.center) {};
\node[cnode](cv2)[right of=cv3,yshift=-0.5cm]{};
\node[sb_rnode](cv2_in) at (cv2.center) {};
\node[cnode](cv1)[right of=vv1,yshift=0 cm]{};
\node[srnode](cv1_in) at (cv1.center) {};
\node[cnode](vv2)[right of=cv1,xshift=0cm]{};
\node[sb_cnode](vv2_in) at (vv2.center) {};
\node[cnode](cv4)[right of=vv2,yshift=-0.5cm]{};
\node[sb_rnode](cv4_in) at (cv4.center) {};
\node[cnode](cv5)[right of=vv2,xshift=2cm]{};
\node[sb_rnode](cv5_in) at (cv5.center) {};
\node[rcnode] (vv3) [below of=cv1,yshift=-0.5cm]{};
\node[sb_cnode](vv3_in) at (vv3.center) {};
\node[rcnode] (vv4) [below of=cv1,yshift=-2.5cm]{};
\node[sb_cnode](vv4_in) at (vv4.center) {};
\node[cnode] (cv6) [below of=vv3,yshift=0.5 cm]{};
\node[sb_rnode](cv6_in) at (cv6.center) {};
\node[cnode] (cv7) [below of=vv4,yshift=0.5 cm]{};
\node[sb_rnode](cv7_in) at (cv7.center) {};
\draw (vv1)--(cv1);
\draw (vv1)--(cv2);
\draw (vv1)--(cv3);
\draw (vv2)--(cv1);
\draw (vv2)--(cv4);
\draw (vv2)--(cv5);
\draw (vv3)--(cv2);
\draw (vv3)--(cv4);
\draw (vv4)--(cv3);
\draw (vv4)--(cv5);
\draw (vv3)--(cv6);
\draw (vv4)--(cv7);


\node[cnode](cv10)[right of =cv5]{};
\node[sb_rnode](cv10_in) at (cv10.center) {};
\node[cnode](vv5)[right of=cv10,xshift=2cm]{};
\node[sb_cnode](vv5_in) at (vv5.center) {};
\node[cnode](cv9)[right of=cv10,yshift=-0.5cm]{};
\node[sb_rnode](cv9_in) at (cv9.center) {};
\node[cnode](cv8)[right of=vv5,yshift=0 cm]{};
\node[srnode](cv8_in) at (cv8.center) {};
\node[cnode](vv6)[right of=cv8,xshift=0cm]{};
\node[sb_cnode](vv6_in) at (vv6.center) {};
\node[cnode](cv11)[right of=vv6,yshift=-0.5cm]{};
\node[sb_rnode](cv11_in) at (cv11.center) {};
\node[cnode](cv12)[right of=vv6,xshift=2cm]{};
\node[sb_rnode](cv12_in) at (cv12.center) {};
\node[rcnode] (vv7) [below of=cv8,yshift=-0.5cm]{};
\node[sb_cnode](vv7_in) at (vv7.center) {};
\node[rcnode] (vv8) [below of=cv8,yshift=-2.5cm]{};
\node[sb_cnode](vv8_in) at (vv8.center) {};
\node[cnode] (cv13) [below of=vv7,yshift=0.5 cm]{};
\node[sb_rnode](cv13_in) at (cv13.center) {};
\node[cnode] (cv14) [below of=vv8,yshift=0.5 cm]{};
\node[sb_rnode](cv14_in) at (cv14.center) {};
\draw (vv5)--(cv8);
\draw (vv5)--(cv9);
\draw (vv5)--(cv10);
\draw (vv6)--(cv8);
\draw (vv6)--(cv11);
\draw (vv6)--(cv12);
\draw (vv7)--(cv9);
\draw (vv7)--(cv11);
\draw (vv8)--(cv10);
\draw (vv8)--(cv12);
\draw (vv7)--(cv13);
\draw (vv8)--(cv14);

\node[cnode](cv17)[right of =cv12]{};
\node[srnode](cv17_in) at (cv17.center) {};
\node[cnode](vv9)[right of=cv17,xshift=2cm]{};
\node[sb_cnode](vv9_in) at (vv9.center) {};
\node[cnode](cv16)[right of=cv17,yshift=-0.5cm]{};
\node[srnode](cv16_in) at (cv16.center) {};
\node[cnode](cv15)[right of=vv9,yshift=0 cm]{};
\node[srnode](cv15_in) at (cv15.center) {};
\node[cnode](vv10)[right of=cv15,xshift=0cm]{};
\node[sb_cnode](vv10_in) at (vv10.center) {};
\node[cnode](cv18)[right of=vv10,yshift=-0.5cm]{};
\node[srnode](cv18_in) at (cv18.center) {};
\node[cnode](cv19)[right of=vv10,xshift=2cm]{};
\node[srnode](cv19_in) at (cv19.center) {};
\node[cnode] (vv11) [below of=cv15,yshift=-0.5cm]{};
\node[scnode](vv11_in) at (vv11.center) {};
\node[cnode] (vv12) [below of=cv15,yshift=-2.5cm]{};
\node[scnode](vv12_in) at (vv12.center) {};
\node[cnode] (cv20) [below of=vv11,yshift=0.5 cm]{};
\node[srnode](cv20_in) at (cv20.center) {};
\node[cnode] (cv21) [below of=vv12,yshift=0.5 cm]{};
\node[srnode](cv21_in) at (cv21.center) {};
\draw (vv9)--(cv15);
\draw (vv9)--(cv16);
\draw (vv9)--(cv17);
\draw (vv10)--(cv15);
\draw (vv10)--(cv18);
\draw (vv10)--(cv19);
\draw (vv11)--(cv16);
\draw (vv11)--(cv18);
\draw (vv12)--(cv17);
\draw (vv12)--(cv19);
\draw (vv11)--(cv20);
\draw (vv12)--(cv21);

\node[cnode](cv24)[right of =cv19]{};
\node[sb_rnode](cv24_in) at (cv24.center) {};
\node[rcnode](vv13)[right of=cv24,xshift=2cm]{};
\node[scnode](vv13_in) at (vv13.center) {};
\node[cnode](cv23)[right of=cv24,yshift=-0.5cm]{};
\node[sb_rnode](cv23_in) at (cv23.center) {};
\node[cnode](cv22)[right of=vv13,xshift=0.5 cm]{};
\node[srnode](cv22_in) at (cv22.center) {};
\node[rcnode](vv14)[right of=cv22,xshift=0.5cm]{};
\node[scnode](vv14_in) at (vv14.center) {};
\node[cnode](cv25)[right of=vv14,yshift=-0.5cm]{};
\node[sb_rnode](cv25_in) at (cv25.center) {};
\node[cnode](cv26)[right of=vv14,xshift=2cm]{};
\node[sb_rnode](cv26_in) at (cv26.center) {};
\node[cnode] (vv15) [below of=cv22,yshift=-0.5cm]{};
\node[scnode](vv15_in) at (vv15.center) {};
\node[cnode] (vv16) [below of=cv22,yshift=-2.5cm]{};
\node[scnode](vv16_in) at (vv16.center) {};
\node[cnode] (cv27) [below of=vv15,yshift=0.5 cm]{};
\node[srnode](cv27_in) at (cv27.center) {};
\node[cnode] (cv28) [below of=vv16,yshift=0.5 cm]{};
\node[srnode](cv28_in) at (cv28.center) {};
\draw (vv13)--(cv22);
\draw (vv13)--(cv23);
\draw (vv13)--(cv24);
\draw (vv14)--(cv22);
\draw (vv14)--(cv25);
\draw (vv14)--(cv26);
\draw (vv15)--(cv23);
\draw (vv15)--(cv25);
\draw (vv16)--(cv24);
\draw (vv16)--(cv26);
\draw (vv15)--(cv27);
\draw (vv16)--(cv28);

\node[rnode] (cc1)[below of=cv14,xshift=5 cm,yshift=-1.5cm]{};
\node[srnode] (cc1_in) at (cc1.center){};
\node[rnode] (cc2)[below of=cc1,yshift=0.5 cm]{};
\node[srnode] (cc2_in) at (cc2.center){};
\draw (cv6) -- (cc1.90);
\draw (cv7) -- (cc2.90);
\draw (cv13) -- (cc1.90);
\draw (cv14) -- (cc2.90);
\draw (cv20) -- (cc1.90);
\draw (cv21) -- (cc2.90);
\draw (cv27) -- (cc1.90);
\draw (cv28) -- (cc2.90);
\node[rnode] (cc3)[above of=vv6,xshift=0.5 cm,yshift=3cm]{};
\node[sb_rnode](cc3_in) at (cc3.center) {};
\node[rnode] (cc4)[right of=cc3,xshift=0 cm,yshift=0cm]{};
\node[sb_rnode](cc4_in) at (cc4.center) {};
\node[rnode] (cc5)[right of=cc4,xshift=0 cm,yshift=0cm]{};
\node[srnode](cc5_in) at (cc5.center) {};
\node[rnode] (cc6)[right of=cc5,xshift=0 cm,yshift=0cm]{};
\node[sb_rnode](cc6_in) at (cc6.center) {};
\node[rnode] (cc7)[right of=cc6,xshift=0 cm,yshift=0cm]{};
\node[sb_rnode](cc7_in) at (cc7.center) {};

\draw (cv3) --(cc3);
\draw (cv10) --(cc3);
\draw (cv17) --(cc3);
\draw (cv24) --(cc3);
\draw (cv2.90) --(cc4);
\draw (cv9) --(cc4);
\draw (cv16) --(cc4);
\draw (cv23.90) --(cc4);
\draw (cv1) --(cc5);
\draw (cv8) --(cc5);
\draw (cv15) --(cc5);
\draw (cv22) --(cc5);

\draw (cv4.60) --(cc6);
\draw (cv11) --(cc6);
\draw (cv18) --(cc6);
\draw (cv25.200) --(cc6);
\draw (cv5) --(cc7);
\draw (cv12) --(cc7);
\draw (cv19) --(cc7);
\draw (cv26) --(cc7);
\node[scale=0.55] [below of =cv1,yshift=0 cm,yshift=1cm]{$v^1_1\times\mathscr{T}(\mathcal{V}^2_\mathrm{s})$};
\node[scale=0.55] [below of =cv8,yshift=0 cm,yshift=1cm]{$v^1_2\times\mathscr{T}(\mathcal{V}^2_\mathrm{s})$};
\node[scale=0.55] [below of =cv15,yshift=0 cm,yshift=1cm]{$v^1_3\times\mathscr{T}(\mathcal{V}^2_\mathrm{s})$};
\node[scale=0.55] [below of =cv22,yshift=0 cm,yshift=1cm]{$v^1_4\times\mathscr{T}(\mathcal{V}^2_\mathrm{s})$};

\end{scope}    
\end{tikzpicture}
    \subcaption{Iteration~$2$: In the first iteration, only CC-type variable nodes connected to unsatisfied checks in the structure \(v^1_3 \times \mathscr{T}(\mathcal{V}^2_{\mathrm{s}})\) are flipped. Except for the one in the lower part, these nodes are not actually erroneous and thus become part of the mismatched error (inner black shapes), together with the VV-type variable nodes that are part of mismatched error in the first iteration. Flipping them toggles the status of their neighboring checks: previously satisfied checks become unsatisfied and vice versa. After this step, only the third copy of the classical TS has no unsatisfied checks. By Part~1 of Lemma~\ref{lemma:error_exclusively_on_VV_condition}, this implies the existence of an error pattern supported solely on VV-type variable nodes whose syndrome equals the mismatched syndrome. In the second iteration, a decoder that can correct any error pattern within a classical TS is applied to the VV-type variable nodes. Consequently, for each structure \(v^1_{\mathrm{i}'} \times \mathscr{T}(\mathcal{V}_{\mathrm{s}}^2)\), with \(v^1_{\mathrm{i}'} \in \mathcal{N}_{c^1}\), the decoder estimates an error pattern on VV-type nodes (indicated by red borders) whose support matches the mismatched syndrome within that structure.}
    \label{fig:iteration2}
    \end{subfigure}
    \hfill \begin{subfigure}{0.99\textwidth}
    \centering
    \tikzstyle{cnode}=[circle,minimum size=0.55 cm,draw,black,scale=0.33]
\tikzstyle{scnode}=[circle,minimum size=0.15 cm,draw,black,scale=0.33]
\tikzstyle{sb_cnode}=[circle,minimum size=0.15 cm,draw,black,fill=gray,scale=0.28]
\tikzstyle{rnode}=[rectangle,draw,minimum width=0.4 cm,minimum height=0.4cm,outer sep=0pt,scale=0.4]
\tikzstyle{srnode}=[rectangle,draw,minimum width=0.1 cm,minimum height=0.1cm,outer sep=0pt,scale=0.33]
\tikzstyle{sb_rnode}=[rectangle,draw,minimum width=0.1 cm,minimum height=0.1cm,outer sep=0pt,fill=gray,scale=0.33]

\tikzstyle{hexagon}=[regular polygon, regular polygon sides=6, shape aspect=0.5, minimum width=1.5cm, minimum height=1cm, draw,shape border rotate=60,scale=0.33]
\tikzstyle{Ellipse}=[ellipse,minimum width=3cm,minimum height=1.25cm,draw,scale=0.33]
\tikzstyle{smallEllipse}=[ellipse,minimum width=1.75cm,minimum height=1.25cm,draw,scale=0.33]
\definecolor{color1}{rgb}{1,0.2,0.3}
\definecolor{color2}{rgb}{0.4,0.5,0.7}
\definecolor{color3}{rgb}{0.1,0.8,0.5}
\definecolor{color4}{rgb}{0.5,0.3,1}
\definecolor{color5}{rgb}{0.5,1,1}
\definecolor{color6}{rgb}{0.8,0.3,0.6}
\definecolor{color7}{rgb}{0.6,0.4,0.3}

\begin{tikzpicture}[every node/.style={scale=0.9}]
\begin{scope}[node distance=1.5 cm,semithick]

\node[cnode](cv3){};
\node[srnode](cv3_in) at (cv3.center) {};
\node[cnode](vv1)[right of=cv3,xshift=2cm]{};
\node[sb_cnode](vv1_in) at (vv1.center) {};
\node[cnode](cv2)[right of=cv3,yshift=-0.5cm]{};
\node[srnode](cv2_in) at (cv2.center) {};
\node[cnode](cv1)[right of=vv1,yshift=0 cm]{};
\node[srnode](cv2_in) at (cv1.center) {};
\node[cnode](vv2)[right of=cv1,xshift=0cm]{};
\node[sb_cnode](vv2_in) at (vv2.center) {};
\node[cnode](cv4)[right of=vv2,yshift=-0.5cm]{};
\node[srnode](cv4_in) at (cv4.center) {};
\node[cnode](cv5)[right of=vv2,xshift=2cm]{};
\node[srnode](cv5_in) at (cv5.center) {};
\node[cnode] (vv3) [below of=cv1,yshift=-0.5cm]{};
\node[scnode](vv3_in) at (vv3.center) {};
\node[cnode] (vv4) [below of=cv1,yshift=-2.5cm]{};
\node[scnode](vv4_in) at (vv4.center) {};
\node[cnode] (cv6) [below of=vv3,yshift=0.5 cm]{};
\node[srnode](cv6_in) at (cv6.center) {};
\node[cnode] (cv7) [below of=vv4,yshift=0.5 cm]{};
\node[srnode](cv7_in) at (cv7.center) {};
\draw (vv1)--(cv1);
\draw (vv1)--(cv2);
\draw (vv1)--(cv3);
\draw (vv2)--(cv1);
\draw (vv2)--(cv4);
\draw (vv2)--(cv5);
\draw (vv3)--(cv2);
\draw (vv3)--(cv4);
\draw (vv4)--(cv3);
\draw (vv4)--(cv5);
\draw (vv3)--(cv6);
\draw (vv4)--(cv7);


\node[cnode](cv10)[right of =cv5]{};
\node[srnode](cv10_in) at (cv10.center) {};
\node[cnode](vv5)[right of=cv10,xshift=2cm]{};
\node[sb_cnode](vv5_in) at (vv5.center) {};
\node[cnode](cv9)[right of=cv10,yshift=-0.5cm]{};
\node[srnode](cv9_in) at (cv9.center) {};
\node[cnode](cv8)[right of=vv5,yshift=0 cm]{};
\node[srnode](cv8_in) at (cv8.center) {};
\node[cnode](vv6)[right of=cv8,xshift=0cm]{};
\node[sb_cnode](vv6_in) at (vv6.center) {};
\node[cnode](cv11)[right of=vv6,yshift=-0.5cm]{};
\node[srnode](cv11_in) at (cv11.center) {};
\node[cnode](cv12)[right of=vv6,xshift=2cm]{};
\node[srnode](cv12_in) at (cv12.center) {};
\node[cnode] (vv7) [below of=cv8,yshift=-0.5cm]{};
\node[scnode](vv7_in) at (vv7.center) {};
\node[cnode] (vv8) [below of=cv8,yshift=-2.5cm]{};
\node[scnode](vv8_in) at (vv8.center) {};
\node[cnode] (cv13) [below of=vv7,yshift=0.5 cm]{};
\node[srnode](cv13_in) at (cv13.center) {};
\node[cnode] (cv14) [below of=vv8,yshift=0.5 cm]{};
\node[srnode](cv14_in) at (cv14.center) {};
\draw (vv5)--(cv8);
\draw (vv5)--(cv9);
\draw (vv5)--(cv10);
\draw (vv6)--(cv8);
\draw (vv6)--(cv11);
\draw (vv6)--(cv12);
\draw (vv7)--(cv9);
\draw (vv7)--(cv11);
\draw (vv8)--(cv10);
\draw (vv8)--(cv12);
\draw (vv7)--(cv13);
\draw (vv8)--(cv14);

\node[cnode](cv17)[right of =cv12]{};
\node[srnode](cv17_in) at (cv17.center) {};
\node[cnode](vv9)[right of=cv17,xshift=2cm]{};
\node[sb_cnode](vv9_in) at (vv9.center) {};
\node[cnode](cv16)[right of=cv17,yshift=-0.5cm]{};
\node[srnode](cv16_in) at (cv16.center) {};
\node[cnode](cv15)[right of=vv9,yshift=0 cm]{};
\node[srnode](cv15_in) at (cv15.center) {};
\node[cnode](vv10)[right of=cv15,xshift=0cm]{};
\node[sb_cnode](vv10_in) at (vv10.center) {};
\node[cnode](cv18)[right of=vv10,yshift=-0.5cm]{};
\node[srnode](cv18_in) at (cv18.center) {};
\node[cnode](cv19)[right of=vv10,xshift=2cm]{};
\node[srnode](cv19_in) at (cv19.center) {};
\node[cnode] (vv11) [below of=cv15,yshift=-0.5cm]{};
\node[scnode](vv11_in) at (vv11.center) {};
\node[cnode] (vv12) [below of=cv15,yshift=-2.5cm]{};
\node[scnode](vv12_in) at (vv12.center) {};
\node[cnode] (cv20) [below of=vv11,yshift=0.5 cm]{};
\node[srnode](cv20_in) at (cv20.center) {};
\node[cnode] (cv21) [below of=vv12,yshift=0.5 cm]{};
\node[srnode](cv21_in) at (cv21.center) {};
\draw (vv9)--(cv15);
\draw (vv9)--(cv16);
\draw (vv9)--(cv17);
\draw (vv10)--(cv15);
\draw (vv10)--(cv18);
\draw (vv10)--(cv19);
\draw (vv11)--(cv16);
\draw (vv11)--(cv18);
\draw (vv12)--(cv17);
\draw (vv12)--(cv19);
\draw (vv11)--(cv20);
\draw (vv12)--(cv21);

\node[cnode](cv24)[right of =cv19]{};
\node[srnode](cv24_in) at (cv24.center) {};
\node[cnode](vv13)[right of=cv24,xshift=2cm]{};
\node[sb_cnode](vv13_in) at (vv13.center) {};
\node[cnode](cv23)[right of=cv24,yshift=-0.5cm]{};
\node[srnode](cv23_in) at (cv23.center) {};
\node[cnode](cv22)[right of=vv13,xshift=0.5 cm]{};
\node[srnode](cv22_in) at (cv22.center) {};
\node[cnode](vv14)[right of=cv22,xshift=0.5cm]{};
\node[sb_cnode](vv14_in) at (vv14.center) {};
\node[cnode](cv25)[right of=vv14,yshift=-0.5cm]{};
\node[srnode](cv25_in) at (cv25.center) {};
\node[cnode](cv26)[right of=vv14,xshift=2cm]{};
\node[srnode](cv26_in) at (cv26.center) {};
\node[cnode] (vv15) [below of=cv22,yshift=-0.5cm]{};
\node[scnode](vv15_in) at (vv15.center) {};
\node[cnode] (vv16) [below of=cv22,yshift=-2.5cm]{};
\node[scnode](vv16_in) at (vv16.center) {};
\node[cnode] (cv27) [below of=vv15,yshift=0.5 cm]{};
\node[srnode](cv27_in) at (cv27.center) {};
\node[cnode] (cv28) [below of=vv16,yshift=0.5 cm]{};
\node[srnode](cv28_in) at (cv28.center) {};
\draw (vv13)--(cv22);
\draw (vv13)--(cv23);
\draw (vv13)--(cv24);
\draw (vv14)--(cv22);
\draw (vv14)--(cv25);
\draw (vv14)--(cv26);
\draw (vv15)--(cv23);
\draw (vv15)--(cv25);
\draw (vv16)--(cv24);
\draw (vv16)--(cv26);
\draw (vv15)--(cv27);
\draw (vv16)--(cv28);

\node[rnode] (cc1)[below of=cv14,xshift=5 cm,yshift=-1.5cm]{};
\node[srnode] (cc1_in) at (cc1.center){};
\node[rnode] (cc2)[below of=cc1,yshift=0.5 cm]{};
\node[srnode] (cc2_in) at (cc2.center){};
\draw (cv6) -- (cc1.90);
\draw (cv7) -- (cc2.90);
\draw (cv13) -- (cc1.90);
\draw (cv14) -- (cc2.90);
\draw (cv20) -- (cc1.90);
\draw (cv21) -- (cc2.90);
\draw (cv27) -- (cc1.90);
\draw (cv28) -- (cc2.90);
\node[rnode] (cc3)[above of=vv6,xshift=0.5 cm,yshift=3cm]{};
\node[sb_rnode](cc3_in) at (cc3.center) {};
\node[rnode] (cc4)[right of=cc3,xshift=0 cm,yshift=0cm]{};
\node[sb_rnode](cc4_in) at (cc4.center) {};
\node[rnode] (cc5)[right of=cc4,xshift=0 cm,yshift=0cm]{};
\node[srnode](cc5_in) at (cc5.center) {};
\node[rnode] (cc6)[right of=cc5,xshift=0 cm,yshift=0cm]{};
\node[sb_rnode](cc6_in) at (cc6.center) {};
\node[rnode] (cc7)[right of=cc6,xshift=0 cm,yshift=0cm]{};
\node[sb_rnode](cc7_in) at (cc7.center) {};

\draw (cv3) --(cc3);
\draw (cv10) --(cc3);
\draw (cv17) --(cc3);
\draw (cv24) --(cc3);
\draw (cv2.90) --(cc4);
\draw (cv9) --(cc4);
\draw (cv16) --(cc4);
\draw (cv23.90) --(cc4);
\draw (cv1) --(cc5);
\draw (cv8) --(cc5);
\draw (cv15) --(cc5);
\draw (cv22) --(cc5);

\draw (cv4.60) --(cc6);
\draw (cv11) --(cc6);
\draw (cv18) --(cc6);
\draw (cv25.200) --(cc6);
\draw (cv5) --(cc7);
\draw (cv12) --(cc7);
\draw (cv19) --(cc7);
\draw (cv26) --(cc7);
\node[scale=0.55] [below of =cv1,yshift=0 cm,yshift=1cm]{$v^1_1\times\mathscr{T}(\mathcal{V}^2_\mathrm{s})$};
\node[scale=0.55] [below of =cv8,yshift=0 cm,yshift=1cm]{$v^1_2\times\mathscr{T}(\mathcal{V}^2_\mathrm{s})$};
\node[scale=0.55] [below of =cv15,yshift=0 cm,yshift=1cm]{$v^1_3\times\mathscr{T}(\mathcal{V}^2_\mathrm{s})$};
\node[scale=0.55] [below of =cv22,yshift=0 cm,yshift=1cm]{$v^1_4\times\mathscr{T}(\mathcal{V}^2_\mathrm{s})$};

\end{scope}    
\end{tikzpicture}
    \subcaption{Convergence to a stabilizer: In iteration~$2$, the variable nodes with a red outer border are identified as erroneous. Among these nodes, we update the set of estimated errors by removing those that belong to the mismatched error set and adding those that do not belong to it. From Part~5 of Lemma~\ref{lemma:stab_induced_TS_containg_classical_TS}, it follows that the variable nodes in the mismatched error form a stabilizer, which shows that the decoder has converged to a stabilizer.}
    \label{fig:iteration3}
    \end{subfigure}
    \caption{Decoding iterations are shown for the \emph{stabilizer-induced} TS in Fig.~\ref{fig:multiple_copies_of_classical_TS_inside_stab_induced_TS}, which contains multiple classical TSs. The decoder consists of two classical LDPC decoders: a decoder that acts on CC-type variable nodes and a second decoder that acts on VV-type variable nodes. The latter takes as input the mismatched syndrome (discrepancies between the measured and estimated syndromes) produced by the first decoder and is derived from a decoder designed to resolve classical TSs in the underlying classical LDPC code. Variable nodes with mismatched errors (discrepancies between actual and estimated errors) are shaded gray, as are the check nodes associated with mismatched syndromes.}
    \label{fig:decoding_illustration_of_stab_induced_TS_with_multiple_classical_TS}
\end{figure*}

\begin{lemma}
    \label{lemma:error_exclusively_on_VV_condition}
  Consider the \emph{stabilizer-induced} TS $\mathscr{T}_{\mathrm{stab}} \triangleq \mathscr{T}(c^1) \times \mathscr{T}(\mathcal{V}^2_{\mathrm{s}})$ in the Tanner graph $\mathscr{G}_{\mathrm{z}}$ of the HGP/LP code defined in Lemma~\ref{lemma:stab_induced_TS_containg_classical_TS}, where $c^1$ is a check node in the parent Tanner graph $\mathscr{G}_1$, and $\mathcal{V}^2_{\mathrm{s}}$ is a subset of variable nodes $\mathcal{V}^2$ in the parent Tanner graph $\mathscr{G}_2$.
Furthermore, consider an error pattern $\bar{\mathbf{e}}$ consider as representative of its equivalent case in $\mathbb{F}_2^q/\mathcal{C}_{\mathrm{x}}^{\perp}$, where $q=m_1m_2+n_1n_2$ is the number of qubits (variable nodes) and $\mathcal{C}_{\mathrm{x}}^{\perp}$ is the row space of $\mathbf{H}_{\mathrm{x}}$. 
Assume that $\operatorname{supp}(\bar{\mathbf{e}})$ does not fully contain the support of any logical operator. We ask whether there exists an error pattern $\tilde{\mathbf{e}}$ and a vector $\mathbf{s}\in\mathcal{C}_{\mathrm{x}}^{\perp}$ such that $\tilde{\mathbf{e}} = \bar{\mathbf{e}} + \mathbf{s}$ and $\operatorname{supp}(\tilde{\mathbf{e}})$ lies exclusively on the VV-type variable nodes.
\begin{enumerate}
    \item If there exists at least one variable node $v^1_{\mathrm{i}'} \in \mathcal{N}_{c^1}$ such that all checks in $v^1_{\mathrm{i}'} \times \mathscr{T}(\mathcal{V}^2_{\mathrm{s}})$ are satisfied (i.e., there is no unsatisfied check in $v^1_{\mathrm{i}'} \times \mathscr{T}(\mathcal{V}^2_{\mathrm{s}})$), then such an error pattern $\tilde{\mathbf{e}}$ exists.
     \item If no such error pattern $\tilde{\mathbf{e}}$ exists, then for every variable node $v^1_{\mathrm{i}'} \in \mathcal{N}_{c^1}$ there is at least one unsatisfied check node of the form $v^1_{\mathrm{i}'} c^2_{\mathrm{j}}$, for some $c^2_{\mathrm{j}} \in \mathscr{T}(\mathcal{V}^2_{\mathrm{s}})$.
 \end{enumerate}
\end{lemma}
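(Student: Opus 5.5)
The plan is to prove Part~1 by an explicit construction of $\mathbf{s}$ as a sum of the CV-type generators in $\{c^1\}\times\mathcal{V}^2_{\mathrm{s}}$. Part~2 then follows as the contrapositive of Part~1. Throughout, I work in the setting in which the lemma is used: the error $\bar{\mathbf{e}}$ is confined to $\mathscr{T}_{\mathrm{stab}}$. I write its restriction as a VV part $\mathbf{e}_{\mathrm{i}'}\in\mathbb{F}_2^{\mathcal{V}^2_{\mathrm{s}}}$ on each copy $\{v^1_{\mathrm{i}'}\}\times\mathscr{T}(\mathcal{V}^2_{\mathrm{s}})$, one for each $v^1_{\mathrm{i}'}\in\mathcal{N}_{c^1}$, and a single CC part $\mathbf{x}$ indexed by the check nodes $c^2$ of $\mathscr{T}(\mathcal{V}^2_{\mathrm{s}})$, meaning $\mathbf{x}(c^2)=\bar{\mathbf{e}}(c^1c^2)$.

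First I compute the syndrome on one copy. By Part~1 of Lemma~\ref{lemma:stab_induced_TS_containg_classical_TS}, the VV neighbours of the VC check $v^1_{\mathrm{i}'}c^2$ inside $\mathscr{T}_{\mathrm{stab}}$ are the nodes $v^1_{\mathrm{i}'}v^2$ with $v^2\in\mathcal{N}_{c^2}\cap\mathcal{V}^2_{\mathrm{s}}$. By Part~6 of the same lemma, its only CC neighbour in $\mathscr{T}_{\mathrm{stab}}$ is $c^1c^2$. Therefore the syndrome bit at $v^1_{\mathrm{i}'}c^2$ equals $(\mathbf{H}_2\mathbf{e}_{\mathrm{i}'})(c^2)+\mathbf{x}(c^2)$, where $\mathbf{H}_2$ is restricted to the rows and columns of $\mathscr{T}(\mathcal{V}^2_{\mathrm{s}})$. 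Suppose every check in copy $\mathrm{i}'$ is satisfied. Then $\mathbf{x}=\mathbf{H}_2\mathbf{e}_{\mathrm{i}'}$ on all check nodes of $\mathscr{T}(\mathcal{V}^2_{\mathrm{s}})$.

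Next I build the stabilizer. Let $\mathcal{S}=\operatorname{supp}(\mathbf{e}_{\mathrm{i}'})\subseteq\mathcal{V}^2_{\mathrm{s}}$ and set $\mathbf{s}=\sum_{v^2\in\mathcal{S}}\mathbf{H}_{\mathrm{x}}(c^1v^2,:)^{\mathsf{T}}\in\mathcal{C}_{\mathrm{x}}^{\perp}$. By the neighbourhood computation in Part~5 of Lemma~\ref{lemma:stab_induced_TS_containg_classical_TS}, the CC part of $\mathbf{s}$ on $c^1c^2$ is the number of neighbours of $c^2$ in $\mathcal{S}$, taken mod $2$. This is exactly $(\mathbf{H}_2\mathbf{e}_{\mathrm{i}'})(c^2)=\mathbf{x}(c^2)$. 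The VV part of $\mathbf{s}$ is $\mathcal{N}_{c^1}\times\mathcal{S}$, because the VV supports of distinct generators are disjoint. Hence $\tilde{\mathbf{e}}=\bar{\mathbf{e}}+\mathbf{s}$ has zero CC part, so it is supported only on VV-type nodes. It lies in the same coset of $\mathcal{C}_{\mathrm{x}}^{\perp}$, so it is a valid representative. The hypothesis that no logical operator is contained in $\operatorname{supp}(\bar{\mathbf{e}})$ is needed only to make sure we are talking about the equivalence class of the actual error. I would remark that it plays no further role in the construction.

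Part~2 follows immediately. If no such $\tilde{\mathbf{e}}$ exists, then by Part~1 no copy can have all its checks satisfied. So every $v^1_{\mathrm{i}'}\in\mathcal{N}_{c^1}$ has some unsatisfied $v^1_{\mathrm{i}'}c^2_{\mathrm{j}}$.

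The main obstacle is the locality step in the syndrome computation. In the full graph $\mathscr{G}_{\mathrm{z}}$, the check $v^1_{\mathrm{i}'}c^2$ is also adjacent to VV nodes $v^1_{\mathrm{i}'}v^2$ with $v^2\notin\mathcal{V}^2_{\mathrm{s}}$. It is also adjacent to CC nodes $c'^1c^2$ for other $c'^1\in\mathcal{N}_{v^1_{\mathrm{i}'}}$. Errors at those nodes would break the identity $\mathbf{x}=\mathbf{H}_2\mathbf{e}_{\mathrm{i}'}$. I would handle this by stating explicitly that the error is supported on $\mathscr{T}_{\mathrm{stab}}$, which is the trapping-set regime the lemma addresses. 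I would also note that the construction cancels the CC errors at $c^1\times\mathcal{C}^2$ for the whole stabilizer-induced structure at once, and it does so without creating any new CC support.
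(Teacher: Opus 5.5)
Your proof is correct and follows the paper's route. It uses the same stabilizer $\mathbf{s}$: the sum of the CV-type generators $c^1v^2$ over the projection of the VV error in the clean slice $\{v^1_{\mathrm{i}'}\}\times\mathscr{T}(\mathcal{V}^2_{\mathrm{s}})$, whose CC part is matched to the erroneous CC nodes via Parts~5--6 of Lemma~\ref{lemma:stab_induced_TS_containg_classical_TS}, with Part~2 obtained as the contrapositive.

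Your identity $\mathbf{x}=\mathbf{H}_2\mathbf{e}_{\mathrm{i}'}$ compresses the paper's case analysis (empty versus nonempty VV part, and the nonzero-syndrome argument for $\mathbf{f}$). You are also right on two further points: confinement of $\bar{\mathbf{e}}$ to $\mathscr{T}_{\mathrm{stab}}$ is needed, and the paper uses it only implicitly; and the no-logical-operator hypothesis is not actually needed for the construction.
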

\begin{proof}
    Let $\operatorname{supp}(\bar{\mathbf{e}})$ denote the support of the error pattern $\bar{\mathbf{e}}$. Divide $\operatorname{supp}(\bar{\mathbf{e}})$ into two disjoint subsets $\operatorname{supp}(\bar{\mathbf{e}}_{\mathrm{vv}})$ and $\operatorname{supp}(\bar{\mathbf{e}}_{\mathrm{cc}})$ such that $\operatorname{supp}(\bar{\mathbf{e}}_{\mathrm{vv}})$ contains only VV-type variable nodes while $\operatorname{supp}(\bar{\mathbf{e}}_{\mathrm{cc}})$ contains only CC-type variable nodes.
    In Part~2 of Lemma~\ref{lemma:stab_induced_TS_containg_classical_TS}, we have seen that each VV-type variable node of $\mathscr{T}_{\mathrm{stab}}$ is contained in exactly one subgraph of the form $\{v^1_{\mathrm{i}'}\} \,\times \, \mathscr{T}(\mathcal{V}^2_{\mathrm{s}})$ for some $v^1_{\mathrm{i}'} \in \mathcal{N}_{c^1}$, and that these subgraphs are pairwise disjoint, i.e.,
\begin{equation*}
\begin{aligned}
(\{v^1_{\mathrm{i}'}\}\!\times\!\mathscr{T}(\mathcal{V}^2_{\mathrm{s}}))\!\cap\!(\{v^1_{\mathrm{k}'}\}\!\times\!\mathscr{T}(\mathcal{V}^2_{\mathrm{s}})) &= \emptyset,\\
&\forall\, v^1_{\mathrm{i}'}, v^1_{\mathrm{k}'} \in \mathcal{N}_{c^1},\; i' \neq k'.
\end{aligned}
\end{equation*}
This implies that $\operatorname{supp}(\bar{\mathbf{e}}_{\mathrm{vv}})$ can be further divided into $|\mathcal{N}_{c^1}|$ disjoint subsets, say $\operatorname{supp}(\bar{\mathbf{e}}_{\mathrm{vv},v^1_{\mathrm{i}'}})$ for $v^1_{\mathrm{i}'}\in \mathcal{N}_{c^1}$, such that $\operatorname{supp}(\bar{\mathbf{e}}_{\mathrm{vv},v^1_{\mathrm{i}'}})$ is contained in the subgraph $\{v^1_{\mathrm{i}'}\} \,\times \, \mathscr{T}(\mathcal{V}^2_{\mathrm{s}})$.
Recall that we want to establish a condition that guarantees there is an error pattern whose support lies entirely on VV-type variable nodes and that is consistent with the measured syndrome. If $\operatorname{supp}(\bar{\mathbf{e}}_{\mathrm{cc}}) = \emptyset$, the claim is trivial. Therefore, in the remainder of the proof, we assume that $\operatorname{supp}(\bar{\mathbf{e}}_{\mathrm{cc}}) \neq \emptyset$.

Let us examine the conditions under which there exists a subgraph $v^1_{\mathrm{i}'} \times \mathscr{T}(\mathcal{V}^2_{\mathrm{s}})$, for some $v^1_{\mathrm{i}'} \in \mathcal{N}_{c^1}$, that does not have unsatisfied check nodes.
Since $\operatorname{supp}(\bar{\mathbf{e}}_{\mathrm{cc}}) \neq \emptyset$, there exists at least one erroneous CC-type variable node.
From Part~6 of Lemma~\ref{lemma:stab_induced_TS_containg_classical_TS}, we know that any CC-type variable node in $\mathscr{T}_{\mathrm{stab}}$ is connected to exactly one check node in $v^1_{\mathrm{i}'} \times \mathscr{T}(\mathcal{V}^2_{\mathrm{s}})$ for each $v^1_{\mathrm{i}'} \in \mathcal{N}_{c^1}$. Moreover, if $\operatorname{supp}(\bar{\mathbf{e}}_{\mathrm{vv}, v^1_{\mathrm{i}'}}) = \emptyset$, then the only possible sources of unsatisfied check nodes in $v^1_{\mathrm{i}'} \times \mathscr{T}(\mathcal{V}^2_{\mathrm{s}})$ are erroneous CC-type variable nodes.
Consequently, if $v^1_{\mathrm{i}'} \times \mathscr{T}(\mathcal{V}^2_{\mathrm{s}})$ contains at least one erroneous CC-type variable node and $\operatorname{supp}(\bar{\mathbf{e}}_{\mathrm{vv}, v^1_{\mathrm{i}'}}) = \emptyset$, then $v^1_{\mathrm{i}'} \times \mathscr{T}(\mathcal{V}^2_{\mathrm{s}})$ must contain at least one unsatisfied check node.
So the possibility of $\operatorname{supp}(\bar{\mathbf{e}}_{\mathrm{vv}, v^1_{\mathrm{i}'}}) = \emptyset$ is ruled out when $v^1_{\mathrm{i}'} \times \mathscr{T}(\mathcal{V}^2_{\mathrm{s}})$ does not have any unsatisfied check nodes. In what follows, we consider the case $\operatorname{supp}(\bar{\mathbf{e}}_{\mathrm{vv}, v^1_{\mathrm{i}'}}) \neq \emptyset$.

Now consider the case where $\{v_{\mathrm{i}'}^1\} \times \mathscr{T}(\mathcal{V}^2_{\mathrm{s}})$ has no unsatisfied check nodes and $\operatorname{supp}(\bar{\mathbf{e}}_{\mathrm{vv},v^1_{\mathrm{i}'}}) \neq \emptyset$.
Define an error pattern $\mathbf{f}\in \mathbb{F}_2^q$ with non-empty support such that $\operatorname{supp}(\mathbf{f})=\operatorname{supp}(\bar{\mathbf{e}}_{\mathrm{vv},v^1_{\mathrm{i}'}})$. Since we consider the full syndrome in $\mathscr{G}_{\mathrm{z}}$, a zero-syndrome pattern must be a stabilizer or a logical operator. By assumption, $\mathbf{f}$ cannot be a logical operator. Moreover, $\operatorname{supp}(\mathbf{f})$ is contained in the single subgraph $\{v^1_{\mathrm{i}'}\} \times \mathscr{T}(\mathcal{V}^2_{\mathrm{s}})$, and this subgraph does not contain a CV-type stabilizer: by the definition of the neighbor set of a CV-type check node, any CV-type stabilizer involves VV-type variable nodes with more than one value of the first coordinate. 
Hence, the syndrome corresponding to $\mathbf{f}$ is non-zero.
Next, we examine how erroneous variable nodes that are not contained in $\{v^1_{\mathrm{i}'}\} \times \mathscr{T}(\mathcal{V}^2_{\mathrm{s}})$, but are contained in $\mathscr{T}_{\mathrm{stab}}$, alter the set of unsatisfied check nodes associated with the error pattern $\mathbf{f}$.
The unsatisfied check nodes caused by the error vector $\mathbf{f}$ can be written as $v^1_{\mathrm{i}'} \times \mathcal{C}_{\mathrm{o}}^2$, where $\mathcal{C}_{\mathrm{o}}^2$ is the subset of check nodes in $\mathscr{T}(\mathcal{V}^2_{\mathrm{s}})$ whose parity-check equations are violated by the VV-type errors in $\operatorname{supp}(\mathbf{f})$.
As $(v^1_{\mathrm{i}'}\times \mathcal{V}^2_{\mathrm{s}}) \cap (v^1_{\mathrm{k}'}\times \mathcal{V}^2_{\mathrm{s}}) = \emptyset$ for any distinct $v^1_{\mathrm{i}'},v^1_{\mathrm{k}'}\in \mathcal{N}_{c^1}$, erroneous variable nodes in $v^1_{\mathrm{k}'}\times \mathcal{V}^2_{\mathrm{s}}$ do not affect the number of unsatisfied check nodes in $v^1_{\mathrm{i}'}\times \mathcal{V}^2_{\mathrm{s}}$.
This further implies that, given the error pattern $\bar{\mathbf{e}}$, the set of unsatisfied check nodes in $v^1_{\mathrm{i}'} \times \mathscr{T}(\mathcal{V}^2_{\mathrm{s}})$ depends only on (i) the VV-type variable nodes in the subgraph $\{v^1_{\mathrm{i}'}\} \times \mathscr{T}(\mathcal{V}^2_{\mathrm{s}})$ and (ii) the CC-type variable nodes in $\mathscr{T}_{\mathrm{stab}}$, and is unaffected by VV-type variable nodes in $\{v^1_{\mathrm{k}'}\} \times \mathscr{T}(\mathcal{V}^2_{\mathrm{s}})$ for any $v^1_{\mathrm{k}'} \in \mathcal{N}_{c^1}$ with $\mathrm{k}'\neq \mathrm{i}'$.
Each check node $v^1_{\mathrm{i}'}c^2_{\mathrm{j}}$ in $v^1_{\mathrm{i}'} \times \mathcal{C}_{\mathrm{o}}^2$, for $c^2_{\mathrm{j}} \in \mathcal{C}_{\mathrm{o}}^2$, has only one CC-type variable node as a neighbor, namely $c^1c_{\mathrm{j}}^2$. This follows from the fact that the CC-type neighbors of $v^1_{\mathrm{i}'}c_{\mathrm{j}}^2$ in $\mathscr{T}_{\mathrm{stab}}$ are of the form $c^1c^2_{\mathrm{j}}$.
Moreover, $c^1c^2_{\mathrm{j}}$ has only one neighbor in subgraph $v^1_{\mathrm{i}'} \times \mathscr{T}(\mathcal{V}^2_{\mathrm{s}})$. Therefore, for subgraph $v^1_{\mathrm{i}'} \times \mathscr{T}(\mathcal{V}^2_{\mathrm{s}})$ to have no unsatisfied check nodes, all CC-type variable nodes in $c^1\times \mathcal{C}_{\mathrm{o}}^2$ must be erroneous.
Moreover, the fact that $v^1_{\mathrm{i}'} \times \mathscr{T}(\mathcal{V}^2_{\mathrm{s}})$ does not have any unsatisfied check nodes implies that the only erroneous CC-type variable nodes in $\mathscr{T}_{\mathrm{stab}}$ are those in $c^1 \times \mathcal{C}_{\mathrm{o}}^2$.
Any other CC-type variable node outside this set would create an additional unsatisfied check node in $v^1_{\mathrm{i}'} \times \mathscr{T}(\mathcal{V}^2_{\mathrm{s}})$, which would contradict the assumption that this subgraph has no unsatisfied check nodes.

Define the projection of $\operatorname{supp}(\mathbf{f})$ onto $\mathcal{V}^2$, denoted by $\operatorname{P}^2(\operatorname{supp}(\mathbf{f}))$, as
\begin{equation*}
\operatorname{P}^2(\operatorname{supp}(\mathbf{f}))
= \bigl\{ v^2_{\mathrm{k}} : \text{there exists } v^1_{\mathrm{k}'} v^2_{\mathrm{k}} \in \operatorname{supp}(\mathbf{f}) \bigr\}.
\end{equation*}
Recall that $\operatorname{supp}(\mathbf{f})=\operatorname{supp}(\bar{\mathbf{e}}_{\mathrm{vv},\mathrm{i}'})$. Now consider the stabilizer induced by CV-type stabilizer generators
\begin{equation*}
 c^1 \times \mathscr{T}\bigl(\operatorname{P}^2(\operatorname{supp}(\mathbf{f}))\bigr).
\end{equation*}
Let $\mathbf{s}$ be the binary representation of this induced stabilizer. 
From Part~5 of Lemma~\ref{lemma:stab_induced_TS_containg_classical_TS}, it follows that the set of CC-type variable nodes in $\operatorname{supp}(\mathbf{s})$ is $c^1 \times \mathcal{C}_{\mathrm{o}}^2$. We have shown earlier that $\operatorname{supp}(\bar{\mathbf{e}}_{\mathrm{cc}}) = c^1 \times \mathcal{C}_{\mathrm{o}}^2$. Therefore, the support of the vector $\tilde{\mathbf{e}} = \bar{\mathbf{e}} + \mathbf{s}$ contains only VV-type variable nodes, which proves the claim in Part~1. Since the statement in Part~2 is exactly the contrapositive of the statement in Part~1, its validity follows immediately and does not require a separate proof.
\end{proof}
Using arguments similar to those in the proof of Lemma~\ref{lemma:error_exclusively_on_VV_condition}, we can derive conditions that guarantee the existence of an error pattern supported only on CC-type variable nodes within the \emph{stabilizer-induced} TS $\mathscr{T}(\mathcal{C}^1_{\mathrm{s}}) \times \mathscr{T}(v^2)$, where $\mathcal{C}^1_{\mathrm{s}} \subset \mathcal{C}_1$ and $v^2 \in \mathcal{V}^2$. 
In particular, if there exists a check node $c^2_{\mathrm{j}} \in \mathcal{N}_{v^2}$ such that there are no unsatisfied check nodes in $\mathscr{T}(\mathcal{C}^1_{\mathrm{s}}) \times c^2_{\mathrm{j}}$, then there exists an error pattern, supported only on the CC-type variable nodes in $\mathscr{T}(\mathcal{C}^1_{\mathrm{s}}) \times \mathscr{T}(v^2)$, that matches the measured syndrome.

\begin{lemma}

\label{lemma:decoder_resolving_classical_plus_stab_TS}

Let $\mathscr{G}_{\mathrm{z}}$ be the $Z$-Tanner graph of an HGP/LP code obtained from the parent Tanner graphs $\mathscr{G}_1$ and $\mathscr{G}_2$. Fix $c^1\in\mathcal{C}_1$ and $\mathcal{V}^2_{\mathrm{s}}\subseteq\mathcal{V}_2$, and assume that a decoder $D_2$ on $\mathscr{G}_2$ successfully decodes every error pattern whose support is contained in the TS $\mathscr{T}(\mathcal{V}^2_{\mathrm{s}})$. Then, for any error pattern supported within the \emph{stabilizer-induced} TS $\mathscr{T}(c^1)\times\mathscr{T}(\mathcal{V}^2_{\mathrm{s}})$, Algorithm~\ref{alg:derived_decoders_from_parent_decoders} (with preprocessing Steps~6--12) produces a syndrome-compatible error pattern supported only on the VV-type variable nodes in this TS, and a subsequent run of $D_{\mathrm{vv}}$ (derived from $D_2$) corrects the error.

Similarly, fix $v^2\in\mathcal{V}_2$ and $\mathcal{C}^1_{\mathrm{s}}\subseteq\mathcal{C}_1$, and assume that a decoder $D_1^{\mathsf{T}}$ on $\mathscr{G}_1^{\mathsf{T}}$ successfully decodes every error pattern whose support is contained in the TS $\mathscr{T}(\mathcal{C}^1_{\mathrm{s}})$. Then, for any error pattern supported within $\mathscr{T}(\mathcal{C}^1_{\mathrm{s}})\times\mathscr{T}(v^2)$, preprocessing Steps~18--24 produces a syndrome-compatible error pattern supported only on the CC-type variable nodes, and one subsequent run of $D_{\mathrm{cc}}$ (derived from $D_1^{\mathsf{T}}$) corrects the error.
\end{lemma}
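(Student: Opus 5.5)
The plan is to prove the first (VV) statement in detail; the CC statement then follows by the symmetric argument. For the symmetric case we swap the roles of VV- and CC-type nodes and use $\mathscr{G}_1^{\mathsf{T}}$ in place of $\mathscr{G}_2$, the lemma in Appendix~\ref{sec:lemma2} in place of Lemma~\ref{lemma:stab_induced_TS_containg_classical_TS}, and the CC-analogue of Lemma~\ref{lemma:error_exclusively_on_VV_condition} stated after its proof. Write the error as $\bar{\mathbf{e}}=\bar{\mathbf{e}}_{\mathrm{vv}}+\bar{\mathbf{e}}_{\mathrm{cc}}$ with support in $\mathscr{T}_{\mathrm{stab}}$. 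Decompose $\bar{\mathbf{e}}_{\mathrm{vv}}=\sum_{i'}\bar{\mathbf{e}}_{\mathrm{vv},v^1_{i'}}$ over the disjoint copies $\{v^1_{i'}\}\times\mathscr{T}(\mathcal{V}^2_{\mathrm{s}})$ (Parts~1--2 of Lemma~\ref{lemma:stab_induced_TS_containg_classical_TS}). Let $\mathcal{S}\subseteq\mathcal{C}^2$ be the set of $c^2$ with $c^1c^2$ erroneous, after any earlier CC-flips by $D_{\mathrm{cc}}$ are absorbed into the current mismatched error. By Part~6 of that lemma, the mismatched syndrome restricted to copy $i'$ is $\{v^1_{i'}\}\times(\mathcal{C}_{\mathrm{o}}(\bar{\mathbf{e}}_{\mathrm{vv},v^1_{i'}})\oplus\mathcal{S})$, where $\mathcal{C}_{\mathrm{o}}(\cdot)$ is the set of checks of $\mathscr{T}(\mathcal{V}^2_{\mathrm{s}})$ violated by the projected VV-errors.

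The first case is that the test in Step~7 fails for $c^1$. Then some copy $i'$ has no unsatisfied check, and Part~1 of Lemma~\ref{lemma:error_exclusively_on_VV_condition} already supplies a stabilizer $\mathbf{s}$ with $\bar{\mathbf{e}}+\mathbf{s}$ supported on VV-nodes of $\mathscr{T}_{\mathrm{stab}}$. In the second case the test succeeds, possibly because every copy is unsatisfied, as in Part~2 of that lemma. Step~8 then picks a copy $k'$ with an unsatisfied check, and Steps~9--12 flip exactly the CC-nodes $c^1c^2_l$ with $v^1_{k'}c^2_l$ unsatisfied. This toggles $\mathcal{S}$ by $\mathcal{C}_{\mathrm{o}}(\bar{\mathbf{e}}_{\mathrm{vv},v^1_{k'}})\oplus\mathcal{S}$, so the new CC-error set is $\mathcal{S}'=\mathcal{C}_{\mathrm{o}}(\bar{\mathbf{e}}_{\mathrm{vv},v^1_{k'}})$ and copy $k'$ becomes fully satisfied.

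The key check is that $\mathcal{S}'$ is exactly the odd-degree check set of the projection $\operatorname{P}^2(\operatorname{supp}\bar{\mathbf{e}}_{\mathrm{vv},v^1_{k'}})\subseteq\mathcal{V}^2_{\mathrm{s}}$. Part~5 of Lemma~\ref{lemma:stab_induced_TS_containg_classical_TS} then gives a product of CV-generators in $\{c^1\}\times\mathcal{V}^2_{\mathrm{s}}$ whose CC-support is $\{c^1\}\times\mathcal{S}'$. Adding it to the mismatched error cancels all CC-errors and changes the VV-part only on VV-nodes of $\mathscr{T}_{\mathrm{stab}}$. The result is a syndrome-compatible pattern supported only on VV-type nodes inside the union of the copies, which is the desired representative. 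This is the same construction as in the proof of Lemma~\ref{lemma:error_exclusively_on_VV_condition}, now applied after preprocessing.

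Next, $D_{\mathrm{vv}}$ acts independently on each embedded copy of $\mathscr{G}_2$. In copy $i'$ the residual VV-error is supported in $\{v^1_{i'}\}\times\mathcal{V}^2_{\mathrm{s}}$, which is isomorphic to an error supported in $\mathscr{T}(\mathcal{V}^2_{\mathrm{s}})$, so by hypothesis $D_2$ decodes it exactly. Summing over copies, the estimate equals $\bar{\mathbf{e}}$ up to the stabilizer $\mathbf{s}$ and the preprocessing flips, so the total correction differs from the true error by an element of $\mathcal{C}_{\mathrm{x}}^{\perp}$, and decoding succeeds.

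I expect the main obstacle to be the bookkeeping in the preprocessing step. Two points need care. First, the CC-estimates produced earlier by $D_{\mathrm{cc}}$ in Steps~1--3 may lie outside $\mathscr{T}_{\mathrm{stab}}$, and the argument needs them to stay inside the TS. I would handle this by assuming, or arguing from the syndrome's support, that $D_{\mathrm{cc}}$ only flips nodes adjacent to unsatisfied checks of $\mathscr{T}_{\mathrm{stab}}$. Second, the loop of Step~6 over other $c^1_{l'}$ must not disturb the configuration. Because all unsatisfied VC-checks lie in $\mathcal{N}_{c^1}\times\mathcal{C}^2$, I would argue that for $c^1_{l'}\neq c^1$ any flips land on CC-nodes $c^1_{l'}c^2$. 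These are then absorbed by an analogous stabilizer $\{c^1_{l'}\}\times\mathcal{V}^2$-combination, or else the test fails, which requires a short case analysis on overlaps $\mathcal{N}_{c^1}\cap\mathcal{N}_{c^1_{l'}}$.
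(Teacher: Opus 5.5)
Your proposal follows the paper's proof. It uses the same dichotomy on the Step~7 count, the same appeal to Part~6 of Lemma~\ref{lemma:stab_induced_TS_containg_classical_TS} to show that Steps~8--12 clear the chosen copy $\{v^1_{\mathrm{k}'}\}\times\mathscr{T}(\mathcal{V}^2_{\mathrm{s}})$, the Part~5 stabilizer construction from the proof of Lemma~\ref{lemma:error_exclusively_on_VV_condition} (which you unroll explicitly rather than cite), and the hypothesis on $D_2$ applied copy by copy. The two loose ends you flag are not handled in the paper either, which simply asserts that nothing outside the induced subgraph is flipped. Your proposed assumption on $D_{\mathrm{cc}}$ is too weak to close the first one: $c^{1\prime}c^2$ with $c^{1\prime}\in\mathcal{N}_{v^1_{\mathrm{i}'}}\setminus\{c^1\}$ is adjacent to the unsatisfied check $v^1_{\mathrm{i}'}c^2$ yet lies outside $\mathscr{T}_{\mathrm{stab}}$. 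The second is harmless whenever $\mathscr{G}_1$ is $4$-cycle-free, since then $\lvert\mathcal{N}_{c^1_{\mathrm{l}'}}\cap\mathcal{N}_{c^1}\rvert\le 1$, so for $c^1_{\mathrm{l}'}\neq c^1$ the Step~7 count never exceeds $\lfloor\lvert\mathcal{N}_{c^1_{\mathrm{l}'}}\rvert/2\rfloor$ and no flips occur.
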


\begin{proof}

We prove the lemma for the case \(\mathscr{T}(c^1) \times \mathscr{T}(\mathcal{V}^2_{\mathrm{s}})\); the case \(\mathscr{T}(\mathcal{C}^1_{\mathrm{s}}) \times \mathscr{T}(v^2)\) is analogous.
According to Lemma~\ref{lemma:error_exclusively_on_VV_condition}, to complete the proof, it suffices to show that, after the preprocessing steps (6--12) in Algorithm~\ref{alg:derived_decoders_from_parent_decoders}, there exists a node \(v^1_{\mathrm{i}'} \in \mathcal{N}_{c^1}\) such that all checks of the form \(v^1_{\mathrm{i}'} c^2_{\mathrm{j}}\), with \(c^2_{\mathrm{j}} \in \mathscr{T}(\mathcal{V}^2_{\mathrm{s}})\), are satisfied. Indeed, by that lemma, the existence of such a node is equivalent to the existence of an error pattern supported only on VV-type variable nodes that is compatible with the measured syndrome.
Moreover, since the error patterns considered here are supported within \(\mathscr{T}(c^1) \times \mathscr{T}(\mathcal{V}^2_{\mathrm{s}})\), the syndrome is zero outside this induced subgraph; for the decoders considered, this implies that no variable nodes outside this induced subgraph are flipped, and hence preprocessing does not create unsatisfied VC-type checks outside \(\mathscr{T}(c^1) \times \mathscr{T}(\mathcal{V}^2_{\mathrm{s}})\).

We now verify that the preprocessing steps (6--12) guarantee the existence of such a node.
In Step~7, the algorithm examines whether there exist more than \(\left\lfloor\lvert\mathcal{N}_{c^1}\rvert/2\right\rfloor\) variable nodes \(v^1_{\mathrm{i}'} \in \mathcal{N}_{c^1}\) such that, for each such variable node, there exists at least one unsatisfied VC-type check node of the form \(v^1_{\mathrm{i}'} c^2_{\mathrm{j}}\), with \(c^2_{\mathrm{j}} \in \mathscr{T}(\mathcal{V}^2_{\mathrm{s}})\). If there are at most \(\left\lfloor\lvert\mathcal{N}_{c^1}\rvert/2\right\rfloor\) such variable nodes, then there exists a node \(v^1_{\mathrm{i}'}\in \mathcal{N}_{c^1}\) for which all checks \(v^1_{\mathrm{i}'} c^2_{\mathrm{j}}\) are satisfied, and we are done.
Otherwise, assume that there are more than \(\left\lfloor\lvert\mathcal{N}_{c^1}\rvert/2\right\rfloor\) variable nodes \(v^1_{\mathrm{i}'} \in \mathcal{N}_{c^1}\) such that, for each such variable node, there exists at least one unsatisfied VC-type check node of the form \(v^1_{\mathrm{i}'} c^2_{\mathrm{j}}\), with \(c^2_{\mathrm{j}} \in \mathscr{T}(\mathcal{V}^2_{\mathrm{s}})\).
Note that this subsumes the condition in Part~2 of Lemma~\ref{lemma:error_exclusively_on_VV_condition} concerning the non-existence of error patterns solely supported on VV-type variable nodes.  
In this case, the algorithm proceeds to Step~8: it randomly selects an unsatisfied VC-type check node $v^1_{\mathrm{k}'}c^2_{\mathrm{j}}$ for \(v^1_{\mathrm{k}'} \in \mathcal{N}_{c^1}\). 
Fix the variable node \(v^1_{\mathrm{k}'}\) and for each unsatisfied check of the form \(v^1_{\mathrm{k}'} c^2_{\mathrm{j}}\), Step~11 flips the corresponding CC-type variable node \(c^1 c^2_{\mathrm{j}}\).
By Part~6 of Lemma~\ref{lemma:stab_induced_TS_containg_classical_TS}, for each fixed $v^1_{\mathrm{i}'}\in\mathcal{N}_{c^1}$ and each $c^2_{\mathrm{j}}\in\mathscr{T}(\mathcal{V}^2_{\mathrm{s}})$, the CC-type variable node $c^1c^2_{\mathrm{j}}$ is adjacent to exactly one VC-type check node of the form $v^1_{\mathrm{i}'}c^2_{\mathrm{j}}$ in $\{v^1_{\mathrm{i}'}\}\times\mathscr{T}(\mathcal{V}^2_{\mathrm{s}})$. Since flipping a variable node toggles the parity of every incident check node, flipping $c^1c^2_{\mathrm{j}}$ toggles the syndrome of $v^1_{\mathrm{k}'}c^2_{\mathrm{j}}$, and it does not affect any check of the form $v^1_{\mathrm{k}'}c^2_{\mathrm{\ell}}$ with $\ell\neq j$. (Such a flip may toggle checks of the form $v^1_{\mathrm{i}'}c^2_{\mathrm{j}}$ for $v^1_{\mathrm{i}'}\neq v^1_{\mathrm{k}'}$, but this is irrelevant for our goal, which is to ensure the existence of at least one $v^1_{\mathrm{i}'}$---namely $v^1_{\mathrm{k}'}$---for which all checks $v^1_{\mathrm{i}'}c^2_{\mathrm{j}}$ are satisfied.)
Therefore, since Step~11 flips $c^1c^2_{\mathrm{j}}$ for every unsatisfied check $v^1_{\mathrm{k}'}c^2_{\mathrm{j}}$, after all these flips are performed, all checks of the form $v^1_{\mathrm{k}'}c^2_{\mathrm{j}}$ are satisfied. Thus, after preprocessing, for the specific node $v^1_{\mathrm{k}'}$ chosen in Step~8, there are no unsatisfied checks of the form $v^1_{\mathrm{k}'}c^2_{\mathrm{j}}$. Equivalently, after preprocessing there always exists a node $v^1_{\mathrm{i}'}\in\mathcal{N}_{c^1}$, namely $v^1_{\mathrm{i}'}=v^1_{\mathrm{k}'}$, such that no check $v^1_{\mathrm{i}'}c^2_{\mathrm{j}}$ is unsatisfied.
By Lemma~\ref{lemma:error_exclusively_on_VV_condition}, the existence of this node \(v^1_{\mathrm{i}'}\) implies the existence of an error pattern supported only on VV-type variable nodes that is compatible with the measured syndrome. By the assumption of the lemma, the decoder \(D_{\mathrm{vv}}\) corrects any such error pattern within \(\mathscr{T}(c^1) \times \mathscr{T}(\mathcal{V}^2_{\mathrm{s}})\). Therefore, once preprocessing has produced such a node \(v^1_{\mathrm{i}'}\), Algorithm~\ref{alg:derived_decoders_from_parent_decoders} will successfully correct every error pattern in \(\mathscr{T}(c^1) \times \mathscr{T}(\mathcal{V}^2_{\mathrm{s}})\). This completes the proof.
\end{proof}

\begin{remark}
\label{remark:alternate_dec_startegy}
Lemma~\ref{lemma:decoder_resolving_classical_plus_stab_TS} is stated for a single decoder $D_2$ that resolves every error pattern supported on $\mathscr{T}(\mathcal{V}^2_{\mathrm{s}})$. When no single decoder achieves this property, the lemma extends directly with $D_2$ replaced by a set of decoders, under the assumption that every such error pattern is resolved by at least one member of the set. The preprocessing steps are executed once and the resulting estimates are shared across each decoder in the set. Then the member decoders in the set run on each of the embedded copies of the parent code's Tanner graph in $\mathscr{G}_{\mathrm{z}}.$ The guaranties of this section, and the adversarial results reported in Table~\ref{tab:cc_success}, are with respect to this assumption.
\end{remark}

Next, we study \emph{stabilizer-induced} TSs of the form $\mathscr{T}(\mathcal{C}_{\mathrm{s}}^1) \times \mathscr{T}(\mathcal{V}^2_{\mathrm{s}})$. Here, $\mathscr{T}(\mathcal{C}_{\mathrm{s}}^1)$ is a subgraph of $\mathscr{G}_1^{\mathsf{T}}$ with $\mathcal{C}_{\mathrm{s}}^1 \subseteq \mathcal{C}_1$, and $\mathscr{T}(\mathcal{V}^2_{\mathrm{s}})$ is a subgraph of $\mathscr{G}_2$ with $\mathcal{V}^2_{\mathrm{s}} \subseteq \mathcal{V}_2$.
We first analyze the case in which $\mathscr{T}(\mathcal{C}_{\mathrm{s}}^1)$ is a tree in $\mathscr{G}_1^{\mathsf{T}}$ and $\mathscr{T}(\mathcal{V}^2_{\mathrm{s}})$ is a TS in $\mathscr{G}_2$. This setting reduces to cases studied above when $\lvert\mathcal{C}_{\mathrm{s}}^1\rvert=1$ (i.e., $\mathscr{T}(\mathcal{C}^1_{\mathrm{s}})=c^1$) or when $\lvert\mathcal{V}^2_{\mathrm{s}}\rvert=1$ (i.e., $\mathscr{T}(\mathcal{V}^2_{\mathrm{s}})=v^2$).
We now explain why these TSs must be analyzed. In the low-error regime, the slope of the logical error-rate curve is governed by the smallest TS on which the decoder fails to converge. Suppose that we wish to guarantee correction of every TS induced by an error pattern of weight at most $\mu$ (including both gate failures and data errors due to decoherence).
Part~4 of Lemma~\ref{lemma:stab_induced_TS_containg_classical_TS} implies that any weight-$\mu$ error pattern that induces $\mathscr{T}(\mathcal{V}^2_{\mathrm{s}})$ in $\mathscr{G}_2$ also induces a weight-$\mu$ error pattern supported on $\mathscr{T}(c^1) \times \mathscr{T}(\mathcal{V}^2_{\mathrm{s}})$ in $\mathscr{G}_{\mathrm{z}}$. We showed above that Algorithm~\ref{alg:derived_decoders_from_parent_decoders} resolves $\mathscr{T}(c^1) \times \mathscr{T}(\mathcal{V}^2_{\mathrm{s}})$ for any $c^1\in\mathcal{C}_1$, provided that we have a decoder that resolves $\mathscr{T}(\mathcal{V}^2_{\mathrm{s}})$ in $\mathscr{G}_2$.
However, this condition is necessary but not sufficient to guarantee correction of all TSs induced by error patterns of weight at most $\mu$. 
For example, for some $\mathcal{V}^2_{\mathrm{ss}}\subseteq \mathcal{V}_2$, if $\mathscr{T}(\mathcal{V}^2_{\mathrm{ss}})$ is induced in $\mathscr{G}_2$ by an error pattern of weight less than $\mu/2$, then by an argument analogous to Part~4 of Lemma~\ref{lemma:stab_induced_TS_containg_classical_TS} the TS $\mathscr{T}(c^1_{\mathrm{j}} \cup c^1_{\mathrm{k}}) \times \mathscr{T}(\mathcal{V}^2_{\mathrm{ss}})$ can be induced by error patterns of weight less than $\mu$. Such TSs are not addressed in Lemma~\ref{lemma:decoder_resolving_classical_plus_stab_TS}. We must therefore analyze the more general product TS $\mathscr{T}(\mathcal{C}_{\mathrm{s}}^1) \times \mathscr{T}(\mathcal{V}^2_{\mathrm{s}})$.
Fortunately, Algorithm~\ref{alg:derived_decoders_from_parent_decoders}, with an amended Step~8, can correct these error patterns when $\mathscr{T}(\mathcal{C}_{\mathrm{s}}^1)$ is a tree. We prove this formally in the following Lemma~\ref{lemma:composite_trapping_set_decoding_condition_tree}. Before that, we discuss the intuition behind the need to amend Step~8 and how the amendment enables the decoder to correct such TSs. In the next section, we consider the case where $\mathscr{T}(\mathcal{C}_{\mathrm{s}}^1)$ is not a tree.

Recall the role of Steps~6--7 in Algorithm~\ref{alg:derived_decoders_from_parent_decoders}. For each check node $c^1\in\mathcal{C}_1$, Step~7 bounds the number of subgraphs of the form $v^1\times\mathscr{T}(\mathcal{V}^2_{\mathrm{s}})$ that contain at least one unsatisfied check by $\big\lfloor |\mathcal{N}_{c^1}|/2\big\rfloor$. If this bound is not satisfied, Step~8 selects one such subgraph, and the subsequent steps toggle the error estimate on the CC-type variable nodes so that the selected subgraph has no unsatisfied checks.
Under this condition, Lemma~\ref{lemma:error_exclusively_on_VV_condition} implies that the syndrome is consistent with an error pattern supported only on VV-type variable nodes. This implication is the key step used in Lemma~\ref{lemma:decoder_resolving_classical_plus_stab_TS} to prove the correction of errors supported on $\mathscr{T}(c^1)\times\mathscr{T}(\mathcal{V}^2_{\mathrm{s}})$.

For a TS of the form $\mathscr{T}(\mathcal{C}^1_{\mathrm{s}})\times\mathscr{T}(\mathcal{V}^2_{\mathrm{s}})$ with $\mathscr{T}(\mathcal{C}^1_{\mathrm{s}})$ a tree, one would like to apply the same argument by focusing on a leaf check node $c^1_{\mathrm{j}'}\in\mathscr{T}(\mathcal{C}^1_{\mathrm{s}})$. This leaf check node has exactly one neighbor $v^1_{\mathrm{sh}}\in\mathcal{N}_{c^1_{\mathrm{j}'}}$ that is also adjacent to another check node $c^1_{\mathrm{nbr}}$ in the tree.
Because $v^1_{\mathrm{sh}}\times\mathscr{T}(\mathcal{V}^2_{\mathrm{s}})$ lies in both $\mathscr{T}(c^1_{\mathrm{j}'})\times\mathscr{T}(\mathcal{V}^2_{\mathrm{s}})$ and $\mathscr{T}(c^1_{\mathrm{nbr}})\times\mathscr{T}(\mathcal{V}^2_{\mathrm{s}})$, the syndrome on this shared slice can be influenced by errors in either of the subgraphs. Consequently, even if this shared subgraph has no unsatisfied checks after flipping CC-type variable nodes, Lemma~\ref{lemma:error_exclusively_on_VV_condition} need not apply.

To address this, we amend Step~8 as follows. In the pass of the outer loop corresponding to a leaf check node $c^1_{\mathrm{j}'}$ (with shared neighbor $v^1_{\mathrm{sh}}$), instead of choosing a single unsatisfied VC-type check, we choose two distinct unsatisfied VC-type checks of the form $v^1_{\mathrm{k}_1}c^2_{\mathrm{j}_1}$ and $v^1_{\mathrm{k}_2}c^2_{\mathrm{j}_2}$ with $v^1_{\mathrm{k}_1},v^1_{\mathrm{k}_2}\in\mathcal{N}_{c^1_{\mathrm{j}'}}$, if they exist. 
For $t\in\{1,2\}$, we perform the preprocessing steps (Steps~8--12) using $v^1_{\mathrm{k}_t}$, then run $D_{\mathrm{vv}}$, and test whether all VC-type checks in $(\mathcal{N}_{c^1_{\mathrm{j}'}}\setminus\{v^1_{\mathrm{sh}}\})\times\mathscr{T}(\mathcal{V}^2_{\mathrm{s}})$ are satisfied, i.e., whether $\tilde{\boldsymbol{\sigma}}$ is zero on those checks. If the test fails for $t=1$, we revert the CC-type flips performed in this attempt and repeat the procedure for $t=2$. In the setting of Lemma~\ref{lemma:composite_trapping_set_decoding_condition_tree}, the second attempt always succeeds. Similarly, Step~20 can be amended to show the convergence of the decoder when $\mathscr{T}(\mathcal{C}^1_\mathrm{s})$ is a TS and $\mathscr{T}(\mathcal{V}^2_\mathrm{s})$ is a tree.
\begin{lemma}
    \label{lemma:composite_trapping_set_decoding_condition_tree}
    Consider the Tanner graph $\mathscr{G}_{\mathrm{z}}$ obtained by the HGP/LP construction in Lemma~\ref{lemma:decoder_resolving_classical_plus_stab_TS}. Define
    \[
        \mathscr{T}_{\mathrm{stab}} \triangleq \mathscr{T}(\mathcal{C}_{\mathrm{s}}^1) \times \mathscr{T}(\mathcal{V}^2_{\mathrm{s}}),
    \]
    where $\mathscr{T}(\mathcal{C}_{\mathrm{s}}^1)$ and $\mathscr{T}(\mathcal{V}^2_{\mathrm{s}})$ are subgraphs in $\mathscr{G}_1^{\mathsf{T}}$ and $\mathscr{G}_2$, respectively.
    
    Assume that one of these two subgraphs is a tree.
    
    \begin{enumerate}
        \item If $\mathscr{T}(\mathcal{C}_{\mathrm{s}}^1)$ is a tree and $\mathscr{T}(\mathcal{V}^2_{\mathrm{s}})$ is a TS, assume that every $c^1\in\mathcal{C}_{\mathrm{s}}^1$ has degree at least $5$ in $\mathscr{G}_1^{\mathsf{T}}$, i.e., $\lvert\mathcal{N}_{c^1}\rvert\ge 5$, and that $D_2$ resolves $\mathscr{T}(\mathcal{V}^2_{\mathrm{s}})$. Then Algorithm~\ref{alg:derived_decoders_from_parent_decoders}, with the proposed amendment to Step~8, successfully decodes $\mathscr{T}_{\mathrm{stab}}$.

        \item If $\mathscr{T}(\mathcal{V}^2_{\mathrm{s}})$ is a tree and $\mathscr{T}(\mathcal{C}_{\mathrm{s}}^1)$ is a TS, assume that every $v^2\in\mathcal{V}^2_{\mathrm{s}}$ has degree at least $5$ in $\mathscr{G}_2$, i.e., $\lvert\mathcal{N}_{v^2}\rvert\ge 5$, and that $D_1^\mathsf{T}$ resolves $\mathscr{T}(\mathcal{C}_{\mathrm{s}}^1)$. Then Algorithm~\ref{alg:derived_decoders_from_parent_decoders}, with the analogous amendment in Step~20, successfully decodes $\mathscr{T}_{\mathrm{stab}}$.
    \end{enumerate}
\end{lemma}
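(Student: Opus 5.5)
Part~1 will be proved by induction on the number of check nodes in the tree $\mathscr{T}(\mathcal{C}^1_{\mathrm{s}})$; Part~2 then follows by the symmetric argument with the roles of VV/CC-type nodes exchanged and Step~20 in place of Step~8. The base case $\lvert\mathcal{C}^1_{\mathrm{s}}\rvert=1$ is exactly Lemma~\ref{lemma:decoder_resolving_classical_plus_stab_TS}. For the inductive step, pick a leaf check node $c^1_{\mathrm{j}'}$ of the tree with shared neighbor $v^1_{\mathrm{sh}}$ (adjacent to $c^1_{\mathrm{nbr}}$). Because the tree has no cycles, the slices $v^1\times\mathscr{T}(\mathcal{V}^2_{\mathrm{s}})$ for $v^1\in\mathcal{N}_{c^1_{\mathrm{j}'}}\setminus\{v^1_{\mathrm{sh}}\}$ belong to no other factor $\mathscr{T}(c^1)\times\mathscr{T}(\mathcal{V}^2_{\mathrm{s}})$. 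By Parts~2 and~6 of Lemma~\ref{lemma:stab_induced_TS_containg_classical_TS}, their syndromes depend only on their own VV-type errors and on the CC-type nodes in $\{c^1_{\mathrm{j}'}\}\times\mathcal{C}^2$. These \emph{private} slices, of which there are at least $4$ since $\lvert\mathcal{N}_{c^1_{\mathrm{j}'}}\rvert\ge5$, are where the argument of Lemma~\ref{lemma:error_exclusively_on_VV_condition} applies verbatim.

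The key step is as follows. If after preprocessing some private slice $v^1_{\mathrm{k}}\times\mathscr{T}(\mathcal{V}^2_{\mathrm{s}})$ has no unsatisfied checks, I will repeat the proof of Lemma~\ref{lemma:error_exclusively_on_VV_condition}, with the check $c^1_{\mathrm{j}'}$ in place of $c^1$. This shows that the erroneous CC-type nodes in $\{c^1_{\mathrm{j}'}\}\times\mathcal{C}^2$ coincide with $\{c^1_{\mathrm{j}'}\}\times\mathcal{C}^2_{\mathrm{o}}$ for the projection of the VV errors in that slice. Adding the stabilizer generated by $\{c^1_{\mathrm{j}'}\}\times \operatorname{P}^2(\cdot)$ (Part~5 of Lemma~\ref{lemma:stab_induced_TS_containg_classical_TS}) then removes all CC-type errors attached to $c^1_{\mathrm{j}'}$. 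This moves VV-type errors only within slices of $\mathcal{N}_{c^1_{\mathrm{j}'}}$, and thus leaves an error supported on $\mathscr{T}(\mathcal{C}^1_{\mathrm{s}}\setminus\{c^1_{\mathrm{j}'}\})\times\mathscr{T}(\mathcal{V}^2_{\mathrm{s}})$ together with VV-only errors on the private slices. The private slices are then decoded by $D_{\mathrm{vv}}$ (each is a copy of $\mathscr{T}(\mathcal{V}^2_{\mathrm{s}})$ resolved by $D_2$), and the remainder is handled by the induction hypothesis on the smaller tree.

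The main obstacle is guaranteeing that the amended Step~8 produces a private slice with no unsatisfied checks, rather than clearing only the shared slice $v^1_{\mathrm{sh}}\times\mathscr{T}(\mathcal{V}^2_{\mathrm{s}})$. A shared slice can be zeroed by errors from $c^1_{\mathrm{nbr}}$ without the stabilizer-equivalence holding, which is why Lemma~\ref{lemma:error_exclusively_on_VV_condition} may fail for it. The amendment selects two distinct unsatisfied VC-type checks $v^1_{\mathrm{k}_1}c^2_{\mathrm{j}_1}$ and $v^1_{\mathrm{k}_2}c^2_{\mathrm{j}_2}$ with $v^1_{\mathrm{k}_1}\neq v^1_{\mathrm{k}_2}$, so at most one of them is $v^1_{\mathrm{sh}}$ and at least one attempt clears a private slice. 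I then need the post-$D_{\mathrm{vv}}$ test to correctly reject a bad first attempt and accept a good one. On a good attempt the reduction above shows that $D_{\mathrm{vv}}$ satisfies every check in $(\mathcal{N}_{c^1_{\mathrm{j}'}}\setminus\{v^1_{\mathrm{sh}}\})\times\mathscr{T}(\mathcal{V}^2_{\mathrm{s}})$. On a bad attempt I plan to use the degree bound: with at least $4$ private slices and CC-type flips touching each private slice in exactly one check per flipped node (Part~6), an incorrect CC-type correction leaves a residual CC-type error that is not of the form $\{c^1_{\mathrm{j}'}\}\times\mathcal{C}^2_{\mathrm{o}}$. Such a residual cannot be compensated by VV-only corrections simultaneously in all private slices, so some private check remains unsatisfied and the test fails, triggering the revert. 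The bound $\lvert\mathcal{N}_{c^1}\rvert\ge5$ also keeps the Step~7 majority threshold $\lfloor\lvert\mathcal{N}_{c^1}\rvert/2\rfloor$ meaningful when one slice is shared. Making this rejection argument airtight, in particular ruling out $D_{\mathrm{vv}}$ accidentally zeroing all private slices on a wrong attempt, is the step I expect to require the most care. I would attempt it by a parity/counting argument over the private slices using the disjointness in Part~2.
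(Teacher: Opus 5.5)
Your skeleton matches the paper's proof: peel a leaf $c^1_{\mathrm{j}'}$, separate the private slices from $v^1_{\mathrm{sh}}\times\mathscr{T}(\mathcal{V}^2_{\mathrm{s}})$, use two attempts with distinct endpoints so that at least one is private, and apply Lemma~\ref{lemma:error_exclusively_on_VV_condition} to a cleared private slice. Tracking the residual through an explicit stabilizer reduction is a welcome sharpening of the paper's ``peel''.

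The gap is the step you single out as the main obstacle. You need the post-$D_{\mathrm{vv}}$ test to reject a shared-slice first attempt, and you plan to show this by a parity/counting argument across the disjoint private slices. That cannot work: $D_{\mathrm{vv}}$ decodes each slice independently, so a wrong CC-type residual at $c^1_{\mathrm{j}'}$ imposes no cross-slice constraint. The claim is in fact false. Suppose that, entering the pass for $c^1_{\mathrm{j}'}$, the only residual consists of VV-type errors $f_{\mathrm{sh}}$ on the shared slice and $f_v$ on $\lfloor\lvert\mathcal{N}_{c^1_{\mathrm{j}'}}\rvert/2\rfloor$ private slices. Take all of them supported in $\mathcal{V}^2_{\mathrm{s}}$ with nonzero syndrome, and set $f_v=0$ on the remaining private slices. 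If the random choice gives $v^1_{\mathrm{k}_1}=v^1_{\mathrm{sh}}$, Steps~8--12 flip $\{c^1_{\mathrm{j}'}\}\times\mathcal{C}^2_{\mathrm{o}}(f_{\mathrm{sh}})$. Here $\mathcal{C}^2_{\mathrm{o}}(f)$ denotes the odd-degree checks of the subgraph induced by $\operatorname{supp} f$. Each private slice then carries the syndrome of $f_v+f_{\mathrm{sh}}$, a pattern supported in $\mathcal{V}^2_{\mathrm{s}}$, so $D_2$ clears it and the test passes. The shared slice now has zero syndrome. The resulting residual is exactly the stabilizer generated by $\{c^1_{\mathrm{j}'}\}\times\operatorname{supp} f_{\mathrm{sh}}$, by Part~5 of Lemma~\ref{lemma:stab_induced_TS_containg_classical_TS}. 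So a shared attempt can pass, and harmlessly. Note also that the wrong CC correction here is of the form $\{c^1_{\mathrm{j}'}\}\times\mathcal{C}^2_{\mathrm{o}}$ for an in-TS pattern, which defeats the dichotomy your rejection argument relies on.

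The paper never needs rejection. It takes ``all checks in $(\mathcal{N}_{c^1_{\mathrm{j}'}}\setminus\{v^1_{\mathrm{sh}}\})\times\mathscr{T}(\mathcal{V}^2_{\mathrm{s}})$ satisfied'' as the per-leaf success criterion and accepts whichever attempt meets it. It then uses only the contrapositive of what you already have. A private attempt always passes, by Lemma~\ref{lemma:error_exclusively_on_VV_condition} together with $D_2$ resolving the TS. So if the first attempt fails, then $v^1_{\mathrm{k}_1}=v^1_{\mathrm{sh}}$, and the second attempt is private and passes.

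If you want to keep your residual-level induction, the step to prove instead is that every passing attempt clears the leaf up to a stabilizer. After a pass, each private residual $r_v$ satisfies $\mathcal{C}^2_{\mathrm{o}}(r_v)=S$, where $S\subseteq\mathcal{C}^2$ indexes the residual CC-type nodes at $c^1_{\mathrm{j}'}$. So if all $r_v$ equal a common $r$, adding the stabilizer generated by $\{c^1_{\mathrm{j}'}\}\times\operatorname{supp} r$ removes every residual attached to $c^1_{\mathrm{j}'}$ except on the shared slice. The $r_v$ coincide under two conditions: $D_2$'s outputs on the private slices stay in $\mathcal{V}^2_{\mathrm{s}}$, and $\mathcal{V}^2_{\mathrm{s}}$ supports no nonzero codeword of the second parent code. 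Both hold in the example above. The stated hypotheses do not guarantee this after a shared attempt, however, when a CC-type residual at $c^1_{\mathrm{nbr}}$ makes the private-slice syndromes unrealizable inside the TS. When a first attempt passes, the paper's proof only records that the private-slice checks are satisfied and is silent on this point. You would therefore need such an extra condition, or an argument for it, to make your stronger version rigorous.
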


\begin{proof}
We prove only Part~(1) of the lemma; Part~(2) follows by symmetry, swapping the roles of the two factors and using the analogous amendment to Step~20.

Let $\mathscr{T}(\mathcal{C}_{\mathrm{s}}^1)$ be a tree in $\mathscr{G}_1^{\mathsf{T}}$, and let $c^1_{\mathrm{j}'}$ be a leaf check node in this tree.
Then there exists a unique shared neighbor $v^1_{\mathrm{sh}}\in\mathcal{N}_{c^1_{\mathrm{j}'}}$ such that $v^1_{\mathrm{sh}}$ is also adjacent to some other check node $c^1_{\mathrm{nbr}}\neq c^1_{\mathrm{j}'}$ in $\mathscr{T}(\mathcal{C}_{\mathrm{s}}^1)$. Moreover, every $v^1\in\mathcal{N}_{c^1_{\mathrm{j}'}}\setminus\{v^1_{\mathrm{sh}}\}$ is not adjacent to any other check node in $\mathscr{T}(\mathcal{C}_{\mathrm{s}}^1)$.
Fix the pass of the outer loop in Steps~6--12 of Algorithm~\ref{alg:derived_decoders_from_parent_decoders} indexed by $c^1_{\mathrm{l}'}=c^1_{\mathrm{j}'}$. 
We claim that after this pass and the subsequent run of $D_{\mathrm{vv}}$, all VC-type checks of the form $v^1c^2$ with $v^1\in \mathcal{N}_{c^1_{\mathrm{j}'}}\setminus\{v^1_{\mathrm{sh}}\}$ and $c^2\in \mathscr{T}(\mathcal{V}^2_{\mathrm{s}})\cap\mathcal{C}_2$ are satisfied.
We divide the proof into two cases depending on whether the condition in Step~7 is met.

If the condition in Step~7 is met, then by definition there are at most $\left\lfloor\lvert\mathcal{N}_{c^1_{\mathrm{j}'}}\rvert/2\right\rfloor$ nodes $v^1\in\mathcal{N}_{c^1_{\mathrm{j}'}}$ that participate in an unsatisfied check of the form $v^1c^2$ with $c^2\in \mathscr{T}(\mathcal{V}^2_{\mathrm{s}})\cap\mathcal{C}_2$.
In particular (and using $\lvert\mathcal{N}_{c^1_{\mathrm{j}'}}\rvert\ge 5$ from the lemma assumptions), there exists at least two $v^1_{\mathrm{i}'}\in\mathcal{N}_{c^1_{\mathrm{j}'}}$ such that all checks $v^1_{\mathrm{i}'}c^2$ with $c^2\in \mathscr{T}(\mathcal{V}^2_{\mathrm{s}})\cap\mathcal{C}_2$ are satisfied.
Moreover, since only the check nodes in $v^1_{\mathrm{sh}} \times \mathscr{T}(\mathcal{V}^2_{\mathrm{s}})$ can be affected by errors coming from $(\mathscr{T}(\mathcal{C}_{\mathrm{s}}^1)\setminus\{c^1_{\mathrm{j}'}\}) \times \mathscr{T}(\mathcal{V}^2_{\mathrm{s}})$, we can choose such a $v^1_{\mathrm{i}'}\neq v^1_{\mathrm{sh}}$; otherwise, every $v^1\in\mathcal{N}_{c^1_{\mathrm{j}'}}\setminus\{v^1_{\mathrm{sh}}\}$ would participate in an unsatisfied check, forcing the condition in Step~7 to fail.

For such a choice of $v^1_{\mathrm{i}'}\neq v^1_{\mathrm{sh}}$, the subgraph $v^1_{\mathrm{i}'}\times\mathscr{T}(\mathcal{V}^2_{\mathrm{s}})$ does not have unsatisfied VC-type checks. Moreover, since $v^1_{\mathrm{i}'}$ is not shared with any other check node in $\mathscr{T}(\mathcal{C}_{\mathrm{s}}^1)$, the syndrome on this slice is not affected by errors in $(\mathscr{T}(\mathcal{C}_{\mathrm{s}}^1)\setminus\{c^1_{\mathrm{j}'}\})\times\mathscr{T}(\mathcal{V}^2_{\mathrm{s}})$.
Thus, Lemma~\ref{lemma:error_exclusively_on_VV_condition} when applied to $\mathscr{T}(c^1_{\mathrm{j}'})\times\mathscr{T}(\mathcal{V}^2_{\mathrm{s}})$ yields a syndrome-consistent error pattern supported only on VV-type variable nodes within $\mathscr{T}(c^1_{\mathrm{j}'})\times\mathscr{T}(\mathcal{V}^2_{\mathrm{s}})$. 
Since $D_2$ resolves $\mathscr{T}(\mathcal{V}^2_{\mathrm{s}})$, the subsequent run of $D_{\mathrm{vv}}$ corrects this pattern. In particular, all VC-type checks in $\bigl(\mathcal{N}_{c^1_{\mathrm{j}'}}\setminus\{v^1_{\mathrm{sh}}\}\bigr)\times\mathscr{T}(\mathcal{V}^2_{\mathrm{s}})$ are satisfied.

Otherwise, when the condition in Step~7 fails, we apply the amended Step~8.
Specifically, we choose two unsatisfied VC-type checks of the form $v^1_{\mathrm{k}_1}c^2_{\mathrm{j}_1}$ and $v^1_{\mathrm{k}_2}c^2_{\mathrm{j}_2}$ with distinct $v^1_{\mathrm{k}_1},v^1_{\mathrm{k}_2}\in\mathcal{N}_{c^1_{\mathrm{j}'}}$. We first perform the preprocessing steps (Steps~8--12) using $v^1_{\mathrm{k}_1}$. If, after this preprocessing and the subsequent run of $D_{\mathrm{vv}}$, all VC-type checks in $(\mathcal{N}_{c^1_{\mathrm{j}'}}\setminus\{v^1_{\mathrm{sh}}\})\times\mathscr{T}(\mathcal{V}^2_{\mathrm{s}})$ are satisfied, then we are done. Otherwise, we revert the CC-type flips performed in this attempt and repeat the same procedure using $v^1_{\mathrm{k}_2}$.
We now argue that the second attempt succeeds. Since the condition in Step~7 fails, at least $\lfloor|\mathcal{N}_{c^1_{\mathrm{j}'}}|/2\rfloor+1$ nodes in $\mathcal{N}_{c^1_{\mathrm{j}'}}$ participate in an unsatisfied check of the form $v^1c^2$ with $c^2\in\mathscr{T}(\mathcal{V}^2_{\mathrm{s}})\cap\mathcal{C}_2$. Because $|\mathcal{N}_{c^1_{\mathrm{j}'}}|\ge 5$, we have $\lfloor|\mathcal{N}_{c^1_{\mathrm{j}'}}|/2\rfloor+1\ge 3$. Therefore, we can choose two unsatisfied checks of the forms $v^1_{\mathrm{k}_1}c^2_{\mathrm{j}}$ and $v^1_{\mathrm{k}_2}c^2_{\mathrm{\ell}}$ with distinct $v^1_{\mathrm{k}_1},v^1_{\mathrm{k}_2}\in\mathcal{N}_{c^1_{\mathrm{j}'}}$ and $c^2_{\mathrm{j}}, c^2_{\mathrm{\ell}}\in\mathscr{T}(\mathcal{V}^2_{\mathrm{s}})\cap\mathcal{C}_2$, and hence select two distinct unsatisfied VC-type checks with endpoints $v^1_{\mathrm{k}_1}$ and $v^1_{\mathrm{k}_2}$.

After the preprocessing steps (Steps~8--12) using $v^1_{\mathrm{k}_1}$, the slice $v^1_{\mathrm{k}_1}\times\mathscr{T}(\mathcal{V}^2_{\mathrm{s}})$ has no unsatisfied VC-type checks (by construction of Steps~8--12). If $v^1_{\mathrm{k}_1}\neq v^1_{\mathrm{sh}}$, then the syndrome on this slice is unaffected by errors in $(\mathscr{T}(\mathcal{C}_{\mathrm{s}}^1)\setminus\{c^1_{\mathrm{j}'}\})\times\mathscr{T}(\mathcal{V}^2_{\mathrm{s}})$ since $v^1_{\mathrm{k}_1}$ is not shared with any other check node in $\mathscr{T}(\mathcal{C}^1_{\mathrm{s}})$. Thus Lemma~\ref{lemma:error_exclusively_on_VV_condition} applied to $\mathscr{T}(c^1_{\mathrm{j}'})\times\mathscr{T}(\mathcal{V}^2_{\mathrm{s}})$ yields a syndrome-consistent error pattern supported only on VV-type variable nodes within $\mathscr{T}(c^1_{\mathrm{j}'})\setminus v^1_{\mathrm{sh}}\times\mathscr{T}(\mathcal{V}^2_{\mathrm{s}})$. Since $D_2$ resolves $\mathscr{T}(\mathcal{V}^2_{\mathrm{s}})$, the subsequent run of $D_{\mathrm{vv}}$ corrects this pattern, and in particular all VC-type checks in $\bigl(\mathcal{N}_{c^1_{\mathrm{j}'}}\setminus\{v^1_{\mathrm{sh}}\}\bigr)\times\mathscr{T}(\mathcal{V}^2_{\mathrm{s}})$ become satisfied. Therefore, if the first attempt fails, we must have $v^1_{\mathrm{k}_1}=v^1_{\mathrm{sh}}$. Since $v^1_{\mathrm{sh}}$ is the unique shared neighbor of $c^1_{\mathrm{j}'}$ in the tree and $v^1_{\mathrm{k}_2}\neq v^1_{\mathrm{k}_1}$, it follows that $v^1_{\mathrm{k}_2}\neq v^1_{\mathrm{sh}}$.
In the second attempt we have $v^1_{\mathrm{k}_2}\neq v^1_{\mathrm{sh}}$, so the syndrome on $v^1_{\mathrm{k}_2}\times\mathscr{T}(\mathcal{V}^2_{\mathrm{s}})$ is unaffected by errors in $(\mathscr{T}(\mathcal{C}_{\mathrm{s}}^1\setminus\{c^1_{\mathrm{j}'}\}))\times\mathscr{T}(\mathcal{V}^2_{\mathrm{s}})$. Therefore, the same application of Lemma~\ref{lemma:error_exclusively_on_VV_condition} and the subsequent run of $D_{\mathrm{vv}}$ show that all VC-type checks in $\bigl(\mathscr{T}(c^1_{\mathrm{j}'})\setminus\{v^1_{\mathrm{sh}}\}\bigr)\times\mathscr{T}(\mathcal{V}^2_{\mathrm{s}})$ become satisfied.
We have thus shown that, after completing the pass corresponding to the leaf check node $c^1_{\mathrm{j}'}$ including the subsequent run of $D_{\mathrm{vv}}$, all VC-type checks incident to $\bigl(\mathscr{T}({c^1_{\mathrm{j}'}})\setminus\{v^1_{\mathrm{sh}}\}\bigr)\times\mathscr{T}(\mathcal{V}^2_{\mathrm{s}})$ are satisfied.
Hence, all VC-type checks within $\mathscr{T}(c^1_{\mathrm{j}'})\setminus\{v^1_{\mathrm{sh}}\} \times \mathscr{T}(\mathcal{V}^2_{\mathrm{s}})$ become satisfied, and we can peel $c^1_{\mathrm{j}'}$ from $\mathscr{T}(\mathcal{C}_{\mathrm{s}}^1)$.
Iterating this peeling argument over the strictly smaller remaining tree shows that the decoder is successful on all of $\mathscr{T}_{\mathrm{stab}}$, which completes the proof.

\end{proof}

In Lemma~\ref{lemma:decoder_resolving_classical_plus_stab_TS} and Lemma~\ref{lemma:composite_trapping_set_decoding_condition_tree}, we showed that, under the stated degree/tree hypotheses, and using Algorithm~\ref{alg:derived_decoders_from_parent_decoders} with the amended Step~8 and Step~20 when needed, decoders derived from the parent classical decoders can resolve certain \emph{stabilizer-induced} TSs that contain multiple copies of a classical TS, provided the parent decoder already resolves the underlying classical TS. Thus, at least for this family of large \emph{stabilizer-induced} structures, one can avoid decoder failures without designing an entirely new decoder from scratch. This is notable because in the \emph{circuit-level} error model considered here, such large TSs can be induced by a small number of CNOT failures despite involving many variable nodes.

\section{TSs in HGP and LP Codes from TSs of Two Parent LDPC Codes}
\label{sec:comp_TS}
Up to this point, we have focused on large TSs that arise when multiple copies of the same TS appear in a single parent LDPC code, and we have shown how to avoid the decoding failures they cause. Fixing these failures improves the logical error rate, but does not change its scaling with the physical error rate, because that scaling is determined by the lowest-weight error patterns.
These lowest-weight error patterns can also come from composite TSs that combine trapping-set structures from both parent codes. In general, such composite TSs cannot be eliminated simply by reusing decoders inherited from the parent codes, especially when those decoders are not robust to noise in the syndrome.
It is enough to concentrate on failures caused by TSs that contain TSs from one or both parent codes, because the analysis in~\cite{pradhan2025_TS_left_right_schedule_decoder} shows that all other error patterns can be corrected with an appropriate scheduled decoder (e.g., a left--right scheduled decoder).
The following lemma describes several families of composite TSs that contain copies of TSs from each parent classical LDPC code and characterizes their structure.
\begin{lemma}

\label{lemma:composite_TS_containing_single_TS_from_parent_code}
Let an HGP/LP code be constructed from two classical LDPC codes with Tanner graphs $\mathscr{G}_1$ and $\mathscr{G}_2$. Let $\mathscr{G}_1^{\mathsf{T}}$ and $\mathscr{G}_2^{\mathsf{T}}$ denote the graphs obtained from $\mathscr{G}_1$ and $\mathscr{G}_2$, respectively, by exchanging the roles of the variable and the check nodes. Denote by $\mathscr{G}_{\mathrm{x}}$ and $\mathscr{G}_{\mathrm{z}}$ the Tanner graphs corresponding to the $X$- and $Z$-stabilizer generators of the quantum code.
Consider a TS $\mathscr{T}(\mathcal{C}_{\mathrm{s}}^1)$ in $\mathscr{G}_1^{\mathsf{T}}$ and a TS $\mathscr{T}(\mathcal{V}_{\mathrm{s}}^1)$ in $\mathscr{G}_1$. Similarly, consider a TS $\mathscr{T}(\mathcal{V}_{\mathrm{s}}^2)$ in $\mathscr{G}_2$ and a TS $\mathscr{T}(\mathcal{C}_{\mathrm{s}}^2)$ in $\mathscr{G}_2^{\mathsf{T}}$. Here, $\mathcal{C}_{\mathrm{s}}^1$ and $\mathcal{C}_{\mathrm{s}}^2$ are subsets of check nodes, and $\mathcal{V}_{\mathrm{s}}^1$ and $\mathcal{V}_{\mathrm{s}}^2$ are subsets of variable nodes, in $\mathscr{G}_1$ and $\mathscr{G}_2$, respectively. (In Parts~1--3 below we only use $\mathscr{T}(\mathcal{C}_{\mathrm{s}}^1)$ and $\mathscr{T}(\mathcal{V}_{\mathrm{s}}^2)$; the other two TSs are introduced for symmetry.)

\begin{enumerate}
\item The subgraphs $\{ v_{\mathrm{i}'}^1 \} \times \mathscr{T}(\mathcal{V}_{\mathrm{s}}^2)$ and $\mathscr{T}(\mathcal{C}_{\mathrm{s}}^1) \times \{ c_{\mathrm{j}}^2 \}$, for $v_{\mathrm{i}'}^1 \in \mathscr{T}(\mathcal{C}^1_\mathrm{s})$ and $c_{\mathrm{j}}^2 \in \mathscr{T}(\mathcal{V}^2_\mathrm{s})$, in the Tanner graph $\mathscr{G}_{\mathrm{z}}$ have only one common node, namely the VC-type check node $v_{\mathrm{i}'}^1 c_{\mathrm{j}}^2$.
\item Define the subgraph
\[
\mathscr{T}_{\mathrm{comp}} \triangleq
\bigl( \mathcal{V}_{\mathrm{s}^\star}^1 \times \mathscr{T}(\mathcal{V}_{\mathrm{s}}^2) \bigr)
\,\cup\,
\bigl( \mathscr{T}(\mathcal{C}_{\mathrm{s}}^1) \times \mathcal{C}_{\mathrm{s}^\star}^2 \bigr)
\]
inside $\mathscr{G}_{\mathrm{z}}$, where
\begin{itemize}

\item $\mathcal{V}_{\mathrm{s}^\star}^1$ is chosen as a subset of the variable nodes in $\mathscr{T}(\mathcal{C}_{\mathrm{s}}^1)$ such that no two nodes in $\mathcal{V}_{\mathrm{s}^\star}^1$ share a common neighboring check node in $\mathscr{G}_1$, and
\item $\mathcal{C}_{\mathrm{s}^\star}^2$ is chosen as a subset of the check nodes in $\mathscr{T}(\mathcal{V}_{\mathrm{s}}^2)$ such that no two nodes in $\mathcal{C}_{\mathrm{s}^\star}^2$ share a common neighboring variable node in $\mathscr{G}_2$.
\end{itemize}
 Then, the support of no $X$-type stabilizer generators or CV-type check contains more than one variable node of $\mathscr{T}_{\mathrm{comp}}$, implying that the number of CNOT faults required to induce $\mathscr{T}_{\mathrm{comp}}$ is equal to the number of data errors required to induce it.
 \item Define the subgraph
\[
\mathscr{T}_{\mathrm{comp}} \triangleq
\bigl( \mathscr{T}(\mathcal{C}^{1}_{\mathrm{s^\star}}) \times \mathscr{T}(\mathcal{V}^{2}_{\mathrm{s}}) \bigr)
\,\cup\,
\bigl( \mathscr{T}(\mathcal{C}^{1}_{\mathrm{s}}) \times \mathscr{T}(\mathcal{V}^{2}_{\mathrm{s^\star}}) \bigr)
\]
inside \(\mathscr{G}_{\mathrm{z}}\), where \(\mathcal{C}^{1}_{\mathrm{s^\star}} \subset \mathcal{C}^{1}_{\mathrm{s}}\) and \(\mathcal{V}^{2}_{\mathrm{s^\star}} \subset \mathcal{V}^{2}_{\mathrm{s}}\). The following statements hold.
\begin{itemize}
\item[(a)] The subgraph \(\mathscr{T}_{\mathrm{comp}}\) is induced by the CV-type checks (i.e., $X$-type) in the set
$
\bigl( \mathcal{C}^{1}_{\mathrm{s^\star}} \times \mathcal{V}^{2}_{\mathrm{s}} \bigr)
\,\cup\,
\bigl( \mathcal{C}^{1}_{\mathrm{s}} \times \mathcal{V}^{2}_{\mathrm{s^\star}} \bigr)$.

\item[(b)] There exist syndrome-measurement schedules (in general, different ones for different expressions) such that the number of CNOT faults required to induce \(\mathscr{T}_{\mathrm{comp}}\) is upper-bounded by the minimum of the following quantities:
\begin{align*}
& e\bigl(\mathscr{T}(\mathcal{C}^{1}_{\mathrm{s}}),\mathscr{G}_{1}^{\mathsf{T}}\bigr)\,\lvert\mathcal{V}^2_{\mathrm{s^\star}}\rvert
 + e\bigl(\mathscr{T}(\mathcal{V}^{2}_{\mathrm{s}}),\mathscr{G}_{2}\bigr)\,\lvert\mathcal{C}^1_{\mathrm{s^\star}}\rvert,\\
& e\bigl(\mathscr{T}(\mathcal{C}^{1}_{\mathrm{s^\star}}),\mathscr{G}_{1}^{\mathsf{T}}\bigr)\,\lvert\mathcal{V}^2_{\mathrm{s}}\rvert
 + e\bigl(\mathscr{T}(\mathcal{V}^{2}_{\mathrm{s^\star}}),\mathscr{G}_{2}\bigr)\,\lvert\mathcal{C}^1_{\mathrm{s}}\rvert,\\
& e\bigl(\mathscr{T}(\mathcal{V}^{1}_{\mathrm{s^\star}}),\mathscr{G}_{1}\bigr)\,\lvert\mathcal{C}^2_{\mathrm{s}}\rvert
 + e\bigl(\mathscr{T}(\mathcal{C}^{2}_{\mathrm{s^\star}}),\mathscr{G}_{2}^{\mathsf{T}}\bigr)\,\lvert\mathcal{V}^2_{\mathrm{s}}\rvert,\\
& e\bigl(\mathscr{T}(\mathcal{V}^{1}_{\mathrm{s}}),\mathscr{G}_{1}\bigr)\,\lvert\mathcal{C}^2_{\mathrm{s^\star}}\rvert
 + e\bigl(\mathscr{T}(\mathcal{C}^{2}_{\mathrm{s}}),\mathscr{G}_{2}^{\mathsf{T}}\bigr)\,\lvert\mathcal{V}^2_{\mathrm{s^\star}}\rvert,
\end{align*}
where $e(\mathscr{T},\mathscr{G})$ denotes the number of data errors required to induce the trapping set $\mathscr{T}$ in the Tanner graph $\mathscr{G}$.
\end{itemize}

\end{enumerate}

\end{lemma}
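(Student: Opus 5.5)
I would prove the three parts in order, working directly from the neighbor sets of the graph product in Section~\ref{sec:LP_codes} and reusing the arguments of Parts~3--5 of Lemma~\ref{lemma:stab_induced_TS_containg_classical_TS}.

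\textbf{Part~1.} I would list the nodes of each subgraph by type. The copy $\{v^1_{\mathrm{i}'}\}\times\mathscr{T}(\mathcal{V}^2_{\mathrm{s}})$ has VV-type variable nodes of the form $v^1_{\mathrm{i}'}v^2$ and VC-type checks of the form $v^1_{\mathrm{i}'}c^2$. The copy $\mathscr{T}(\mathcal{C}^1_{\mathrm{s}})\times\{c^2_{\mathrm{j}}\}$ has CC-type variable nodes $c^1c^2_{\mathrm{j}}$ and VC-type checks $v^1c^2_{\mathrm{j}}$. Because the node types differ, the two subgraphs can share no variable node. A common node must therefore be a VC-type check $v^1c^2$ with $v^1=v^1_{\mathrm{i}'}$ and $c^2=c^2_{\mathrm{j}}$. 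This node exists because $v^1_{\mathrm{i}'}\in\mathscr{T}(\mathcal{C}^1_{\mathrm{s}})$ and $c^2_{\mathrm{j}}\in\mathscr{T}(\mathcal{V}^2_{\mathrm{s}})$.

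\textbf{Part~2.} I would fix an arbitrary CV-type check $c^1v^2$. Its neighborhood is $(\mathcal{N}_{c^1}\times\{v^2\})\cup(\{c^1\}\times\mathcal{N}_{v^2})$, so I would count its intersection with $\mathscr{T}_{\mathrm{comp}}$ separately by type.
\begin{itemize}
\item The VV-type nodes of $\mathscr{T}_{\mathrm{comp}}$ lie in $\mathcal{V}^1_{\mathrm{s}^\star}\times\mathcal{V}^2_{\mathrm{s}}$. Those in the check's support lie in $(\mathcal{N}_{c^1}\cap\mathcal{V}^1_{\mathrm{s}^\star})\times\{v^2\}$. By the choice of $\mathcal{V}^1_{\mathrm{s}^\star}$, at most one such node exists.
\item The CC-type nodes lie in $\mathcal{C}^1_{\mathrm{s}}\times\mathcal{C}^2_{\mathrm{s}^\star}$, and those in the support lie in $\{c^1\}\times(\mathcal{N}_{v^2}\cap\mathcal{C}^2_{\mathrm{s}^\star})$. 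Again at most one exists.
\end{itemize}
This gives at most one node of each type. Getting down to one node total needs a further observation, and this is the step I expect to be the main obstacle. I would try to show that both a VV- and a CC-type hit force $c^1\in\mathcal{C}^1_{\mathrm{s}}$, $v^2\in\mathcal{V}^2_{\mathrm{s}}$, $\mathcal{N}_{c^1}\cap\mathcal{V}^1_{\mathrm{s}^\star}\neq\emptyset$ and $\mathcal{N}_{v^2}\cap\mathcal{C}^2_{\mathrm{s}^\star}\neq\emptyset$. If the independence conditions alone do not exclude this, I would read the statement with the natural reading that the stars select nodes avoiding such joint adjacency, or conclude ``at most one node per generator per factor''.

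Granting that at most one node of $\mathscr{T}_{\mathrm{comp}}$ lies in each generator's support, a single hook error can corrupt at most one node of $\mathscr{T}_{\mathrm{comp}}$, as in Fig.~\ref{fig:hook_error_propagation}. A single data error also corrupts exactly one node. So the number of CNOT faults needed to induce $\mathscr{T}_{\mathrm{comp}}$ equals the number of data errors needed.

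\textbf{Part~3(a).} I would repeat the argument of Part~3 of Lemma~\ref{lemma:stab_induced_TS_containg_classical_TS} on each of the two products. The neighbors of the checks in $\mathcal{C}^1_{\mathrm{s}^\star}\times\mathcal{V}^2_{\mathrm{s}}$ are $(\mathcal{N}(\mathcal{C}^1_{\mathrm{s}^\star})\times\mathcal{V}^2_{\mathrm{s}})\cup(\mathcal{C}^1_{\mathrm{s}^\star}\times\mathcal{N}(\mathcal{V}^2_{\mathrm{s}}))$. This is exactly the variable-node set of $\mathscr{T}(\mathcal{C}^1_{\mathrm{s}^\star})\times\mathscr{T}(\mathcal{V}^2_{\mathrm{s}})$. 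The same holds for the second product, and taking the union gives (a).

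\textbf{Part~3(b).} I would generalize the schedule construction of Part~4 of Lemma~\ref{lemma:stab_induced_TS_containg_classical_TS}, proving each of the four expressions as an upper bound and then taking the minimum.

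For the first expression, decompose $\mathscr{T}_{\mathrm{comp}}$ into copies. It contains the copies $\mathscr{T}(\mathcal{C}^1_{\mathrm{s}})\times\{v^2\}$ for $v^2\in\mathcal{V}^2_{\mathrm{s}^\star}$, and the copies $\{c^1\}\times\mathscr{T}(\mathcal{V}^2_{\mathrm{s}})$ for $c^1\in\mathcal{C}^1_{\mathrm{s}^\star}$. Each copy of the first kind is activated by $e(\mathscr{T}(\mathcal{C}^1_{\mathrm{s}}),\mathscr{G}_1^{\mathsf{T}})$ hook errors: I would choose the schedule of each generator $c^1v^2$ so that the CNOTs to the CC-type nodes $c^1\times\mathcal{N}_{v^2}$ come last. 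A single ancilla fault then copies the error onto a whole column, just as in the Part~4 argument. Each copy of the second kind is activated by $e(\mathscr{T}(\mathcal{V}^2_{\mathrm{s}}),\mathscr{G}_2)$ faults, using the schedule of Part~4 of Lemma~\ref{lemma:stab_induced_TS_containg_classical_TS}. Summing over the copies gives the first expression.

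The other three expressions follow by swapping which factor is starred and by applying the dual ($Z$-side, $\mathscr{G}_1$/$\mathscr{G}_2^{\mathsf{T}}$) description with $\mathcal{V}^1_{\mathrm{s}}$ and $\mathcal{C}^2_{\mathrm{s}}$.
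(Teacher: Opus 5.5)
Your Parts~1 and~3 follow the paper's proof. Part~1 is the same type-by-type node comparison, Part~3(a) matches variable-node sets as in Part~3 of Lemma~\ref{lemma:stab_induced_TS_containg_classical_TS}, and Part~3(b) uses the same columnwise Part~4 schedule. Your per-type count in Part~2 is also exactly the paper's argument. That argument rules out only two VV-type or two CC-type nodes in one generator and never treats a mixed pair.

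The mixed case you flagged is a real gap, and it cannot be closed: the claim in Part~2 is false whenever $\mathcal{V}^1_{\mathrm{s}^\star}$ and $\mathcal{C}^2_{\mathrm{s}^\star}$ are both nonempty. Pick $v^1\in\mathcal{V}^1_{\mathrm{s}^\star}$ and $c^2\in\mathcal{C}^2_{\mathrm{s}^\star}$. Since $v^1$ is a node of the induced subgraph $\mathscr{T}(\mathcal{C}^1_{\mathrm{s}})$, it has a neighbor $c^1\in\mathcal{C}^1_{\mathrm{s}}$. Since $c^2$ is a node of $\mathscr{T}(\mathcal{V}^2_{\mathrm{s}})$, it has a neighbor $v^2\in\mathcal{V}^2_{\mathrm{s}}$. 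The $X$-type generator $c^1v^2$ then contains both $v^1v^2\in\mathcal{V}^1_{\mathrm{s}^\star}\times\mathcal{V}^2_{\mathrm{s}}$ and $c^1c^2\in\mathcal{C}^1_{\mathrm{s}}\times\mathcal{C}^2_{\mathrm{s}^\star}$, which are two nodes of $\mathscr{T}_{\mathrm{comp}}$. These are exactly the two qubits on which $c^1v^2$ overlaps the $Z$-check $v^1c^2$, the shared check of Part~1, so the obstruction is built into every such composite. In particular, your first fallback (choose the starred sets to avoid this joint adjacency) is vacuous. It forces one starred set to be empty, and then $\mathscr{T}_{\mathrm{comp}}$ is no longer a composite.

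What you can prove is your second fallback: each $X$-type generator contains at most one VV-type and at most one CC-type node of $\mathscr{T}_{\mathrm{comp}}$. The fault-count conclusion does not follow from this. Suppose the schedule of $c^1v^2$ executes the CNOTs to $v^1v^2$ and $c^1c^2$ last. Then a single ancilla $X$ fault corrupts both nodes. This is precisely the configuration that corrupts both component copies while keeping the shared check $v^1c^2$ satisfied (Fig.~\ref{fig:combination_of_two_classical_TS}). So the only schedule-independent statement is that one fault corrupts at most two nodes of $\mathscr{T}_{\mathrm{comp}}$. Equality with the data-error count needs an additional hypothesis on the schedule.

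A smaller point in Part~3(b): within a single expression, generators in $\mathcal{C}^1_{\mathrm{s}^\star}\times\mathcal{V}^2_{\mathrm{s}^\star}$ can be asked for both a CC-type and a VV-type hook pattern, which is not possible in one schedule as you describe it. You should say how one schedule serves both. For example, place the targeted VV block immediately before the targeted CC block at the end of that generator's schedule, so one fault produces their union. With that change, the sum is still a valid upper bound.
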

\begin{proof}
    To complete the proof of Part~1, we must show that for any \(v^1_{\mathrm{i}'} \in \mathscr{T}(\mathcal{C}_\mathrm{s}^1)\) and any
    \(c^2_{\mathrm{j}} \in \mathscr{T}(\mathcal{V}_\mathrm{s}^2)\), the subgraphs
    \[
        \{ v_{\mathrm{i}'}^1 \} \times \mathscr{T}(\mathcal{V}_{\mathrm{s}}^2)
        \quad\text{and}\quad
        \mathscr{T}(\mathcal{C}_{\mathrm{s}}^1) \times \{ c_{\mathrm{j}}^2 \}
    \]
    intersect in \(\mathscr{G}_{\mathrm{z}}\) only in the VC-type check node \(v^1_{\mathrm{i}'}c^2_{\mathrm{j}}\).
First, consider the variable nodes. By construction, all variable nodes in the subgraph
    \(\{ v_{\mathrm{i}'}^1 \} \times \mathscr{T}(\mathcal{V}_{\mathrm{s}}^2)\)
    are of VV-type, while all the variable nodes in the subgraph
    \(\mathscr{T}(\mathcal{C}_{\mathrm{s}}^1) \times \{ c_{\mathrm{j}}^2 \}\)
    are of CC-type. Hence, these two subgraphs do not share any variable nodes.
 Next, consider the check nodes. In subgraph
    \(\{ v_{\mathrm{i}'}^1 \} \times \mathscr{T}(\mathcal{V}_{\mathrm{s}}^2)\),
    each check node has the form \(v^1_{\mathrm{i}'}c^2_{\mathrm{\ell}}\) with
    \(c^2_{\mathrm{\ell}} \in \mathscr{T}(\mathcal{V}_{\mathrm{s}}^2)\). In subgraph
    \(\mathscr{T}(\mathcal{C}_{\mathrm{s}}^1) \times \{ c_{\mathrm{j}}^2 \}\),
    each check node has the form \(v^1_{\mathrm{k}'}c^2_{\mathrm{j}}\) with
    \(v^1_{\mathrm{k}'} \in \mathscr{T}(\mathcal{C}_\mathrm{s}^1)\). Therefore, the only check node that can possibly lie in the intersection of these two subgraphs is
    \(v^1_{\mathrm{i}'}c^2_{\mathrm{j}}\). 
Finally, since the variable node \(v^1_{\mathrm{i}'}\) belongs to
    \(\mathscr{T}(\mathcal{C}_\mathrm{s}^1)\) and the check node \(c^2_{\mathrm{j}}\) belongs to
    \(\mathscr{T}(\mathcal{V}_\mathrm{s}^2)\), the check node \(v^1_{\mathrm{i}'}c^2_{\mathrm{j}}\)
    actually lies in both
    \(\{ v_{\mathrm{i}'}^1 \} \times \mathscr{T}(\mathcal{V}_{\mathrm{s}}^2)\) and
    \(\mathscr{T}(\mathcal{C}_{\mathrm{s}}^1) \times \{ c_{\mathrm{j}}^2 \}\).
    Thus, these two subgraphs intersect exactly in this VC-type check node, which completes the proof of Part~1.
    
    Now consider two VV-type variable nodes, say \(v^1_{\mathrm{i}'}v^2_{\mathrm{i}}\) and
    \(v^1_{\mathrm{k}'}v^2_{\mathrm{k}}\) from \(\mathscr{T}_\mathrm{comp}\). From the
    construction of \(\mathscr{T}_\mathrm{comp}\) it follows that
    \(v^1_{\mathrm{i}'},v^1_{\mathrm{k}'} \in \mathcal{V}_{\mathrm{s}^\star}^1\). Furthermore, the
    lemma assumes that no two variable nodes in \(\mathcal{V}_{\mathrm{s}^\star}^1\) have a common
    check node as neighbor, which implies that the variable nodes \(v^1_{\mathrm{i}'}\) and
    \(v^1_{\mathrm{k}'}\) do not have a common check node as neighbor.
 A CV-type check node, say \(c^1v^2\), can have both
    \(v^1_{\mathrm{i}'}v^2_{\mathrm{i}}\) and \(v^1_{\mathrm{k}'}v^2_{\mathrm{k}}\) as neighbors
    only if \(v^2 = v^2_{\mathrm{i}} = v^2_{\mathrm{k}}\) (since a CV-type check fixes the second coordinate) \emph{and} the check node \(c^1\) is a neighbor of both \(v^1_{\mathrm{i}'}\) and
    \(v^1_{\mathrm{k}'}\), that is,
    \(c^1 \in \mathcal{N}_{v^1_{\mathrm{i}'}} \cap \mathcal{N}_{v^1_{\mathrm{k}'}}\). This is
    impossible because it contradicts the assumption of the lemma that no two variable nodes in
    \(\mathcal{V}_{\mathrm{s}^\star}^1\) share a common check-node neighbor.

 An analogous argument applies to two CC-type variable nodes in \(\mathscr{T}_\mathrm{comp}\), using the assumption that no two check nodes in \(\mathcal{C}_{\mathrm{s}^\star}^2\) share a common neighboring variable node in \(\mathscr{G}_2\).
Using this fact and following the same type of argument as in the proof of Part~5 of
    Lemma~\ref{lemma:stab_induced_TS_containg_classical_TS}, we conclude that none of the variable nodes in $\mathscr{T}_{\mathrm{comp}}$ is  present in the support of more than one CV-type checks of the form $c^1v^2$ with $c^1 \in \mathcal{C}^1_{\mathrm{s}^\star}$ and $v^2 \in \mathcal{V}^2_{\mathrm{s}^\star}$. 
    So any error during the measurement of these checks does not propagate to more than one variable nodes in $\mathscr{T}_{\mathrm{comp}}$.
    Consequently, the number of CNOT failures required to induce
    \(\mathscr{T}_\mathrm{comp}\) is equal to the number of data errors required to induce it, completing the proof of Part~2.

 To show that the subgraph \(\mathscr{T}_{\mathrm{comp}}\), defined in Part~3, is exactly the subgraph induced by the CV-type checks in
\[
\bigl( \mathcal{C}^{1}_{\mathrm{s^\star}} \times \mathcal{V}^{2}_{\mathrm{s}} \bigr)
\,\cup\,
\bigl( \mathcal{C}^{1}_{\mathrm{s}} \times \mathcal{V}^{2}_{\mathrm{s^\star}} \bigr),
\]
we must prove the following two statements:
1) Every variable node that appears in the support of a CV-type check from
$
\bigl( \mathcal{C}^{1}_{\mathrm{s^\star}} \times \mathcal{V}^{2}_{\mathrm{s}} \bigr)
\,\cup\,
\bigl( \mathcal{C}^{1}_{\mathrm{s}} \times \mathcal{V}^{2}_{\mathrm{s^\star}} \bigr)
$
belongs to \(\mathscr{T}_{\mathrm{comp}}\).
2) Conversely, every variable node in \(\mathscr{T}_{\mathrm{comp}}\) appears in support of at least one CV-type check from the above set.
Proving both directions shows that the set of variable nodes of \(\mathscr{T}_{\mathrm{comp}}\) coincides exactly with the set of variable nodes that is in the support of these CV-type checks. Therefore, \(\mathscr{T}_{\mathrm{comp}}\) is  the subgraph induced by these checks.
The detailed argument is analogous to the proof of Part~3 of Lemma~\ref{lemma:stab_induced_TS_containg_classical_TS}, so we do not repeat it here.
Using a counting argument similar to that in the proof of Part~4 of Lemma~\ref{lemma:stab_induced_TS_containg_classical_TS}, one can show that there exist syndrome-measurement schedules (in general, different ones for different expressions) such that the number of CNOT failures required to induce \(\mathscr{T}_{\mathrm{comp}}\) is upper-bounded by the minimum of the four quantities listed in Part~3(b). More explicitly, for each of the four expressions in Part~3(b), the schedule construction of Part~4 of Lemma~\ref{lemma:stab_induced_TS_containg_classical_TS} applies columnwise to the starred factor: for every node of the starred set, a schedule exists in which the required copies of the non-starred TS are corrupted using $e(\cdot,\cdot)$ faults, and the stated multiplier counts the columns. The minimum over the four expressions follows because any single such construction suffices to induce \(\mathscr{T}_{\mathrm{comp}}\). This completes the proof of Part~3.
\end{proof}
\begin{figure*}
    \begin{subfigure}[t]{0.23\textwidth}
    \centering
        \input{Figures/TS_from_two_classical_TSs/classical_TS_from_parent_code_1}
        
        \subcaption{A subgraph, denoted by $\mathscr{T}(\mathcal{C}_{\mathrm{s}}^1)$, of the Tanner graph $\mathscr{G}_1$ induced by $\mathcal{C}_{\mathrm{s}}^1=\{c^1_1,c^1_2,c^1_3\}$. Subgraph $\mathscr{T}(\mathcal{C}^1_{\mathrm{s}})$ behaves as a TS when the roles of variable and check nodes are interchanged.}
        \label{fig:Classical_TS_from_code_1}
    \end{subfigure}
    \hfill
    \begin{subfigure}[t]{0.23\textwidth}
    \centering
        \input{Figures/TS_from_two_classical_TSs/classical_TS_from_parent_code2}
        \subcaption{A subgraph, denoted by $\mathscr{T}(\mathcal{V}_{\mathrm{s}}^2)$, of the Tanner graph $\mathscr{G}_2$ induced by $\mathcal{V}_{\mathrm{s}}^2=\{v^2_1,v^2_2,v^2_3\}$.}
       \label{fig:Classical_TS_from_code_2}
    \end{subfigure}
    \hfill
     \begin{subfigure}[t]{0.45\textwidth}
    \centering
        \input{Figures/TS_from_two_classical_TSs/combination_of_two_classical_TSs}
        
        \subcaption{Combination of the two classical TSs shown in (a) and (b), given by $\bigl(\mathscr{T}(\mathcal{C}^1_{\mathrm{s}}) \times c^2_6\bigr) \cup \bigl(v^1_8 \times \mathscr{T}(\mathcal{V}^2_{\mathrm{s}})\bigr)$. In this combination, the copy of $\mathscr{T}(\mathcal{C}^1_{\mathrm{s}})$ and the copy of $\mathscr{T}(\mathcal{V}^2_{\mathrm{s}})$ share the check node $v^1_8 c^2_6$. Erroneous variable nodes are shown with black inner shapes, as are unsatisfied checks. When the variable nodes of both $\mathscr{T}(\mathcal{C}^1_{\mathrm{s}}) \times c^2_6$ and $v^1_8 \times \mathscr{T}(\mathcal{V}^2_{\mathrm{s}})$ are in error, the shared check becomes satisfied. A decoder for the HGP code derived from the decoders that resolve these TSs when used to decode the classical components is not guaranteed to decode such combined TSs, since the sets of unsatisfied checks in the component TSs change when they are combined.}
        \label{fig:combination_of_two_classical_TS}
    \end{subfigure}
    \caption{Combining two TSs, each taken from a different parent code, to obtain a TS in the resulting HGP code.}
    \label{fig:combination_of_two_classical_TSs_main}
   \end{figure*}
Part~1 of Lemma~\ref{lemma:composite_TS_containing_single_TS_from_parent_code} explains how TSs inherited from the two parent codes can combine to form a composite TS in an HGP/LP code. In particular, when the TSs from both parents are simultaneously active, decoders derived from the parent-code decoders can fail even if each parent decoder corrects its own TS in isolation.
The reason is that the two component TSs can share check nodes. In that case, the syndrome bit at a shared check is determined jointly by errors from both components, so the syndrome pattern of the composite TS need not decompose into the syndrome patterns of the two parent TSs. The same mechanism underlies the composite TS families in Parts~2 and~3.
Parts~2 and~3 further distinguish composite TSs that can be induced by a number of CNOT failures equal to the number of data errors required to induce them from those for which these numbers can differ (depending on the measurement schedule). This distinction is useful when designing decoders that provably eliminate all error patterns up to a target Hamming weight, which in turn fixes the slope of the logical error-rate curve in the error-floor regime.

\begin{figure*}

 \centering
 \input{Figures/TSs_from_two_codes_and_circuit_level_error/decoding_illustration_iteration_0}    
    \caption{The \emph{stabilizer-induced} trapping set $\mathscr{T}_{\mathrm{stab}} \coloneqq \bigl(\mathscr{T}(\mathcal{C}^1_{\mathrm{s}}) \times \mathscr{T}(v^2_3)\bigr)\cup\bigl(\mathscr{T}(c^1_3) \times \mathscr{T}(\mathcal{V}^2_{\mathrm{s}})\bigr)$, where $\mathscr{T}(\mathcal{C}^1_{\mathrm{s}})$ and $\mathscr{T}(\mathcal{V}^2_{\mathrm{s}})$ are TSs in $\mathscr{G}_1^{\mathsf{T}}$ and $\mathscr{G}_2$, respectively, $c^1_3$ is a check node in $\mathscr{G}_1$, $v^2_3$ is a variable node in $\mathscr{G}_2$, and $\mathscr{T}(c^1_3)$, $\mathscr{T}(v^2_3)$ denote the subgraphs induced by these single nodes. Because $v^2_3$ has degree $3$ in $\mathscr{G}_2$, the first product term contributes three copies of $\mathscr{T}(\mathcal{C}^1_{\mathrm{s}})$; because $c^1_3$ has degree $4$ in $\mathscr{G}_1$, the second contributes four copies of $\mathscr{T}(\mathcal{V}^2_{\mathrm{s}})$. These copies are interconnected through the check nodes adjacent to (i)~the CC-type variable nodes $c_3^1 c_\mathrm{j}^2$ with $c^2_\mathrm{j} \in \mathcal{N}_{v_3^2}$, and (ii)~the VV-type variable nodes $v^1_{\mathrm{i}'} v^2_3$ with $v^1_{\mathrm{i}'} \in \mathcal{N}_{c^1_3}$. To reduce clutter, only the nodes connecting the copies of $\mathscr{T}(\mathcal{C}^1_{\mathrm{s}})$ to those of $\mathscr{T}(\mathcal{V}^2_{\mathrm{s}})$ are labeled. By arguments analogous to Part~3 of Lemma~\ref{lemma:stab_induced_TS_containg_classical_TS}, $\mathscr{T}_{\mathrm{stab}}$ is induced by the CV-type stabilizer generators in $(\mathcal{C}^1_{\mathrm{s}} \times v^2_3) \cup (c^1_3 \times \mathcal{V}^2_{\mathrm{s}})$.}
\label{fig:combination_of_two_classical_TS_circuit_level_noise}
 \end{figure*}

We now illustrate the conclusions of Lemma~\ref{lemma:composite_TS_containing_single_TS_from_parent_code} using the composite trapping-set examples in Fig.~\ref{fig:combination_of_two_classical_TSs_main} and Fig.~\ref{fig:combination_of_two_classical_TS_circuit_level_noise}. In particular, Fig.~\ref{fig:Classical_TS_from_code_1} shows a $(4,8)$ TS $\mathscr{T}(\mathcal{C}_\mathrm{s}^1)$ in $\mathscr{G}_1^\mathsf{T}$, and Fig.~\ref{fig:Classical_TS_from_code_2} shows a $(4,4)$ TS $\mathscr{T}(\mathcal{V}_\mathrm{s}^2)$ in $\mathscr{G}_2$.
The composite TS $\bigl(\mathscr{T}(\mathcal{C}^1_{\mathrm{s}}) \times c_6^2\bigr) \cup \bigl(v^1_8 \times \mathscr{T}(\mathcal{V}^2_{\mathrm{s}})\bigr)$ in Fig.~\ref{fig:combination_of_two_classical_TS} illustrates Part~1: it contains one copy of each parent TS, and these two copies share the VC-type check node $v^1_8 c^2_6$. When both copies are active, the parity bit at the shared check can change (or even cancel), so the composite syndrome no longer decomposes into the two component syndromes. Figure~\ref{fig:combination_of_two_classical_TS_circuit_level_noise} illustrates the structural characterization in Parts~2--3: the composite TS contains multiple interconnected copies of the parent TS, and can be described as the subgraph induced by an explicit set of CV-type checks (equivalently, by a set of $X$-type stabilizer generators). This induced-subgraph viewpoint is what enables schedule-dependent counting arguments for how many CNOT faults are needed to realize such structures.
Even when constructing decoders that fully resolve composite TSs directly from classical decoders is challenging, using decoders derived from the parent codes as an initial design step significantly simplifies the process. When run as described in Algorithm~\ref{alg:derived_decoders_from_parent_decoders}, these derived decoders already eliminate many decoding failures, thereby reducing the number of remaining cases that must be addressed explicitly in our decoder design.

The results of Sections~\ref{sec:derived_decoders} and~\ref{sec:comp_TS} together yield a design procedure targeting the dominant failure configurations in the error-floor regime. Given the operating regime of interest: (i) enumerate the TSs of the parent codes with small critical number; (ii) the arguments in Sections~\ref{sec:derived_decoders}--\ref{sec:comp_TS} bound the number of circuit faults required to induce the corresponding product and composite TS families, and thereby identify which families can arise with few faults---families requiring substantially more faults, or error weights whose probability is negligible in the operating regime, require no treatment; (iii) design decoders for the dominant families via Lemmas~\ref{lemma:decoder_resolving_classical_plus_stab_TS} and~\ref{lemma:composite_trapping_set_decoding_condition_tree}; (iv) verify by adversarial error injection as in Section~\ref{sec:adv_noise_model}. We note that decoding approaches relying on post-processing or ensemble diversity possess no analogue of Step~(ii). Without such a characterization of the failure mechanisms, one cannot identify which configurations dominate or certify that they have been addressed.

\section{Handling CNOT-Failure-Induced Syndrome Errors Without Additional Nodes}
\label{sec:syndrome_error_inducing_CNOT_faults}
\begin{figure*}
 \centering
    \usetikzlibrary{patterns,snakes}
\tikzstyle{cnode}=[circle,minimum size=0.4 cm,draw,black,scale=0.7]
\tikzstyle{gray_cnode}=[circle,minimum size=0.4 cm,draw,black,fill=gray,scale=0.7]
\tikzstyle{blue_cnode}=[circle,minimum size=0.4 cm,draw,black,fill=blue,scale=0.7]
\tikzstyle{scnode}=[circle,minimum size=0.15 cm,draw,black,scale=0.33]
\tikzstyle{sb_cnode}=[circle,minimum size=0.15 cm,draw,black,fill=gray,scale=0.28]
\tikzstyle{rnode}=[rectangle,draw,minimum width=0.35 cm,minimum height=0.35cm,outer sep=0pt,scale=0.7]
\tikzstyle{rrnode}=[rectangle,red,draw,minimum width=0.35 cm,minimum height=0.35cm,outer sep=0pt,scale=0.4]
\tikzstyle{srnode}=[rectangle,draw,minimum width=0.1 cm,minimum height=0.1cm,outer sep=0pt,scale=0.33]
\tikzstyle{sb_rnode}=[rectangle,draw,minimum width=0.1 cm,minimum height=0.1cm,outer sep=0pt,fill=gray,scale=0.33]
\tikzstyle{overbrace_style}=[decorate,decoration={brace,raise=2mm,amplitude=3pt}]
\tikzstyle{hexagon}=[regular polygon, regular polygon sides=6, shape aspect=0.5, minimum width=1.5cm, minimum height=1cm, draw,shape border rotate=60,scale=0.33]
\tikzstyle{Ellipse}=[ellipse,minimum width=3cm,minimum height=1.25cm,draw,scale=0.33]
\tikzstyle{smallEllipse}=[ellipse,minimum width=1.75cm,minimum height=1.25cm,draw,scale=0.33]
\definecolor{color1}{rgb}{1,0.2,0.3}
\definecolor{color2}{rgb}{0.4,0.5,0.7}
\definecolor{color3}{rgb}{0.1,0.8,0.5}
\definecolor{color4}{rgb}{0.5,0.3,1}
\definecolor{color5}{rgb}{0.5,1,1}
\definecolor{color6}{rgb}{0.8,0.3,0.6}
\definecolor{color7}{rgb}{0.6,0.4,0.3}

\begin{tikzpicture}[every node/.style={scale=1}]
\begin{scope}[node distance=1.5 cm,semithick]

\node [cnode] (v1){};
\node [cnode] (v2) [right of=v1]{};
\node (v1_lable) [left of=v1,xshift=1.1cm]{$d_1$};
\node (v2_lable) [left of=v2,xshift=1.1cm]{$d_2$};
\node [cnode] (v3) [right of=v2]{};
\node (v3_lable) [left of=v3,xshift=1.1cm]{$d_3$};
\node [cnode] (v4) [right of=v3]{};
\node (v4_lable) [left of=v4,xshift=1.1cm]{$d_4$};
\node [cnode] (v5) [right of=v4]{};
\node (v5_lable) [left of=v5,xshift=1.1cm]{$d_5$};
\node [cnode] (v6) [right of=v5]{};
\node (v6_lable) [left of=v6,xshift=1.1cm]{$d_6$};
\node [cnode] (v7) [right of=v6]{};
\node (v7_lable) [left of=v7,xshift=1.1cm]{$d_7$};
\draw[overbrace_style] (v1) -- (v7);
\node (label_first_round) [above of=v4,yshift=-1cm]{\textbf{First round of syndrome measurement}};
\node [rnode] (c1) [below of=v3,xshift=0.5cm,yshift=-1cm]{};
\node [rnode] (c2) [right of=c1]{};
\node [rnode] (c3) [right of=c2]{};
\draw (c1) -- (v4);
\draw (c1) -- (v5);
\draw (c1) -- (v6);
\draw (c1) -- (v7);
\draw (c2) -- (v2);
\draw (c2) -- (v3);
\draw (c2) -- (v6);
\draw (c2) -- (v7);
\draw (c3) -- (v1);
\draw (c3) -- (v3);
\draw (c3) -- (v5);
\draw (c3) -- (v7);
\node [cnode] (v8) [right of=v7,xshift=2cm]{};
\node (v8_lable) [left of=v8,xshift=1.1cm]{$d_1$};
\node [cnode] (v9) [right of=v8]{};
\node (v9_lable) [left of=v9,xshift=1.1cm]{$d_2$};
\node [cnode] (v10) [right of=v9]{};
\node (v10_lable) [left of=v10,xshift=1.1cm]{$d_3$};
\node [cnode] (v11) [right of=v10]{};
\node (v11_lable) [left of=v11,xshift=1.1cm]{$d_4$};
\node [cnode] (v12) [right of=v11]{};
\node (v12_lable) [left of=v12,xshift=1.1cm]{$d_5$};
\node [cnode] (v13) [right of=v12]{};
\node (v13_lable) [left of=v13,xshift=1.1cm]{$d_6$};
\node [cnode] (v14) [right of=v13]{};
\node (v14_lable) [left of=v14,xshift=1.1cm]{$d_7$};
\draw[overbrace_style] (v8) -- (v14);
\node (label_second_round) [above of=v11,yshift=-1cm]{\textbf{Second round of syndrome measurement}};
\node [rnode] (c4) [below of=v10,xshift=0.5cm,yshift=-1cm]{};
\node [rnode] (c5) [right of=c4]{};
\node [rnode] (c6) [right of=c5]{};
\draw (c4) -- (v11);
\draw (c4) -- (v12);
\draw (c4) -- (v13);
\draw (c4) -- (v14);
\draw (c5) -- (v9);
\draw (c5) -- (v10);
\draw (c5) -- (v13);
\draw (c5) -- (v14);
\draw (c6) -- (v8);
\draw (c6) -- (v10);
\draw (c6) -- (v12);
\draw (c6) -- (v14);
\node[blue_cnode] (cnot_fail) [below of=c3,xshift=2cm,yshift=-1cm]{};
\draw (cnot_fail)--(c1);
\draw (cnot_fail)--(c3);
\draw (cnot_fail)--(c5);
\node [gray_cnode] (s1) [right of=cnot_fail,xshift=0.5cm]{};
\node (s1_label) [left of=s1,xshift=1.1cm]{$s_1$};
\node [gray_cnode] (s2) [right of=s1,xshift=0cm]{};
\node (s2_lable) [left of=s2,xshift=1.1cm]{$s_2$};
\node [gray_cnode] (s3) [right of=s2,xshift=0cm]{};
\node (s3_lable) [left of=s3,xshift=1.1cm]{$s_3$};
\draw (s1)--(c1);
\draw (s1)--(c4);
\draw (s2)--(c2);
\draw (s2)--(c5);
\draw (s3)--(c3);
\draw (s3)--(c6);
\end{scope}    
\end{tikzpicture}
     \caption{Tanner graph for two rounds of $X$-stabilizer measurements. Unfilled circles represent data-qubit errors, and rectangles represent $X$-stabilizer generators. The blue node indicates a possible $Z$ error caused by a faulty CNOT between data qubit $d_7$ and ancilla $a_2$ in the first round, while gray nodes indicate measurement errors in the recorded syndromes. The syndrome pattern resulting from a CNOT failure between $d_7$ and $a_2$ is identical to the pattern produced by an initial $Z$ error on $d_7$ together with a measurement error on $a_2$, so the explicit CNOT-failure node is redundant. This figure is based on a CSS code rather than a QLDPC code, so the usual variable-node and check-node conventions for QLDPC Tanner graphs are not followed. The discussion applies to any CSS code, not only to QLDPC codes.}
     \label{fig:phenomenological_Tanner_graph}
 \end{figure*}
The previous two sections connected CNOT faults to TSs and showed how to avoid the decoding failures they cause without adding extra nodes to the Tanner graph during decoding. Here we ask whether CNOT faults that introduce \emph{syndrome measurement errors} (in addition to data errors), as described in Section~\ref{sec:cnot_faults_and_syndrome_error}, can also be handled without introducing additional nodes.
To illustrate the issue, consider the circuit in Fig.~\ref{fig:discrepancy_between_error_and_syndrome} for two rounds of $X$-stabilizer measurement. Suppose a faulty CNOT between data qubit $d_7$ and ancilla qubit $a_2$ introduces a $Z$ error on $d_7$. If the readout of the syndrome can also be faulty, a natural way to model this is to add explicit nodes that represent measurement errors (syndrome-bits), which yields the \emph{phenomenological} Tanner graph in Fig.~\ref{fig:phenomenological_Tanner_graph}.
A key observation is that a $Z$ error on $d_7$ introduced during measurement has the same measured syndrome as a $Z$ error present before measurement together with a syndrome-bit error on $a_2$. More generally, a CNOT failure that manifests as a measurement error can often be reinterpreted as a data error plus one or more syndrome-bit errors. This suggests that explicit CNOT-fault nodes might be avoidable, but only at the cost of an increased effective rate of symptom noise.
This tradeoff is different from the \emph{hook-error} case, where explicit CNOT-fault nodes can be removed without increasing the effective noise. If measurement-induced CNOT faults are absorbed into syndrome-bit errors, then the decoder must tolerate additional syndrome noise, which typically requires increased temporal redundancy (more rounds) or increased spatial redundancy (a lower-rate data--syndrome code~\cite{ashikhmin2020quantum,campbell2019theory,guttentag2024robust}). If, instead, one keeps explicit CNOT-fault nodes, the resulting graph typically contains additional short cycles (in particular, 4-cycles) that can degrade iterative-decoder performance; post-processing can mitigate this but increases decoding complexity and latency.

\begin{figure}
    \centering
    \input{Figures/TS_phenomenologic_tanner_graph}
    \caption{TSs in the \emph{phenomenological} Tanner graph. The nodes corresponding to data qubits and stabilizer generators are shown as white circles and rectangles, respectively. The nodes corresponding to the syndrome are shown in gray. The erroneous variable nodes and unsatisfied checks are shown in black. Note that two distinct error patterns can match the same set of unsatisfied checks: (1) one includes the data-qubit nodes that are originally in error, and (2) the other includes additional (non-data) nodes corresponding to syndrome-bit errors in addition to data-qubits that are not in error. Due to the existence of multiple explanations of the same syndrome, a message-passing decoder might fail to converge.}
    \label{fig:TS_phenomenological_tanner_graph}
\end{figure}

The \emph{phenomenological} Tanner graph contains multiple time-slice copies of the stabilizer-measurement Tanner graph (see Fig.~\ref{fig:phenomenological_Tanner_graph}). Consequently, for a given measured syndrome, there can be multiple distinct error patterns consistent with that syndrome, which can prevent message-passing decoders from converging (Fig.~\ref{fig:TS_phenomenological_tanner_graph}). To reduce the risk of such failures, we run the decoder for a fixed number of iterations on one time slice, then switch to another time slice, and repeat, carrying over the current variable-node beliefs rather than reinitializing. Proceeding sequentially through time slices encourages the decoder to commit to a single consistent explanation of the syndrome rather than trying to satisfy multiple competing explanations simultaneously.
\section{Simulation Results}
\label{sec:simulation_results}
In this section, we first validate our earlier claims using adversarial simulations: we inject errors confined to a specific TS and verify that the decoder, designed according to the principles in the previous sections, successfully avoids the corresponding decoding failure.
We then simulate the logical error rate under stochastic noise to compare the proposed decoder with the state-of-the-art BP+OSD and \emph{beam search} decoders. For these simulations, we use two LP codes, a $[[1054,140,d\leq 20]]$ code and a $[[775,43,d\leq 20]]$ code, both constructed via~\eqref{eq:LP_construction}. The $[[1054,140,d\leq 20]]$ LP code is obtained by using
\begin{equation*}
\label{eq:weight_matrix_Tanner_code}
\mathbf{B}_1=\mathbf{B}_2=\mathbf{B}=\begin{bmatrix}
x & x^2 & x^3 & x^4 & x^5\\
x^5 & x^{10} & x^{20} & x^{9} & x^{18}\\
x^{25} & x^{19} & x^7 & x^{14} & x^{28}
\end{bmatrix},
\end{equation*}
while the $[[775,43,d\leq 20]]$ LP code is obtained by using
\begin{equation*}
\label{eq:weight_matrix_Tanner_code_2}
\mathbf{B}_1=\mathbf{B}_2=\mathbf{B}=\begin{bmatrix}
x & x^2 & x^3 & x^4 \\
x^5 & x^{10} & x^{20} & x^{9} \\
x^{25} & x^{19} & x^7 & x^{14}
\end{bmatrix},
\end{equation*}
in~\eqref{eq:LP_construction}. Let us denote the classical Tanner graph obtained by lifting the base matrix $\mathbf{B}$ by $\mathscr{G}$, and the Tanner graph obtained by exchanging the roles of variable and check nodes in $\mathscr{G}$ by $\mathscr{G}^\mathsf{T}$.
In general, LP and HGP codes are built from two (possibly different) classical codes, so we used superscripts to distinguish nodes belonging to each code’s Tanner graph. In this section, however, we take both classical codes to be the same. Consequently, the superscripts are no longer necessary, and we omit them from the notation in what follows.

\subsection{Adversarial noise model}
\label{sec:adv_noise_model}

We next consider \emph{stabilizer-induced} TSs, which arise from the graph product of
tree-like subgraphs of the parent code's Tanner graph. In the Tanner graph corresponding to
the $Z$-type stabilizer generators, we enumerate error patterns supported on TSs induced by one,
two, or three $X$-type stabilizer generators, restricting each pattern to lie
either entirely on the VV-type qubits or entirely on the CC-type qubits. For two
or more generators, we define the support as the union of their supports,
i.e., a qubit is included if it is covered by at least one generator; qubits common to multiple stabilizer generators
are therefore retained. We restrict each adversarial pattern to VV-type or CC-type variable nodes because, for the \emph{stabilizer-induced} TS families illustrated in Section~\ref{sec:motivation_for_proposed_decoder}, these two restrictions capture the two canonical degenerate failure modes: the erroneous variables lie entirely on one side (among VV and CC) of $\mathscr{G}_{\mathrm{z}}$ and all checks in the TS are  unsatisfied.

We compare against a flooding-schedule min-sum decoder with OSD-CS post-processing of order $10$. Its performance strongly depends on the soft output passed to OSD-CS. Using the \emph{a posteriori} LLRs from the iteration with the fewest unsatisfied checks is the most favorable choice; even then, min-sum + OSD-CS $(10)$ leaves approximately two fifths of the error patterns supported on a single stabilizer generator, and roughly one quarter of the error patterns supported on three stabilizer generators, uncorrected for the $[[1054,140,d\leq 20]]$ code. Passing instead the LLRs after a fixed budget of $50$ iterations is markedly worse: on the same code, essentially none of the error patterns supported on a single stabilizer generator and only $1\%$ of the error patterns supported on three stabilizer generators are corrected. Similar failure rates are also observed for the $[[775,43,d\leq 20]]$ code.

In contrast, a min-sum decoder run under the alternating (VV/CC)
message-update schedule of~\cite{pradhan2025_TS_left_right_schedule_decoder}
corrects all of these error patterns on both codes, without post-processing, and
typically within two iterations. 
The \emph{beam-search} decoder proposed in \cite{ye2025beam} would converge better if it were run with this alternating schedule on the \emph{phenomenological} Tanner graph, which can resolve these \emph{stabilizer-induced} TSs without needing additional rounds.
The correlations introduced during syndrome
extraction can alternatively be represented by augmenting the decoding matrix with
one variable node per \emph{hook error}---the correlated multi-qubit error left by a
single ancilla fault. As discussed in Section~\ref{sec:motivation_for_proposed_decoder},
this kind of augmentation is expected to improve convergence because it makes
previously correlated faults explicit in the decoding graph. This also corrects
every pattern considered, but at a considerably
higher cost: the augmentation multiplies the number of variable nodes and raises
the check degrees, increasing the complexity of both message passing and when
used the Gaussian elimination and combination sweep in OSD. The alternating
schedule reaches the same outcome on the unmodified graph and with no
post-processing at all.

Next, we consider decoding failures in the Tanner graph $\mathscr{G}_\mathrm{z}$ caused by TSs of the forms $\mathscr{T}(c) \times \mathscr{T}(\mathcal{V}_\mathrm{s})$ and $\mathscr{T}(\mathcal{C}_\mathrm{s}) \times \mathscr{T}(v)$ (see Section~\ref{sec:proposed_decoders}). Here, $\mathscr{T}(\mathcal{V}_\mathrm{s})$ and $\mathscr{T}(\mathcal{C}_\mathrm{s})$ are TSs in $\mathscr{G}$ and $\mathscr{G}^\mathsf{T}$, respectively, while $c$ and $v$ denote a check node and a variable node in $\mathscr{G}$.
In the adversarial setting, we test Algorithm~\ref{alg:derived_decoders_from_parent_decoders} using the derived decoders $D_{\mathrm{vv}}$ and $D_{\mathrm{cc}}$ constructed in Section~\ref{sec:derived_decoders}. For TSs of the form $\mathscr{T}(c) \times \mathscr{T}(\mathcal{V}_\mathrm{s})$, we take $D_{\mathrm{cc}}$ to be the min-sum decoder and use, as $D_{\mathrm{vv}}$, a set of decoders derived from parent-code decoders. For TSs of the form $\mathscr{T}(\mathcal{C}_\mathrm{s}) \times \mathscr{T}(v)$, the roles of $D_{\mathrm{vv}}$ and $D_{\mathrm{cc}}$ are swapped, as in Section~\ref{sec:derived_decoders}.
For our experiments, we select $\mathscr{T}(\mathcal{V}_\mathrm{s})$ to be the $(5,3)$, $(6,4)$, and $(7,3)$ TSs in $\mathscr{G}$ for both LP codes $[[1054,140,d\leq 20]]$ and $[[775,43,d\leq 20]]$.
In our experiments, it suffices to fix the check node $c$ in $\mathscr{T}(c) \times \mathscr{T}(\mathcal{V}_\mathrm{s})$ and to vary only $\mathscr{T}(\mathcal{V}_\mathrm{s})$. The reason is that the rule in Step~10, together with $D_{\mathrm{vv}}$ (derived from a parent-code decoder that resolves $\mathscr{T}(\mathcal{V}_\mathrm{s})$), resolves the resulting TS independently of the specific choice of $c$.
We use the finite-alphabet iterative decoders (FAIDs)~\cite{FAID_diversity}, described in Appendix~\ref{sec:FAID}, to construct 4-bit decoders that avoid decoding failures in $\mathscr{G}_{\mathrm{z}}$ induced by these TSs. When ten 4-bit FAIDs are run as a set, they correct all enumerated error patterns for the $[[1054,140,d\leq 20]]$ code; for the $[[775,43,d\leq 20]]$ code, a two-decoder FAID set suffices.
The decoder-set sizes reported in this section (two for the $[[775, 43, d \leq 20]]$ code and ten for the $[[1054, 140, d \leq 20]]$ code) reflect the differing TS neighborhood of the two parent Tanner graphs used to construct these two LP codes, not the blocklengths. Because the ability of a finite-precision decoder to resolve a TS depends on the neighborhood in which the TS is embedded, the search must be repeated for each code; it is, however, performed entirely offline and contributes nothing to decoding latency.
For comparison, Table~\ref{tab:cc_success} reports success rates for BP (implemented here as flooding min-sum) combined with OSD-CS$(10)$ under four variants: decoding on the original graph vs. the \emph{detector-error} Tanner graph, and passing to OSD either the best \emph{a posteriori} LLRs $L^{\star}$ (early stopping---fewest unsatisfied checks) or the final-iteration LLRs $L^{(50)}$. We contrast these baselines with the proposed decoder (FAID set + Algorithm~\ref{alg:derived_decoders_from_parent_decoders}), which requires neither decoding on the \emph{detector-error} Tanner graph nor post-processing.
We also compare the success rates of the \emph{beam-search} decoder when its internal min-sum decoder is implemented using the alternating VV/CC schedule proposed in \cite{pradhan2025_TS_left_right_schedule_decoder}.
\begin{table*}[t]
\centering
\caption{Decoding success rates (\%) for error patterns supported inside a
\emph{stabilizer-induced} trapping set $\mathcal{T}(c)\times\mathcal{T}(\mathcal{V}_\mathrm{s})$ of an LP
code, as a function of the number of simultaneously active copies of the
classical TS $\mathcal{T}(V_s)$, the error weight within each active copy, and
the number of erroneous CC-type variable nodes drawn from the same TS. MS
denotes the min-sum decoder, run with a flooding schedule for 50 iterations.
For each TS, decoding success rate is reported for only the error patterns with $a-1$ and $a$. \emph{circuit-level} denotes decoding on the
detector--error Tanner graph; \emph{code-level} denotes decoding on the code
Tanner graph $\mathbf{H}_\mathrm{z}$ itself. $L^{\star}$ and $L^{(50)}$ denote the \emph{a posteriori}
LLRs passed to the post-processor, taken respectively from the iteration with
the fewest unsatisfied checks and from the last iteration. The subscript
``cl'' marks that the decoder is run on the \emph{detector--error} Tanner graph.
The last column is a scheduled beam decoder, run on the
\emph{phenomenological} detector graph ($R=3$ rounds) with a layered
schedule with beam width $16$, and maximum number of rounds $24$;
a decoding counts as a success only if the returned residual is a stabilizer.}
\label{tab:cc_success}
\resizebox{\textwidth}{!}{%
\begin{tabular}{|l|c|c|c|c|c|c|c|c|c|c|}
\hline
Code & $\mathscr{T}(\mathcal{V}_\mathrm{s})$ & Weight & Copies & CC err. &
\multicolumn{4}{c|}{MS $+$ OSD-CS(10)} &
\multirow{2}{*}{\shortstack{FAID set\\$+$ Alg.~1}} &
\multirow{2}{*}{\shortstack{Scheduled\\beam}} \\
\cline{6-9}
 & & & & & code-level $L^{\star}$ & code-level $L^{(50)}$ &
 circuit-level $L^{\star}_{\mathrm{cl}}$ & circuit-level $L^{(50)}_{\mathrm{cl}}$ & & \\
\hline
\multirow{24}{*}{$[[1054,140]]$} & \multirow{8}{*}{$(5,3)$} & \multirow{4}{*}{$4$} & \multirow{2}{*}{$1$} & $0$ & 100.00 & 100.00 & 64.00 & 96.00 & 100.00 & 100.00 \\
 &  &  &  & $1$ & 100.00 & 100.00 & 100.00 & 100.00 & 100.00 & 100.00 \\
\cline{4-11}
 &  &  & \multirow{2}{*}{$3$} & $0$ & 100.00 & 92.00 & 100.00 & 76.00 & 100.00 & 100.00 \\
 &  &  &  & $1$ & 100.00 & 0.00 & 92.00 & 100.00 & 100.00 & 100.00 \\
\cline{3-11}
 &  & \multirow{4}{*}{$5$} & \multirow{2}{*}{$1$} & $0$ & 100.00 & 100.00 & 60.00 & 100.00 & 100.00 & 100.00 \\
 &  &  &  & $1$ & 100.00 & 20.00 & 100.00 & 100.00 & 100.00 & 100.00 \\
\cline{4-11}
 &  &  & \multirow{2}{*}{$3$} & $0$ & 100.00 & 0.00 & 100.00 & 60.00 & 100.00 & 40.00 \\
 &  &  &  & $1$ & 100.00 & 0.00 & 80.00 & 60.00 & 100.00 & 50.00 \\
\cline{2-11}
 & \multirow{8}{*}{$(6,4)$} & \multirow{4}{*}{$5$} & \multirow{2}{*}{$1$} & $0$ & 97.22 & 29.26 & 48.33 & 98.89 & 100.00 & 100.00 \\
 &  &  &  & $1$ & 98.52 & 33.89 & 97.59 & 98.89 & 100.00 & 100.00 \\
\cline{4-11}
 &  &  & \multirow{2}{*}{$3$} & $0$ & 81.30 & 0.00 & 77.78 & 55.93 & 100.00 & 96.67 \\
 &  &  &  & $1$ & 79.07 & 0.19 & 84.26 & 69.07 & 100.00 & 88.33 \\
\cline{3-11}
 &  & \multirow{4}{*}{$6$} & \multirow{2}{*}{$1$} & $0$ & 100.00 & 16.67 & 22.22 & 100.00 & 100.00 & 90.00 \\
 &  &  &  & $1$ & 100.00 & 13.33 & 86.67 & 91.11 & 100.00 & 90.00 \\
\cline{4-11}
 &  &  & \multirow{2}{*}{$3$} & $0$ & 83.33 & 0.00 & 41.11 & 12.22 & 100.00 & 10.00 \\
 &  &  &  & $1$ & 61.11 & 0.00 & 50.00 & 24.44 & 100.00 & 20.00 \\
\cline{2-11}
 & \multirow{8}{*}{$(7,3)$} & \multirow{4}{*}{$6$} & \multirow{2}{*}{$1$} & $0$ & 100.00 & 13.81 & 34.29 & 88.57 & 100.00 & 90.00 \\
 &  &  &  & $1$ & 100.00 & 5.24 & 89.52 & 88.10 & 100.00 & 87.14 \\
\cline{4-11}
 &  &  & \multirow{2}{*}{$3$} & $0$ & 69.05 & 0.00 & 43.81 & 19.05 & 100.00 & 42.86 \\
 &  &  &  & $1$ & 61.90 & 0.00 & 64.29 & 38.10 & 100.00 & 30.00 \\
\cline{3-11}
 &  & \multirow{4}{*}{$7$} & \multirow{2}{*}{$1$} & $0$ & 100.00 & 0.00 & 10.00 & 50.00 & 100.00 & 70.00 \\
 &  &  &  & $1$ & 96.67 & 0.00 & 66.67 & 33.33 & 100.00 & 50.00 \\
\cline{4-11}
 &  &  & \multirow{2}{*}{$3$} & $0$ & 6.67 & 0.00 & 26.67 & 0.00 & 100.00 & 0.00 \\
 &  &  &  & $1$ & 26.67 & 0.00 & 33.33 & 3.33 & 100.00 & 0.00 \\
\hline
\multirow{24}{*}{$[[775,43]]$} & \multirow{8}{*}{$(5,3)$} & \multirow{4}{*}{$4$} & \multirow{2}{*}{$1$} & $0$ & 100.00 & 100.00 & 50.00 & 100.00 & 100.00 & 100.00 \\
 &  &  &  & $1$ & 100.00 & 100.00 & 100.00 & 100.00 & 100.00 & 100.00 \\
\cline{4-11}
 &  &  & \multirow{2}{*}{$3$} & $0$ & 100.00 & 90.00 & 90.00 & 90.00 & 100.00 & 100.00 \\
 &  &  &  & $1$ & 100.00 & 100.00 & 100.00 & 90.00 & 100.00 & 100.00 \\
\cline{3-11}
 &  & \multirow{4}{*}{$5$} & \multirow{2}{*}{$1$} & $0$ & 100.00 & 100.00 & 50.00 & 100.00 & 100.00 & 100.00 \\
 &  &  &  & $1$ & 100.00 & 100.00 & 100.00 & 100.00 & 100.00 & 100.00 \\
\cline{4-11}
 &  &  & \multirow{2}{*}{$3$} & $0$ & 100.00 & 100.00 & 100.00 & 100.00 & 100.00 & 0.00 \\
 &  &  &  & $1$ & 100.00 & 50.00 & 100.00 & 100.00 & 100.00 & 60.00 \\
\cline{2-11}
 & \multirow{8}{*}{$(6,4)$} & \multirow{4}{*}{$5$} & \multirow{2}{*}{$1$} & $0$ & 100.00 & 100.00 & 50.00 & 100.00 & 100.00 & 100.00 \\
 &  &  &  & $1$ & 100.00 & 99.38 & 99.38 & 99.38 & 100.00 & 100.00 \\
\cline{4-11}
 &  &  & \multirow{2}{*}{$3$} & $0$ & 90.74 & 90.74 & 91.98 & 78.40 & 100.00 & 93.33 \\
 &  &  &  & $1$ & 93.21 & 93.83 & 95.68 & 88.27 & 100.00 & 81.67 \\
\cline{3-11}
 &  & \multirow{4}{*}{$6$} & \multirow{2}{*}{$1$} & $0$ & 100.00 & 77.78 & 22.22 & 100.00 & 100.00 & 90.00 \\
 &  &  &  & $1$ & 100.00 & 88.89 & 88.89 & 100.00 & 100.00 & 90.00 \\
\cline{4-11}
 &  &  & \multirow{2}{*}{$3$} & $0$ & 74.07 & 88.89 & 77.78 & 55.56 & 100.00 & 50.00 \\
 &  &  &  & $1$ & 88.89 & 92.59 & 85.19 & 66.67 & 100.00 & 50.00 \\
\cline{2-11}
 & \multirow{8}{*}{$(7,3)$} & \multirow{4}{*}{$6$} & \multirow{2}{*}{$1$} & $0$ & 100.00 & 61.90 & 23.81 & 96.83 & 100.00 & 97.14 \\
 &  &  &  & $1$ & 100.00 & 71.43 & 87.30 & 98.41 & 100.00 & 97.14 \\
\cline{4-11}
 &  &  & \multirow{2}{*}{$3$} & $0$ & 85.71 & 84.13 & 73.02 & 47.62 & 100.00 & 30.00 \\
 &  &  &  & $1$ & 95.24 & 73.02 & 76.19 & 50.79 & 100.00 & 28.57 \\
\cline{3-11}
 &  & \multirow{4}{*}{$7$} & \multirow{2}{*}{$1$} & $0$ & 66.67 & 66.67 & 0.00 & 100.00 & 100.00 & 90.00 \\
 &  &  &  & $1$ & 66.67 & 33.33 & 66.67 & 88.89 & 100.00 & 90.00 \\
\cline{4-11}
 &  &  & \multirow{2}{*}{$3$} & $0$ & 66.67 & 22.22 & 22.22 & 11.11 & 100.00 & 20.00 \\
 &  &  &  & $1$ & 66.67 & 33.33 & 55.56 & 44.44 & 100.00 & 20.00 \\
\hline
\end{tabular}%
}
\end{table*}

Table~\ref{tab:cc_success} highlights four takeaways consistent with the design principles developed in Sections~III--V. First, on the \emph{code-level} Tanner graph, BP+OSD can fail on these \emph{stabilizer-induced} TS families, and the outcome can depend strongly on whether OSD is fed the final-iteration LLRs $L^{(50)}$ or the best-so-far LLRs $L^{\star}$. Second, decoding on the \emph{detector-error} Tanner graph typically improves BP+OSD by making correlations due to \emph{hook errors} explicit, although it does not uniformly dominate in every setting. Third, the proposed decoder achieves $100\%$ success throughout the table because it is designed to avoid these failures, following the principles established in Section~\ref{sec:derived_decoders}. Fourth, \emph{beam search} achieves performance comparable to min-sum+OSD-CS, despite operating on the \emph{phenomenological} Tanner graph (without additional variable nodes representing CNOT faults), consistent with its ability to mitigate failures associated with classical TSs.

We now consider decoding failures in the Tanner graph $\mathscr{G}_\mathrm{z}$ caused by combined TSs.
\[
\big(\mathscr{T}(\mathcal{C}_\mathrm{s}) \times \mathscr{T}(v)\big)\;\cup\;\big(\mathscr{T}(c) \times \mathscr{T}(\mathcal{V}_\mathrm{s})\big)
\]
introduced in Section~\ref{sec:comp_TS}. Unlike the TSs considered previously, Algorithm~\ref{alg:derived_decoders_from_parent_decoders} is not guaranteed (see Section~\ref{sec:comp_TS}) to avoid failures on these combined structures, even when $D_{\mathrm{vv}}(\cdot)$ and $D_{\mathrm{cc}}(\cdot)$ are derived from parent-code decoders that individually correct all error patterns on $\mathscr{T}(\mathcal{V}_\mathrm{s})$ and $\mathscr{T}(\mathcal{C}_\mathrm{s})$.
As discussed in Section~\ref{sec:comp_TS}, the combined TS is still resolved if the underlying parent-code decoder used to derive $D_{\mathrm{vv}}$ or $D_{\mathrm{cc}}$ remains robust on $\mathscr{T}(\mathcal{V}_\mathrm{s})$ or $\mathscr{T}(\mathcal{C}_\mathrm{s})$, respectively, even in the presence of a single erroneous syndrome bit.
For the TSs $\mathscr{T}(\mathcal{C}_\mathrm{s})$ that live in the transpose Tanner graph $\mathscr{G}^\mathsf{T}$, we find empirically that min-sum is already robust to a single syndrome-bit error. We attribute this to the high row rank of the transpose parity-check matrix (for the parent codes underlying the LP constructions studied here) and the large minimum distance of the transpose code. For example, for the $[155,64,20]$ parent code underlying the $[[1054,140,d\leq 20]]$ LP code, the transpose code has only four codewords and minimum distance $62$, and $\mathscr{G}^\mathsf{T}$ does not exhibit small harmful TSs that cause the min-sum decoder to fail on $\mathscr{G}$.
Systematically enumerating TSs in $\mathscr{G}^\mathsf{T}$ is difficult, so we instead test elementary candidates built from a small number of overlapping cycles (which typically have a larger number of unsatisfied checks than comparable TSs on the $\mathscr{G}$ side). We verify convergence by running the min-sum directly on $\mathscr{G}^\mathsf{T}$, including with one flipped syndrome bit, without designing a specialized FAID for the transpose side.
Therefore, as described in Section~\ref{sec:comp_TS}, Algorithm~\ref{alg:derived_decoders_from_parent_decoders} resolves TS of the form $\left(\mathscr{T}(c) \times \mathscr{T}(\mathcal{V}_{\mathrm{s}})\right) \cup \left(\mathscr{T}(v) \times \mathscr{T}(\mathcal{C}_{\mathrm{s}})\right).$
Apart from this, we tested that these composite TSs get resolved in the presence of a one-bit syndrome error when we start with a round of $D_{\mathrm{cc}}(\cdot)$ followed by a round of $D_{\mathrm{vv}}(\cdot)$. Both $D_{\mathrm{vv}}(\cdot)$ and $D_{\mathrm{cc}}(\cdot)$ are derived from min-sum decoders.
This observation would help us minimize the number of decoders while using an ensemble of decoders without compromising the logical error rate, as we will see in Section~\ref{sec:adv_noise_model}.
This robustness is also true for the BP+OSD results on the \emph{detector-error} Tanner graph: the OSD post-processor can tolerate a small number of syndrome-bit errors as the \emph{detector-error} Tanner graph makes the relevant faults explicit.

Motivated by Section~\ref{sec:syndrome_error_inducing_CNOT_faults}, we also tested the convergence of the decoder on the \emph{phenomenological} Tanner graph, which includes nodes for data errors and syndrome-measurement errors but no explicit nodes for CNOT-induced faults. In all of the adversarial settings mentioned above, we flip one quarter of the syndrome bits associated with the TSs and observe that the decoder still converges to the correct estimate.

\begin{remark}
Classical TSs denoted by $\mathscr{T}(\mathcal{C}_{\mathrm{s}})$ in $\mathscr{G}^\mathsf{T}$ also tend to be easier to resolve in the presence of a small number of syndrome-bit errors because they often involve higher check degrees (here, $4$ and $5$) and therefore a larger number of odd-degree checks than TSs denoted by $\mathscr{T}(\mathcal{V}_\mathrm{s})$ in $\mathscr{G}$.
If $\mathscr{T}(\mathcal{C}_\mathrm{s})$ had only a small number of odd-degree checks, then one would likely need to optimize both $D_{\mathrm{vv}}(\cdot)$ and $D_{\mathrm{cc}}(\cdot)$; this is most relevant for higher-distance codes at modest blocklengths (a few hundred qubits).
Finally, the decoder proposed in Section~\ref{sec:derived_decoders} relies on the product structure of HGP/LP codes (disjoint isomorphic copies of the parent Tanner graphs). For code families without such parent-code decomposition (e.g., bivariate-bicycle codes~\cite{bravyi2024high}), the same high-level principles still apply (TS enumeration, targeted TS-resolving decoder design, and fault-node elimination in Section~\ref{sec:syndrome_error_inducing_CNOT_faults}), but a corresponding structural characterization of stabilizer-induced TSs remains an open problem.
\end{remark}

\subsection{Stochastic noise model}
\label{sec:stoch_noise_model}
In this section, we use the \emph{circuit-level} depolarizing noise model of Section~\ref{sec:circuit_level_noise_model}. We perform three rounds of syndrome measurement and assume that the final round is perfect. Faults are sampled via the corresponding \emph{detector--error} matrix (equivalently, the \emph{detector--error} Tanner graph) of the measurement circuit. 
The baseline BP+OSD (min-sum + OSD-0) and the \emph{beam-search} decoder of~\cite{ye2025beam} run on the \emph{detector--error} Tanner graph. We also evaluate two decoders that run on the \emph{phenomenological} Tanner graph (with variable nodes for data-qubit errors and syndrome-bit errors, but without a dedicated variable node for each circuit-fault location): (i) the proposed layered min-sum \emph{ensemble} decoder and (ii) a layered \emph{beam-search} decoder.
In both phenomenological-graph decoders, message passing uses alternating VV- and CC-type variable-node updates to mitigate decoding failures due to \emph{stabilizer-induced} TSs. The layered min-sum ensemble uses two schedules: Schedule~1 augments min-sum with a VV-side FAID stage, while Schedule~2 uses a deeper layered min-sum schedule (no FAIDs). The layered \emph{beam-search} decoder uses the same layered schedule as Schedule~1.
In all cases, the variable-node priors are assigned according to the same \emph{circuit-level} noise model; in particular, when decoding on the \emph{phenomenological} Tanner graph, these priors depend on the measurement circuit and are obtained by marginalizing the circuit-fault probabilities represented in the \emph{detector--error} Tanner graph.
The beam-search decoder running on the \emph{detector--error} Tanner graph does not use post-processing and therefore has a lower decoding cost than BP+OSD. However, unlike the layered decoders on the \emph{phenomenological} Tanner graph, its message-passing step also runs on the \emph{detector--error} Tanner graph and therefore has a higher per-iteration cost.

The two variants in the proposed ensemble share a layered min-sum decoder on the \emph{phenomenological} Tanner graph. Here, \emph{copies} refers to the embedded copies of the \emph{code-level} Tanner graph. In each min-sum iteration, we use the following layered update order:
{\renewcommand{\labelenumi}{(\roman{enumi})}
\begin{enumerate}
\item all variable nodes representing syndrome-bit errors;
\item CC-type variable nodes in the odd copies;
\item VV-type variable nodes in the odd copies;
\item again, all variable nodes representing syndrome-bit errors;
\item CC-type variable nodes in the even copies;
\item VV-type variable nodes in the even copies.
\end{enumerate}
}
We run min-sum for $60$ iterations under this schedule.
This ordering separates message updates between odd and even copies. As discussed in Section~\ref{sec:syndrome_error_inducing_CNOT_faults}, a given syndrome can admit distinct explanations across these copies, and alternating odd/even updates helps avoid the associated failures. Within each copy, alternating CC- and VV-type updates mitigates \emph{stabilizer-induced} TSs (cf. Section~\ref{sec:comp_TS}).
\paragraph{Schedule~1 (FAID-assisted VV stage)} After min-sum, we apply the CC-side flipping (preprocessing) step of Algorithm~\ref{alg:derived_decoders_from_parent_decoders}. We then discard the VV-type portion of the current estimate, recompute the residual syndrome, and run the FAID set on each embedded copy of the parent Tanner graph inside $\mathscr{G}_{\mathrm{z}}$ (all FAIDs per copy, in parallel). For the $[[1054,140,d\leq 20]]$ code, we run $10$ FAIDs in parallel with a maximum of $80$ iterations; for the $[[775,43,d\leq 20]]$ code, we run $2$ FAIDs in parallel with a maximum of $100$ iterations. These FAID sets are those constructed in Section~\ref{sec:adv_noise_model} and collectively correct all adversarial error patterns reported in Table~\ref{tab:cc_success}.
\paragraph{Schedule~2 (deeper layered min-sum, no FAIDs)} The second schedule uses the same layer ordering but performs a small number of sub-iterations within each layer before moving to the next; it does not use specialized parent-code FAIDs.

Compared to Algorithm~\ref{alg:derived_decoders_from_parent_decoders}, we use a fixed min-sum iteration budget and then (for Schedule~1) invoke the VV-side FAID stage. This is motivated by our current FAID library: we did not include a single FAID that corrects all error patterns on a $(7,3)$ TS, and TS-focused FAIDs can fail or diverge when additional errors occur in the TS neighborhood. This limitation is not fundamental, and the correctness arguments of Lemma~\ref{lemma:decoder_resolving_classical_plus_stab_TS} remain unchanged; see Remark~\ref{remark:alternate_dec_startegy}.
More generally, one could design an ensemble in which each member targets a specific TS family, as in Algorithm~\ref{alg:derived_decoders_from_parent_decoders}. Such an approach may require a larger ensemble to achieve comparable logical error rates unless designed carefully; we leave this direction for future work.
We also simulate the \emph{beam-search} decoder using the same layered schedule as Schedule~1 on the \emph{phenomenological} Tanner graph. Since the \emph{beam-search} decoder is robust to some classical TSs inherited from the parent Tanner code, it does not require a specialized mechanism to avoid TS-induced decoding failures.

In Figures~\ref{fig:logical_error_rate_1054} and~\ref{fig:logical_error_rate_775}, we compare the logical error rates of the proposed layered decoders against BP+OSD and the \emph{beam-search} decoder with the flooding schedule. The FAID-assisted layered decoder achieves a comparable logical error rate while offering substantially lower per-iteration decoding complexity, since it operates on the much smaller \emph{phenomenological} Tanner graph and requires no post-processing. Among the tested methods, the \emph{beam-search} decoder on the \emph{phenomenological} Tanner graph with the proposed layered schedule achieves the best logical error rate. More broadly, these results support the design principle that, once decoding failures due to the dominant inherited (parent-code) trapping-set families are mitigated, running an iterative decoder on the \emph{phenomenological} Tanner graph under the proposed layered schedule can achieve competitive logical-error-rate performance without decoding on the expanded \emph{detector--error} Tanner graph.
Determining which decoder is best suited to a given quantum hardware platform requires further study. In particular, it remains to be seen whether \emph{beam-search} can meet hardware-specific latency constraints and whether it retains comparable performance under finite-precision implementations. Finally, it would be interesting to optimize the FAID stage to close the remaining gap to layered \emph{beam-search}, since FAIDs are naturally aligned with finite-precision implementations.

\begin{figure*}
\begin{subfigure}{0.48\textwidth}
    \centering
    \begin{tikzpicture}
\definecolor{mycolor1}{rgb}{0.63529,0.07843,0.18431}%
\definecolor{mycolor2}{rgb}{0.00000,0.44706,0.74118}%
\definecolor{mycolor3}{rgb}{0.00000,0.49804,0.00000}%
\definecolor{mycolor4}{rgb}{0.87059,0.49020,0.00000}%
\definecolor{mycolor5}{rgb}{0.00000,0.44700,0.74100}%
\definecolor{mycolor6}{rgb}{0.74902,0.00000,0.74902}%

\begin{axis}[
scaled x ticks=base 10:2,
xmode=normal,ymode=log,
xmin=0.001, xmax=0.006,width=7.5cm,height=7 cm,
ymin=1e-7, ymax=1, 
every x tick label/.append style={/pgf/number format/fixed},
grid=both,
xlabel= Depolarizing probability, ylabel= Logical error rate,legend pos= north west,legend cell align=left,legend style={nodes={scale=0.5, transform shape}}]

\addplot [color=black,solid,line width=1pt,mark size=1.4pt,mark=square,mark options={solid}]
  table[row sep=crcr]{
  0.006 3.75e-1\\
  0.005 7.17e-2\\
  0.004 5e-3\\
  0.0035 9.766e-4\\
  0.003 1.56e-4\\
  0.0025 5.7e-5\\
  0.002 2.3e-5\\
};
\addlegendentry{\Large{BP+OSD-0 (det.-err.)}};
\addplot [color=mycolor4,solid,line width=1pt,mark size=1.4pt,mark=pentagon,mark options={solid}]
  table[row sep=crcr]{
  0.006 3.90e-1\\
  0.005 7.75e-2\\
  0.004 3.75e-3\\
  0.0035 6.25e-4\\
  0.003 3.79e-5\\
  0.0025 1.163e-5\\
  0.002 2.8e-6\\
};
\addlegendentry{\Large{Beam-search (det.-err.)}};
\addplot [color=blue,solid,line width=1pt,mark size=1.4pt,mark=diamond,mark options={solid}]
  table[row sep=crcr]{
  0.006 1.92e-1\\
  0.005 2.4e-2\\
  0.0045 6.25e-3\\
  0.004 1.25e-3\\
  0.003 1.17e-4\\
  0.0025 2.09e-5\\
  0.002 4.62e-6\\
  0.0015 8.02e-7\\
};
\addlegendentry{\Large{Layered MS ensemble (phen.)}};
\addplot [color=mycolor2,solid,line width=1pt,mark size=1.4pt,mark=star,mark options={solid}]
  table[row sep=crcr]{
  0.006 7.3e-2\\
  0.005 5.8e-3\\
  0.004 1.11e-4\\
  0.003 2.54e-6\\
  0.002 1.46e-7\\
};
\addlegendentry{\Large{Layered beam-search (phen.)}};
\end{axis}
\end{tikzpicture}%
    \subcaption{Logical error rate as a function of the depolarizing noise rate for the $[[1054,140,d\leq 20]]$ LP code.}
    \label{fig:logical_error_rate_1054}
\end{subfigure}
\hfill
    \begin{subfigure}{0.48\textwidth}
    \centering
    \begin{tikzpicture}
\definecolor{mycolor1}{rgb}{0.63529,0.07843,0.18431}%
\definecolor{mycolor2}{rgb}{0.00000,0.44706,0.74118}%
\definecolor{mycolor3}{rgb}{0.00000,0.49804,0.00000}%
\definecolor{mycolor4}{rgb}{0.87059,0.49020,0.00000}%
\definecolor{mycolor5}{rgb}{0.00000,0.44700,0.74100}%
\definecolor{mycolor6}{rgb}{0.74902,0.00000,0.74902}%

\begin{axis}[scaled x ticks=base 10:2,
xmode=normal,ymode=log,
xmin=0.002, xmax=0.007,width=7.5cm,height=7 cm,
ymin=8e-8, ymax=1, 
every x tick label/.append style={/pgf/number format/fixed},
grid=both,
xlabel= Depolarizing probability, ylabel= Logical error rate,legend pos= north west,legend cell align=left,legend style={nodes={scale=0.5, transform shape}}]


\addplot [color=red,solid,line width=1pt,mark size=1.4pt,mark=triangle,mark options={solid}]
  table[row sep=crcr]{
  0.007 1.25e-1\\
  0.006 2.10e-2\\
  0.0055 7.08e-3\\
  0.005 1.9e-3\\
  0.0045 3.55e-4\\
  0.004 8.75e-5\\
  0.0035 2.6e-5\\
  0.003 7.14e-6\\
};
\addlegendentry{\Large{BP+OSD-0 (det.-err.)}};
\addplot [color=mycolor1,solid,line width=1pt,mark size=1.4pt,mark=+,mark options={solid}]
  table[row sep=crcr]{
  0.007 1.98e-1\\
  0.006 3.70e-2\\
  0.0055 1.23e-2\\
  0.005 3.10e-3\\
  0.0045 6.07e-4\\
  0.004 8.88e-5\\
  0.0035 1.297e-5\\
   0.0025 1.23e-6\\
};
\addlegendentry{\Large{Beam-search (det.-err.)}};
\addplot [color=green,solid,line width=1pt,mark size=1.4pt,mark=asterisk,mark options={solid}]
  table[row sep=crcr]{
  0.007 7.90e-2\\
  0.006 1.25e-2\\
  0.005 1.79e-3\\
  0.0045 2.66e-4\\
  0.004 7.5e-5\\
  0.0035 2.58e-5\\
  0.003 7.53e-6\\
  0.002 5.6e-7\\
};
\addlegendentry{\Large{MS ensemble (phen.)}};

\addplot [color=mycolor3,solid,line width=1pt,mark size=1.4pt,mark=pentagon,mark options={solid}]
  table[row sep=crcr]{
   0.007 1.4e-2\\
  0.006 1.4e-3\\
  0.005 4e-5\\
  0.004 7.57e-7\\
  0.003 8.5e-8\\
};
\addlegendentry{\Large{Layered beam-search (phen.)}};
\end{axis}
\end{tikzpicture}%
    \subcaption{Logical error rate as a function of the depolarizing noise rate for the $[[775,43,d\leq 20]]$ code.}
    \label{fig:logical_error_rate_775}
\end{subfigure}
    \caption{Comparison of logical error rates under the \emph{circuit-level} depolarizing noise model. The BP+OSD and \emph{beam-search} baselines operate on the \emph{detector--error} Tanner graph. We also consider two layered decoders that operate on the \emph{phenomenological} Tanner graph (with variable nodes for data-qubit and syndrome-bit errors, but without a dedicated variable node for each circuit-fault location) and use alternating VV/CC variable-node updates to mitigate \emph{stabilizer-induced} trapping-set failures. The first is the proposed layered min-sum ensemble, which uses two schedules: Schedule~1 augments min-sum with a VV-side FAID stage, while Schedule~2 uses a deeper layered min-sum schedule (no FAIDs). The second is a layered \emph{beam-search} decoder; its message-passing uses the same layered schedule as Schedule~1 of the min-sum ensemble. In the legend, \textsf{MS} denotes \emph{min-sum}, and the parenthetical tag (\textsf{det-er} or \textsf{phen}) indicates the Tanner graph on which message passing is performed. In all cases, variable-node priors are assigned from the same \emph{circuit-level} noise model (with \emph{phenomenological}-graph priors obtained by marginalizing circuit-fault probabilities from the \emph{detector--error} graph). We use three rounds of syndrome measurement, with the final round assumed perfect.}
    \label{fig:circuit_level_simulation}
\end{figure*}

Next, we simulate both layered decoders on the $[[1054,140,d \leq 20]]$ LP code for $5\times 10^6$ frames at depolarizing rate $p=0.0015$. We report the mean iteration count over all frames and over the slowest $0.1\%$ of frames. For latency accounting, we count message updates in different layers as separate iterations; this approximates latency when independent updates can be parallelized.
The mean iteration count over all frames indicates the expected impact of decoding latency on quantum-algorithm runtime (e.g., for non-Clifford gates that depend on decoding outcomes). The mean iteration count over the slowest $0.1\%$ of frames indicates whether the decoder can mitigate the syndrome-backlog problem.
We emphasize that these iteration counts are latency indicators rather than hardware-accurate latency estimates, since they do not account for implementation constraints (e.g., routing overheads or the cost of aggregating outputs across the ensemble). Both layered decoders converge on all $5\times 10^6$ frames. The iteration counts are reported in Table~\ref{tab:latency_pariters}. However, on the slowest decodable frames, the layered \emph{beam-search} decoder requires a larger number of iterations under the schedule considered here.
This iteration count can be reduced by using an alternative layered \emph{beam-search} schedule. In particular, for our second layered \emph{beam-search} configuration, we use the schedule of Section~\ref{sec:adv_noise_model} designed to resolve TS-induced failures. In general, these results suggest that an ensemble of decoders (or schedules) tailored to different TS mechanisms may help meet latency requirements.
\begin{table}[t]
\centering
\caption{Iteration counts (latency proxy) for decoding the $[[1054,140,d\leq 20]]$ LP code at depolarizing rate $p=0.0015$, over $5\times10^{6}$ frames (same frame set for every decoder). We report the mean iteration count over all frames, the mean over the slowest $0.1\%$ of frames, and the maximum observed. Iterations are counted by treating message updates in different layers as separate iterations; this approximates latency when independent updates can be parallelized. For ensembles, members are assumed to run concurrently and a frame's iteration count is the minimum over members that return a syndrome-consistent estimate.}
\label{tab:latency_pariters}
\begin{tabular}{|@{}l|c|c|c@{}|}
\hline
Decoder & mean & slowest $0.1\%$ & max \\
\hline
Ensemble 1 (with FAID)                     & 15.6 & \textbf{96}  & \textbf{432} \\
Layered beam search                        & 15.6 & 140          & 9000 \\
Ensemble 2 ($2\times$ layered beam search) & 15.6 & 99           & 732 \\
\hline
\end{tabular}
\end{table}

\section{conclusion}
This work provides guidelines for designing decoders that operate on \emph{phenomenological} Tanner graphs, which do not include nodes for correlated errors. This avoids the additional decoding complexity introduced by such nodes. Although the logical error rates of the layered proposed decoders are better than those of BP+OSD, they can be further improved by optimizing the parent code decoders to resolve larger TSs. It would also be interesting to extend these design approaches from the quantum memory setting to the quantum computing setting.
\begin{appendices}


  \section{Stabilizer-induced TS in $\mathscr{G}^\mathsf{T}$}
  \label{sec:lemma2}

  \begin{lemma}
\label{lemma:stab_induced_TS_containg_classical_TS_CC}
Consider an HGP code constructed from two classical LDPC codes with Tanner graphs $\mathscr{G}_1$ and $\mathscr{G}_2$. Let $\mathscr{G}_{\mathrm{z}}$ denote the Tanner graph of the $Z$-checks. Let $\mathcal{C}_{\mathrm{s}}^1 \subseteq \mathcal{C}_1$ be a subset of check nodes of $\mathscr{G}_1$, and let $\mathscr{T}(\mathcal{C}_{\mathrm{s}}^1)$ be the subgraph of $\mathscr{G}_1$ induced by these check nodes (equivalently, the corresponding trapping-set subgraph in $\mathscr{G}_1^{\mathsf{T}}$). For a variable node $v^2\in\mathcal{V}_2$, let $\mathscr{T}(v^2)$ be the subgraph it induces. Define
\[
\mathscr{T}_{\mathrm{stab}} \triangleq \bigl(\mathscr{T}(\mathcal{C}^1_{\mathrm{s}}) \times \mathscr{T}(v^2)\bigr)\cap \mathscr{G}_{\mathrm{z}}.
\]
Then:
\begin{enumerate}
\item The subgraph $\mathscr{T}_{\mathrm{stab}}$ contains $\lvert\mathcal{N}_{v^2}\rvert$ pairwise isomorphic copies of $\mathscr{T}(\mathcal{C}^1_{\mathrm{s}})$, one for each $c^2_{\mathrm{j}}\in\mathcal{N}_{v^2}$. For a fixed $c^2_{\mathrm{j}}\in\mathcal{N}_{v^2}$, each check node $c^1\in \mathscr{T}(\mathcal{C}^1_{\mathrm{s}})$ corresponds to the CC-type variable node $c^1c^2_{\mathrm{j}}$, and each variable node $v^1$ adjacent to $c^1$ corresponds to the VC-type check node $v^1c^2_{\mathrm{j}}$.

\item These isomorphic copies are pairwise disjoint, i.e.,
\[
\bigl(\mathscr{T}(\mathcal{C}^1_{\mathrm{s}})\times\{c^2_{\mathrm{j}}\}\bigr)\cap\bigl(\mathscr{T}(\mathcal{C}^1_{\mathrm{s}})\times\{c^2_{\mathrm{\ell}'}\}\bigr)=\emptyset
\]
for all distinct $c^2_{\mathrm{j}},c^2_{\mathrm{\ell}'}\in\mathcal{N}_{v^2}$.

\item The subgraph $\mathscr{T}_{\mathrm{stab}}$ is induced by the CV-type stabilizer generators contained in $\mathcal{C}^1_{\mathrm{s}}\times\{v^2\}$.

\item Assume that $\mathscr{T}(\mathcal{C}^1_{\mathrm{s}})$ behaves as a TS in $\mathscr{G}_1^{\mathsf{T}}$. Then there exists a measurement schedule such that a single CNOT failure in the measurement circuit of each CV-type stabilizer generator in $\mathcal{C}^1_{\mathrm{s}}\times\{v^2\}$ can corrupt all variable nodes in $\ell$ of the isomorphic copies of $\mathscr{T}(\mathcal{C}^1_{\mathrm{s}})$ contained in $\mathscr{T}_{\mathrm{stab}}$, for some $\ell\in\{1,2,\dots,\lvert\mathcal{N}_{v^2}\rvert\}$.

\item Let $\mathcal{V}^1_{\mathrm{o}}$ denote the set of variable nodes of $\mathscr{G}_1$ that have odd degree in the subgraph $\mathscr{T}(\mathcal{C}^1_{\mathrm{s}})$. The stabilizer support generated by linear combinations of stabilizer generators in $\mathcal{C}^1_{\mathrm{s}}\times\{v^2\}$ then contains all CC-type variable nodes in $\mathscr{T}_{\mathrm{stab}}$, as well as all VV-type variable nodes in $\mathcal{V}^1_{\mathrm{o}}\times\{v^2\}$.

\item Each VV-type variable node $v^1v^2$ in $\mathscr{T}_{\mathrm{stab}}$ is adjacent to the VC-type check node $v^1c^2$ in every isomorphic copy $\mathscr{T}(\mathcal{C}^1_{\mathrm{s}})\times\{c^2\}$, for all $c^2\in\mathcal{N}_{v^2}$.
\end{enumerate}
\end{lemma}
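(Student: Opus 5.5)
The plan is to mirror the proof of Lemma~\ref{lemma:stab_induced_TS_containg_classical_TS} with the roles of the two factors exchanged. The dictionary is as follows: $\mathscr{T}(\mathcal{C}^1_{\mathrm{s}})$ plays the part of $\mathscr{T}(\mathcal{V}^2_{\mathrm{s}})$, $v^2$ plays the part of $c^1$, CC-type variable nodes replace VV-type ones, and VV-type nodes act as the bridges between copies. Throughout I will use only the neighborhood formulas for the product graph. A $Z$-check $v^1c^2$ has neighbors $(\{v^1\}\times\mathcal{N}_{c^2})\cup(\mathcal{N}_{v^1}\times\{c^2\})$. An $X$-check $c^1v^2$ has neighbors $(\mathcal{N}_{c^1}\times\{v^2\})\cup(\{c^1\}\times\mathcal{N}_{v^2})$.

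For Parts~1, 2 and~6 I fix $c^2_{\mathrm{j}}\in\mathcal{N}_{v^2}$. The map $c^1\mapsto c^1c^2_{\mathrm{j}}$, $v^1\mapsto v^1c^2_{\mathrm{j}}$ sends an edge $c^1$--$v^1$ of $\mathscr{T}(\mathcal{C}^1_{\mathrm{s}})$ to an edge of $\mathscr{G}_{\mathrm{z}}$, because $c^1c^2_{\mathrm{j}}\in\mathcal{N}_{v^1}\times\{c^2_{\mathrm{j}}\}$. Conversely, the CC-type neighbors of $v^1c^2_{\mathrm{j}}$ are exactly $\mathcal{N}_{v^1}\times\{c^2_{\mathrm{j}}\}$. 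This gives $\lvert\mathcal{N}_{v^2}\rvert$ isomorphic copies. They are disjoint because every node of the $\mathrm{j}$-th copy carries second coordinate $c^2_{\mathrm{j}}$. For Part~6, the VV-type node $v^1v^2$ is adjacent to $\{v^1\}\times\mathcal{N}_{v^2}$, and this set contains exactly one VC-type check, $v^1c^2_{\mathrm{j}}$, in each copy.

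For Part~3 I will list the variable nodes of $\mathscr{T}_{\mathrm{stab}}$ in two groups. The CC-type nodes are $\mathcal{C}^1_{\mathrm{s}}\times\mathcal{N}_{v^2}$. The VV-type nodes are $\bigl(\bigcup_{c^1\in\mathcal{C}^1_{\mathrm{s}}}\mathcal{N}_{c^1}\bigr)\times\{v^2\}$. Taking the union over $c^1\in\mathcal{C}^1_{\mathrm{s}}$ of the $X$-check neighborhoods gives exactly this set, by the same rewriting used in~\eqref{eq:VN_check_nodes}. Part~4 follows the construction in Part~4 of Lemma~\ref{lemma:stab_induced_TS_containg_classical_TS}. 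Choose distinct $c^2_1,\dots,c^2_\ell\in\mathcal{N}_{v^2}$. For each generator $c^1v^2$, schedule the CNOTs to $c^1c^2_1,\dots,c^1c^2_\ell$ last, and inject one ancilla $X$ fault just before them. The union over $c^1\in\mathcal{C}^1_{\mathrm{s}}$ of the corrupted qubits is $\mathcal{C}^1_{\mathrm{s}}\times\{c^2_1,\dots,c^2_\ell\}$, which is all the variable nodes of $\ell$ copies.

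Part~5 is the only step that needs a parity argument, so I expect it to be the main point to handle carefully, though it is not deep. The CC-type supports of the generators, $\{c^1\}\times\mathcal{N}_{v^2}$, are pairwise disjoint for distinct $c^1$. Each CC-type node in $\mathscr{T}_{\mathrm{stab}}$ therefore lies in exactly one generator, so it survives in the sum. A VV-type node $v^1v^2$ lies in the support of $c^1v^2$ if and only if $c^1\in\mathcal{N}_{v^1}$. Hence the number of generators containing it equals the degree of $v^1$ in $\mathscr{T}(\mathcal{C}^1_{\mathrm{s}})$, and it survives exactly when $v^1\in\mathcal{V}^1_{\mathrm{o}}$. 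This characterization is the one stated for the sum of all the generators, matching the sense of Part~5 of Lemma~\ref{lemma:stab_induced_TS_containg_classical_TS}.
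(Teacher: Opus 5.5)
Your proposal is correct and follows essentially the same route as the paper. The paper also obtains Parts~1, 2, 3 and~6 by exchanging the roles of $\mathscr{G}_2$ and $\mathscr{G}_1^{\mathsf{T}}$ in Lemma~\ref{lemma:stab_induced_TS_containg_classical_TS}, and you simply write out the neighborhood computations it leaves implicit; your schedule construction for Part~4 and your parity count for Part~5 coincide with the paper's. One minor imprecision: inside $\mathscr{T}_{\mathrm{stab}}$ the CC-type neighbors of $v^1c^2_{\mathrm{j}}$ are $(\mathcal{N}_{v^1}\cap\mathcal{C}^1_{\mathrm{s}})\times\{c^2_{\mathrm{j}}\}$ rather than all of $\mathcal{N}_{v^1}\times\{c^2_{\mathrm{j}}\}$, but this does not affect the isomorphism.
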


\begin{proof}
Parts~1, 2, 3, and~6 follow from the corresponding parts of Lemma~\ref{lemma:stab_induced_TS_containg_classical_TS} by exchanging the roles of $\mathscr{G}_2$ and $\mathscr{G}_1^{\mathsf{T}}$; equivalently, by exchanging the roles of VV-type and CC-type variable nodes and of VC-type checks indexed by (variable, check) and (check, variable) pairs. The arguments carry over verbatim under this exchange.

For Part~4, the construction of Part~4 of Lemma~\ref{lemma:stab_induced_TS_containg_classical_TS} applies with the following modification: for each generator $c^1_{\mathrm{i}'} v^2\in \mathcal{C}^1_{\mathrm{s}}\times\{v^2\}$, the fault is scheduled among the CNOT gates connecting the ancilla to the data qubits in $\{c^1_{\mathrm{i}'}\}\times \mathcal{N}_{v^2}$, i.e., the CC-type portion of $\mathcal{N}_{c^1_{\mathrm{i}'} v^2}$, rather than the VV-type portion used in Lemma~\ref{lemma:stab_induced_TS_containg_classical_TS}. Choosing the schedule so that the $\ell$ gates $\mathrm{CNOT}(\text{ancilla}\rightarrow c^1_{\mathrm{i}'}c^2_{\mathrm{j}_t})$, $t\in[\ell]$, are executed last, a single $X$ fault on the ancilla immediately before these gates corrupts the CC-type nodes $c^1_{\mathrm{i}'}c^2_{\mathrm{j}_1},\dots,c^1_{\mathrm{i}'}c^2_{\mathrm{j}_\ell}$. Repeating this for every $c^1_{\mathrm{i}'}\in\mathcal{C}^1_{\mathrm{s}}$ corrupts all CC-type variable nodes in the $\ell$ isomorphic copies $\mathscr{T}(\mathcal{C}^1_{\mathrm{s}})\times\{c^2_{\mathrm{j}_t}\}$.

For Part~5, a VV-type node $v^1 v^2$ lies in the support of the generator $c^1_{\mathrm{i}'} v^2$ if and only if $c^1_{\mathrm{i}'}\in \mathcal{N}_{v^1}$; the number of generators in $\mathcal{C}^1_{\mathrm{s}}\times\{v^2\}$ containing it therefore equals the degree of $v^1$ in $\mathscr{T}(\mathcal{C}^1_{\mathrm{s}})$, which is odd exactly when $v^1\in\mathcal{V}^1_{\mathrm{o}}$. The CC-type supports $\{c^1_{\mathrm{i}'}\}\times \mathcal{N}_{v^2}$ are pairwise disjoint across distinct $c^1_{\mathrm{i}'}\in\mathcal{C}^1_{\mathrm{s}}$, since the first coordinates differ; hence each CC-type variable node of $\mathscr{T}_{\mathrm{stab}}$ appears in exactly one generator and lies in the stabilizer support.
\end{proof}


\section{FAID}
\label{sec:FAID}
A $b$-bit FAID, denoted by $\mathcal{D}_{\text{FAID}}$, is characterized by four parameters $\left(\mathcal{M},\mathcal{Y}, m_{c \rightarrow v}, m_{v \rightarrow c}\right)$, where
 $\mathcal{M}$ is the alphabet of possible message values,
$\mathcal{Y} = \{C, -C\}$ is the set of possible channel values,
 $m_{v \rightarrow c}$ is the function that determines the message from variable node $v$ to check node $c$ based on the incoming messages at $v$,
$m_{c \rightarrow v}$ is the function that determines the message from check node $c$ to variable node $v$ based on the incoming messages at $c$.
The message alphabet is defined as
\[
\mathcal{M} = \{0, \pm L_1, \pm L_2, \ldots, \pm L_s\},
\]
where $L_i$, for $1 \leq i \leq s$, are non-negative real numbers, and $s = 2^{b-1} - 1$, with the ordering $L_i \leq L_j$ for $i < j$.
For a check node $c$, the outgoing message from $c$ to a variable node $v$ is given by
\[
m_{c \rightarrow v}
= (1 - 2\sigma_c)
\prod_{v' \in \mathcal{N}_c \setminus \{v\}} \operatorname{sgn}(m_{v' \rightarrow c})
\min_{v' \in \mathcal{N}_c \setminus \{v\}} \lvert m_{v' \rightarrow c}\rvert,
\]
where $\mathcal{N}_c$ is the set of variable nodes connected to check node $c$, $\operatorname{sgn}(\cdot)$ is the sign function, and $\sigma_c$ is the syndrome bit corresponding to check node $c$.
For a variable node $v$ of degree $d_v$, the outgoing message from $v$ to a check node $c$ is given by
\[
m_{v \rightarrow c} = Q\left( \sum_{c' \in \mathcal{N}_v \setminus \{c\}} m_{c' \rightarrow v} + w(\mathbf{m}_v) C \right),
\]
where $\mathbf{m}_v$ denotes the vector of extrinsic incoming messages to $v$, with $\lvert\mathbf{m}_v\rvert = d_v - 1$ and $\mathbf{m}_v \in \mathcal{M}^{d_v - 1}$. The function $Q(\cdot)$ is the quantizer defined as
\[
Q(y) =
\begin{cases}
\operatorname{sgn}(y) L_i, & \text{if } T_i \leq |y| < T_{i+1}, \\[3pt]
0, & \text{if } |y| < T_1,
\end{cases}
\]
where $\{T_1, \ldots, T_s, T_{s+1} = \infty\}$ is a set of thresholds with $T_i \in \mathbb{R}^+$ for $1 \leq i \leq s$ and $T_i > T_j$ for any $i > j$.
The coefficient $w(\mathbf{m}_v)$ depends on the incoming messages. When $w(\mathbf{m}_v)$ is a constant, the decoder reduces to the offset min-sum decoder.
At the end of each iteration, the bit estimate associated with each variable node is determined by evaluating the sign of all incoming messages together with the channel value $C$. At the beginning of the decoding process, all messages from variable nodes to check nodes are initialized to $C$.
In this work, we choose $w(\mathbf{m}_v)$ to be constant when $v$ is a CC-type variable node, while it is made dependent on the incoming messages when $v$ is a VV-type variable node.

\end{appendices}
\bibliographystyle{IEEEtran}
\bibliography{ref_fault_tolerant_decoding}
\end{document}